\title{Complex crystallographic reflection 
groups and Seiberg--Witten integrable systems: rank 1 case}
\renewcommand\@date{{
  \vspace{-\baselineskip}
  \large\centering
  \begin{tabular}{@{}c@{}}
    Philip C. Argyres\textsuperscript{$a$} 
  \end{tabular}
  \!,
  \begin{tabular}{@{}c@{}}
    Oleg Chalykh\textsuperscript{$b$} 
  \end{tabular}
  \!,
  \begin{tabular}{@{}c@{}}
    Yongchao L\"u\textsuperscript{$c$} 
  \end{tabular}
  \bigskip

  {\small \textsuperscript{$a$}\,Physics Department, 
	 University of Cincinnati,
	 PO Box 210011, 
	 Cincinnati OH 45221, US} \\
    \normalsize {\small philip.argyres@gmail.com} \par
  {\small \textsuperscript{$b$}\,School of Mathematics, 
	University of Leeds, 
	Leeds, LS2 9JT, UK } \\
    \normalsize {o.chalykh@leeds.ac.uk} \par
 {\small
	 \textsuperscript{$c$}\,School of Physics, Korea Institute for Advanced Study, Seoul 02455, Korea} \\
    \normalsize {yongchaolu@kias.re.kr}
}}
\begin{document}
\maketitle

\begin{abstract}
We consider generalisations of the elliptic Calogero--Moser systems associated to complex crystallographic groups in accordance to \cite{EFMV11ecm}. In our previous work \cite{Argyres:2021iws}, we proposed these systems as candidates for Seiberg--Witten integrable systems of certain SCFTs. Here we examine that proposal for complex crystallographic groups of rank one. Geometrically, this means considering elliptic curves $T^2$ with $\Z_m$-symmetries, $m=2,3,4,6$, and Poisson deformations of the orbifolds $(T^2\times\mathbb{C})/\Z_m$. The $m=2$ case was studied in \cite{Argyres:2021iws}, while $m=3,4,6$ correspond to Seiberg--Witten integrable systems for the rank 1 Minahan--Nemeschansky SCFTs of type $E_{6,7,8}$. This allows us to describe the corresponding elliptic fibrations and the Seiberg--Witten differential in a compact elegant form. This approach also produces quantum spectral curves for these SCFTs, which are given by Fuchsian ODEs with special properties.     
\end{abstract}

\tableofcontents
\section{Introduction}
The present paper is a study of Cherednik algebras and integrable systems associated with rank-one complex crystallographic groups. One of the motivations for this work is to uncover a connection of these integrable systems to quantum field theories. To provide some context, we recall that the study of supersymmetry with eight supercharges has proven to be a fruitful area of research, providing valuable insights into the strong coupling dynamics of quantum field theory. Among the earliest examples of such theories are the Minahan--Nemeschansky theories \cite{Minahan:1996fg, Minahan:1996cj}, which have inspired decades of flourishing development. Despite their lack of conventional Lagrangian descriptions, many interesting observables have been calculated, and their further study is expected to provide new insights. One of the most promising avenues for such study is through the Seiberg--Witten solutions \cite{SW941, SW942}, which exhibit a close relationship to integrable systems \cite{DW95}. 
Seiberg--Witten integrability, as it has become known, has provided profound insight into the strong coupling dynamics of field theories. However, there is no systematic method to recognise a Seiberg--Witten integrable system behind a given quantum field theory. Moreover, sometimes there is more than one possibility, typically due to the proposed integrable models being geometrically equivalent despite their rather different origins. 

In our previous paper \cite{Argyres:2021iws}, we proposed to study the so-called {\it crystallographic elliptic Calogero--Moser systems} as a potential source of Seiberg--Witten integrable systems. Recall that, according to \cite{EFMV11ecm}, each complex crystallographic reflection group has an associated family of elliptic Calogero--Moser systems constructed using the theory of {\it elliptic Cherednik algebras}. We expect that many of these systems can be viewed as Seiberg--Witten integrable systems for certain superconformal field theories (SCFTs). In \cite{Argyres:2021iws}, this was partly confirmed  for the Inozemtsev system, a version of the elliptic
Calogero--Moser system associated with the non-reduced root system of type $BC_n$, or with the complex crystallographic group $[G(1,2,n)]^{\alpha}_1$ in Popov's classification \cite[Theorem 2.6.1]{Popov82}.


In the present paper we turn our attention to groups (and SCFTs) of {\it rank one}. Geometrically, this means considering elliptic curves $\E$ with $\Z_m$-symmetries, $m=2,3,4,6$, and Poisson deformations of the orbifolds $T^*\E/\Z_m$. Such deformations can be constructed using the classical version of the Cherednik algebra, $H_c$, and its spherical subalgebra, $B_c$. Here $c$ denotes the set of deformation parameters whose number equals $4$, $6$, $7$ and $8$ for $m=2,3,4$ and $6$, respectively. The integrable system appears as the (Poisson-commutative) algebra of the global sections of $B_c$. In our case, it is described by a single hamiltonian function $h$ on $T^*\E$, which can also be viewed as a function on $T^*\E/\Z_m$ due to its symmetry. 

Our main result describes the fibration on $T^*\E/\Z_m$ by the level sets of the hamiltonian $h$ as an {\it elliptic pencil} of a special form. As a corollary, the varieties $\mathrm{Spec}\,B_c$ ({\it Calogero--Moser spaces}) can be identified in this case as certain {\it rational elliptic surfaces}. In fact, these surfaces also admit an interpretation as certain Dolbeault moduli spaces, $\mathcal M_{Dol}$, of (weakly) parabolic Higgs bundles over a punctured sphere. The corresponding de Rahm spaces, $\mathcal M_{dR}$, were considered by P.~Boalch in \cite{boalch09} in relation to affine Dynkin quivers, Kronheimer's ALE spaces, and difference Painleve equations. However, their connection to Cherednik algebras and elliptic integrable systems is new. Additionally, we study a natural quantisation of the constructed elliptic fibrations: they are given by pencils of Fuchsian ODEs, for which we give a full characterization. This provides vivid examples of (not yet fully established in the parabolic case) mirror symmetry for Hitchin fibrations and quantisation of their spectral curves (see Section \ref{sec6} below).    

Turning to the supersymmetric quantum field theories, one would like to associate the geometric picture above with some rank 1 4d $\mathcal{N}=2$ SCFTs. For the rich landscape and classification of such theories, see \cite{ArgyresRankOne1, ArgyresRankOne2, ArgyresRankOne3, ArgyresRankOne4}. As it turns out, our $m=2$ case corresponds to the well-known ${\rm SU}(2)$ superconformal gauge theory with 4 flavors which possesses a $D_4$ flavor symmetry \cite{DW95}. Meanwhile, the $m=3, 4, 6$ cases correspond to Minahan-Nemeschansky (MN) theories \cite{Minahan:1996fg, Minahan:1996cj} that possess the exceptional flavor symmetry of types $E_6$, $E_7$, $E_8$ (note that these theories lack conventional Lagrangian descriptions).   
This new perspective allows us to obtain the elliptic fibration and the Seiberg--Witten differential for the rank 1 MN theories in a systematic fashion. Notably, the mass parameters of those rank 1 SCFTs receive a transparent geometric interpretation, at the same time being directly linked to the deformation parameters of the elliptic Cherednik algebras. As a byproduct, we find a natural quantisation of the Seiberg--Witten curves of the rank 1 Minahan--Nemeschansky theories. It is worth mentioning that the elliptic fibrations in Weierstrass form have already been established for these theories \cite{SW942, Minahan:1996fg, Minahan:1996cj}, but they seem less suitable for quantisation. 

Last but not least, our results pave the way to constructing Seiberg--Witten curves of higher-rank Minahan--Nemeschansky theories, which will be done in the subsequent work \cite{ACL2}.   

The organisation of the paper is as follows. In Section \ref{sec2} we consider Cherednik algebras for elliptic curves with symmetries, and describe the hamiltonians of the corresponding integrable systems. Section \ref{sec3} describes the classical dynamics of these integrable systems in geometric terms, using their Lax form. We observe a peculiar {\em duality} of the Lax matrix, which leads to a compact formula for its spectral curve. This produces elliptic fibrations and Seiberg--Witten differentials for the appropriate SCFTs. In Section \ref{pencils}, we interpret these fibrations in terms of elliptic pencils of a particular form. In Section \ref{sec5} we introduce quantum spectral curves by passing from the classical to the quantum hamiltonian. We further characterise the resulting families of Fuchsian ODEs. Finally, in Section \ref{sec6} we discuss other contexts in which the related structures appear. The paper finishes with four appendices giving further details, explicit formulas, and additional properties of the classical and quantum spectral curves obtained in Sections \ref{pencils} and \ref{sec5}.

\section{Complex crystallographic groups and Cherednik algebras in rank one}\label{sec2}

If $X$ is a complex manifold with an action of a finite group $W$, that action naturally extends to the sheaf $\mathcal D[X]$ of regular differential operators on $X$. Therefore, one may consider the crossed
product $\mathcal{D}[X]\rtimes W$ and the subsheaf of $W$-invariants
$\mathcal{D}[X]^W$, both regarded as sheaves of algebras over $X/W$. In such situation,  Etingof constructs in \cite{Etingof04} the global Cherednik algebra $H_{c}(X, W)$ and its spherical subalgebra $B_{c}(X, W)$ as certain (in fact, universal) deformations of $\mathcal D[X]\rtimes W$ and $\mathcal D[X]^W$, respectively. A special case of interest is when $X={\mathbb C}^n/\L$ is a complex torus, and $W\subset\mathrm{GL}_n({\mathbb C})$ is a complex reflection group preserving the lattice $\L$, thus acting on $X$. The semi-direct product $G=\L\rtimes W$ is an example of a {\it complex crystallographic reflection group}, namely a discrete
subgroup of the complex affine group of $\mathbb{C}^n$ 
generated by (affine) reflections, 
whose linear part
$W$ is a finite complex reflection group and whose translation lattice $\Gamma$ is of full rank $2n$ in $\mathbb{C}^n$;
all such groups are classified in \cite{Popov82}\footnote{In general, $G$ may not be a semidirect product of $W$ and $\Gamma$. Also, one may consider a more general situation than in \cite{Popov82} so that $W$ is generated by reflections but $G$ may not be (that is how, for example, extended affine Weyl groups arise).}. In this case $H_{c}(X, W)$ is referred to as the {\it elliptic Cherednik algebra}. The significance of $X$ being a complex torus is that in that case, according to \cite{EFMV11ecm}, the spherical subalgebra $B_{c}(X, W)$ has a commutative subalgebra of dimension $n$. This defines a family of integrable systems on $X$, called {\it crystallographic elliptic Calogero--Moser systems}. In this paper we look into the simplest cases, namely, complex crystallographic  groups and elliptic Cherednik algebras of rank 1, corresponding to elliptic curves with symmetries.  

\subsection{Elliptic curves with symmetries}

Let $\E={\mathbb C}/\L$ with $\L=2\omega_1\Z+2\omega_2\Z$ be an elliptic curve. We use a $q\in{\mathbb C}$ to represent a point on both ${\mathbb C}$ and $\E$. We follow the standard convention assuming $\mathrm{Im}\,({\omega_2}/{\omega_1})>0$. In general, the only holomorphic automorphisms (symmetries) of $\E$ are: (1) translations, or (2) translations followed by the $\Z_2$-symmetry $q\mapsto -q$. Elliptic curves with larger automorphism groups arise when $\L$ has a rotational symmetry of order $m>2$. As is well known, the only possibilities are $m=3,4,6$, and the groups $G=\L\rtimes \Z_m$ with $m=2,3,4,6$ (plus the trivial case $G=\L$) exhaust all complex crystallographic groups of rank one. Let us choose $\omega_{1,2}$ so that 
\begin{equation}
\omega_2/\omega_1=e^{\pi i/3} \quad\text{(when $m=3,6$)}\quad\text{or}\quad \omega_2/\omega_1=e^{\pi i/2}\quad\text{(when $m=4$)}\,.    
\end{equation}
The first case is known as \emph{equianharmonic} (with the hexagonal lattice $\L$); the $m=4$ case is called \emph{lemniscatic} (with the square lattice $\L$). In each case, we think of $\E$ as having an extra symmetry
\begin{equation}\label{sm}
 s: q\mapsto \omega q\,,\qquad \omega=e^{2\pi i/m}\,,  
\end{equation}
and write $\Z_m:=\{1,s,\dots s^{m-1}\}$ for the multiplicative group of the $m$-th roots of unity, acting on $\E$. The generic $\E$ corresponds to $m=2$. The point $q=0$ is always fixed by $\Z_m$; other fixed points and their stabiliser groups are given in the table \ref{table0}.  These are also shown in Fig.~\ref{fig:orbifoldtori}.
\begin{table}[H]
\centering
\begin{align*}
\renewcommand\arraystretch{1.2}
\begin{array}{c|c|c }
\hline
   \qquad m \qquad   & \qquad \text{fixed points $\ne 0$} \qquad &  \qquad \text{stabilisers}  \qquad \\
    \hline
  2 & \omega_{1,2,3} & \Z_2 \\
\hline
    3 & \eta_{1,2} & \Z_3 \\
\hline
 4 & \begin{array}{c}
          \omega_{1,2} \\
     \omega_3
 \end{array}
    & \begin{array}{c}
        \Z_2 \\
        \Z_4
    \end{array} \\
\hline
    6 & \begin{array}{c}
          \omega_{1,2,3} \\
     \eta_{1,2}
 \end{array} & \begin{array}{c}
          \Z_2 \\
     \Z_3
 \end{array} \\
      \hline
\end{array}
\end{align*}
 \caption{Non-zero fixed points and their stabiliser groups. Here we use the notation $\omega_3=\omega_1+\omega_2$, $\eta_{1}=2\omega_3/3$, $\eta_2=2\eta_1$.}
\label{table0}
\end{table}

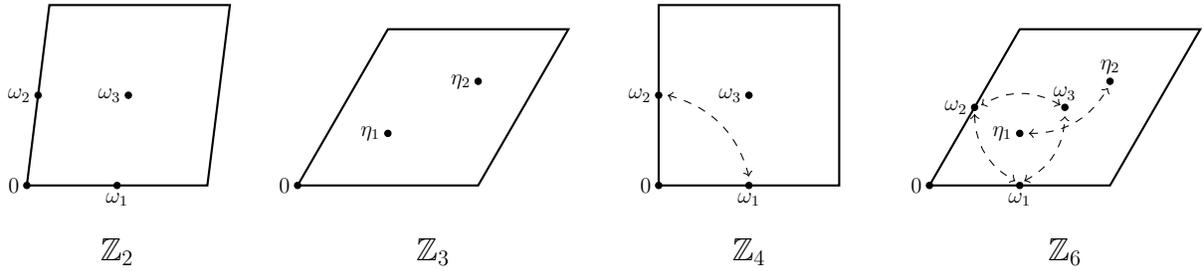
\begin{figure}[h]
\centering
\begin{tikzpicture}[scale=0.6, every node/.style={scale=0.7}]

\begin{scope}[xshift=-7 cm]
\draw[thick] (0,0)--(4,0)--(4.5,4)--(0.5,4)--(0,0);
\filldraw[black] (0,0) circle (2pt) node[anchor=east]{$0$};
\filldraw[black] (2.25, 2) circle (2pt) node[anchor=east]{$\omega_3$};
\filldraw[black] (2,0) circle (2pt) node[anchor=north]{$\omega_1$};
\filldraw[black] (0.25,2) circle (2pt) node[anchor=east]{$\omega_2$};
\node[anchor=north] at (2,-1){ \Large $\mathbb Z_2$};
\end{scope}

\begin{scope}[xshift=-1 cm]
\draw[thick] (0,0)--(4,0)--(6, {2*sqrt(3)})--(2, {2*sqrt(3)})--(0,0);
\filldraw[black] (0,0) circle (2pt) node[anchor=east]{$0$};
\filldraw[black] (2, {2*sqrt(3)/3}) circle (2pt) node[anchor=east]{$\eta_1$};
\filldraw[black] (4, {4*sqrt(3)/3}) circle (2pt) node[anchor=east]{$\eta_2$};
\node[anchor=north] at (3,-1){ \Large $\mathbb Z_3$};
\end{scope}

\begin{scope}[xshift=7 cm]
\draw[thick] (0,0)--(4,0)--(4,4)--(0,4)--(0,0);
\filldraw[black] (0,0) circle (2pt) node[anchor=east]{$0$};
\filldraw[black] (2, 2) circle (2pt) node[anchor=east]{$\omega_3$};
\filldraw[black] (2,0) circle (2pt) node[anchor=north]{$\omega_1$};
\filldraw[black] (0,2) circle (2pt) node[anchor=east]{$\omega_2$};
\draw[<->,dashed] (0.2,2)  to [bend left] (2, 0.2);
\node[anchor=north] at (2,-1){ \Large $\mathbb Z_4$};
\end{scope}

\begin{scope}[xshift=13 cm]
\draw[thick] (0,0)--(4,0)--(6, {2*sqrt(3)})--(2, {2*sqrt(3)})--(0,0);
\filldraw[black] (0,0) circle (2pt) node[anchor=east]{$0$};
\filldraw[black] (2, {2*sqrt(3)/3}) circle (2pt) node[anchor=east]{$\eta_1$};
\filldraw[black] (4, {4*sqrt(3)/3}) circle (2pt) node[anchor=south]{$\eta_2$};
\filldraw[black] (2,0) circle (2pt) node[anchor=north]{$\omega_1$};
\filldraw[black] (3, {sqrt(3)}) circle (2pt) node[anchor=south]{$\omega_3$};
\filldraw[black] (1, {sqrt(3)}) circle (2pt) node[anchor=east]{$\omega_2$};
\draw[<->,dashed] (2.2, {2*sqrt(3)/3})  to [bend right] (4-0.05, {4*sqrt(3)/3 -0.15});
\draw[<->,dashed] (2.1,0.1)  to [bend right] (3, {sqrt(3)-0.2});
\draw[<->,dashed] (3-0.1, {sqrt(3)+0.05})  to [bend right] (1.15, {sqrt(3)+0.05}) ;
\draw[<->,dashed] (1, {sqrt(3)-0.15})   to [bend right] (2-0.1,0.1);
\node[anchor=north] at (3,-1){ \Large $\mathbb Z_6$};
\end{scope}

\end{tikzpicture}
\caption{Fundamental domain, group action, and fixed points.}
\label{fig:orbifoldtori}
\end{figure}

Let $\sigma(q)=\sigma(q|2\omega_1, 2\omega_2)$, $\zeta(q)=\zeta(q|2\omega_1, 2\omega_2)$, $\wp(q)=\wp(q|2\omega_1, 2\omega_2)$ be the Weierstrass $\sigma$, $\zeta$ and $\wp$ functions associated with $\L$ and $\E$. Since $\omega \L=\L$, we have
\begin{equation}
    \sigma(\omega q)=\omega\sigma(q)\,,\quad \zeta(\omega q)=\omega^{-1}\zeta(q)\,,\quad \wp(\omega q)=\omega^{-2}\wp(q)\,. 
\end{equation}
The general Weierstrass form of $\E$, $\wp'^2=4\wp^3-g_2\wp-g_3$ (for $m=2$), specialises to  
\begin{align}
\wp'^2&=4\wp^3-g_3\qquad(m=3,\,6)\,,
\\
\wp'^2&=4\wp^3-g_2\wp\qquad(m=4)\,.
\end{align}

The quotient $\E/\Z_m$ is isomorphic to $\P^1$, which allows us to view $\E$ as an $m$-fold branched covering of $\P^1$. Namely, 
\begin{equation}\label{mcov}
  v^m=P_m(u)\,,  
\end{equation}
where the elliptic functions $u,v$ and the corresponding polynomial $P_m$ are summarised in table \ref{table1}.

\begin{table}[h]
\centering
\begin{align*}
\renewcommand\arraystretch{1.6}
\begin{array}{c | c c c c}
\hline
     m   & 2 &  3 & 4 & 6 \\
    \hline
    \omega_2/\omega_1 & \text{any} & e^{{ \pi i }/{3}} & i &  e^{{ \pi i }/{3}} \\
    \hline
    u & \wp(q) & \frac12\wp'(q) & \wp^2(q) & \wp^3(q) \\
    \hline
    v & \frac12\wp'(q) &  \wp(q) & \frac12\wp'(q) & \frac12\wp(q) \wp'(q) \\
    \hline
    P_m(u) & u^3 - \frac14g_2 u - \frac14g_3 & u^2 + \frac14g_3 & u( u-\frac14g_{2})^{2} & u^{2}(u-\frac14g_{3})^{3} \\
    \hline
\end{array}
\end{align*}
 \caption{The elliptic functions $u,v$ and the corresponding polynomial $P_m$.}
\label{table1}
\end{table}

Thus, with appropriate $e_1,e_2,e_3$,
\begin{align}\label{cci2}
v^2&=(u-e_1)(u-e_2)(u-e_3) &(m=2)&\,,\\\label{cci3}
v^3&=(u-e_1)(u-e_2) &(m=3)&\,,\\\label{cci4}
v^4&=(u-e_1)(u-e_2)^2 &(m=4)&\,,\\\label{cci6}
v^6&=(u-e_1)^2(u-e_2)^3 &(m=6)&\,.
\end{align}
These curves have three or four branch points (one of these being $e_0=\infty$), and they can be recognised as the only {\it genus-one cyclic coverings} of $\P^1$.

The action \eqref{sm} naturally extends to a symplectic $\Z_m$-action on $T^*\E$, so we may consider the orbifold $T^*\E/\Z_m$. A deformation of this orbifold can be constructed using {\it elliptic Cherednik algebras}; this deformation plays the central role in our work.

\subsection{Rational Cherednik algebra for $W=\Z_m$}
\label{rca1} 
Write $\mathcal D$ for the ring of differential operators in $q\in{\mathbb C}$, with meromorphic coefficients. The group $\Z_m=\{1, s, \dots, s^{m-1}\}$ acts by 
\begin{equation}
s:\,q\mapsto\omega q\,,\quad \omega=e^{2\pi i/m}\,,
\end{equation} 
and this action naturally extends to $\mathcal D$. We then have the following relations in the crossed product $\mathcal D\rtimes \Z_m$: 
\begin{equation}
    sq=\omega qs\,,\quad s\frac{d}{dq}=\omega^{-1}\frac{d}{dq}s\,,\quad \frac{d}{dq}q=q\frac{d}{dq}+1\,.
\end{equation}
We may think of the elements in $\mathcal D\rtimes \Z_m$ as acting on functions of $q$.    

For $c:=(c_1,\dots, c_{m-1})\in{\mathbb C}^{m-1}$ and $\hbar\ne 0$, define the Cherednik algebra 
$H_{\hbar, c}=H_{\hbar,c}(\Z_m)$ as the subalgebra in $\mathcal D\rtimes \Z_m$ generated by $\C[q]\rtimes \Z_m$ and the {\it Dunkl operator},
\begin{equation}\label{du}
    y=\hbar\frac{d}{dq}-\sum_{l=1}^{m-1}\frac{c_l}{q}\,s^l\,.
\end{equation}
One can view such $y$ as a deformation of $\hbar\frac{d}{dq}$, with the deformation paramters  $c_l\in\mathbb{C}$ associated with the elements in $\mathbb{Z}_m\setminus\{1\}$.

Alternatively, $H_{\hbar, c}$ can be described as the associative algebra generated by $q,y,s$ subject to
\begin{equation}\label{rel}
    s^m=1\,,\quad sq=\omega qs\,,\quad sy=\omega^{-1}ys\,,
    \quad yq-qy=\hbar +\sum_{l=1}^{m-1}(1-\omega^l)c_ls^l\,. 
\end{equation}
  
The {\it spherical subalgebra} of $H_{\hbar, c}$ is defined as $e\,H_{\hbar, c}\,e$, where
\begin{equation}
  e=\frac1m\sum_{j=0}^{m-1}{s^j}\,.  
\end{equation}
Each element of the spherical subalgebra, when acting on $\Z_m$-invariant polynomials ${\mathbb C}[q^m]$, reduces to a $\Z_m$-invariant differential operator; this defines a faithful representation 
\begin{equation}\label{res}
\theta\,:\  e\,H_{\hbar, c}\,e\ \longrightarrow \ \mathcal D^{\Z_m}\,,  
\end{equation} 
called the {\it Dunkl representation}.
Introduce
\begin{equation}\label{mui}
    \mu_i=\sum_{l=1}^{m-1}\omega^{-il}c_l\,,\quad i=0,\dots, m-1\,,
\end{equation}
\begin{equation}\label{abc}
u=q^m\,,\quad
    v=\hbar q\frac{d}{dq}\,,\quad w=\left(\hbar \frac{d}{dq}-\frac{\mu_{m-1}}{q}\right)\dots \left(\hbar \frac{d}{dq}-\frac{\mu_{0}}{q}\right) \,.
\end{equation}
Then one finds that under $\theta$, the elements $e\,q^m\,e$, $e\,y^m\,e$, and $e\,qy\,e$ are mapped to $u, w$, and $v-\mu_0$, respectively.
We denote by $B_{\hbar, c}$ the spherical subalgebra in the Dunkl representation:
\begin{equation}
B_{\hbar, c}=\theta(e\,H_{\hbar, c}\,e)\,,\qquad B_{\hbar, c}\subset \mathcal D^{\Z_m}\,.
\end{equation}
It is easy to show that, as an abstract algebra, $B_{\hbar, c}$ is generated by $u, v, w$ subject to the relations 
\begin{equation}\label{relsph}
    [v,u]=\hbar m u\,,\quad [w,v]=\hbar m w\,,\quad uw=P_\hbar(v)\,,\quad wu=P_\hbar(v+\hbar m)\,,
\end{equation}
where $P_\hbar(t)=\prod_{j=0}^{m-1}\left(t-\mu_j-j\hbar\right)$. 

Setting $\hbar=0$ in \eqref{rel} and \eqref{relsph}, we obtain the {\it classical analogues}, $H_{0,c}$ and $B_{0,c}$. 
The algebra $B_{0,c}$ can therefore be described abstractly as the quotient
\begin{equation}
B_{0,c}={{\mathbb C}[\bar u,\bar v,\bar w]}/{\{\bar u\bar w=P(\bar v)\}}\,,\qquad P(t)=\prod_{i=0}^{m-1}(t-\mu_i)\,.     
\end{equation}
When all $\mu_i=0$, this is the algebra of functions on the {\it cyclic singularity} ${\mathbb C}^2/\Z_m$. Therefore, the family $B_{0,c}$ describes a Poisson deformation of the cyclic singularity, with the Poisson bracket induced from \eqref{relsph},  
\begin{equation}\label{relsphc}
    \{\bar v, \bar u\}=\bar u\,,\quad \{\bar w,\bar v\}= \bar w\,,\quad \{\bar w, \bar u\}=P'(\bar v)\,.
\end{equation}
We remark that the algebras $H_{0,c}$ and $B_{0,c}$ can also be constructed similarly to $H_{\hbar,c}$ and $B_{\hbar,c}$, replacing the ring $\mathcal D$ by the commutative ring ${\mathbb C}(q)[p]$, where $p$ replaces $\pp:=\hbar\frac{d}{dq}$. The classical analogues of the Dunkl operator \eqref{du} and of the generators \eqref{abc} are:
\begin{equation}\label{duc}
    y^c=p-\sum_{l=1}^{m-1}\frac{c_l}{q}\,s^l\,,\qquad
 \bar u=q^m \,,\quad
    \bar v=qp\,,\quad  \bar w=\left(p-\frac{\mu_{m-1}}{q}\right)\dots \left(p-\frac{\mu_{0}}{q}\right)\,.  
\end{equation}

\subsection{Elliptic Cherednik algebra of rank one}\label{Ellversion}

We now proceed to define the elliptic version of $H_{\hbar, c}$, following the general framework of \cite{Etingof04}. Let $\E=\C/\L$ be an elliptic curve with the symmetry group $\Z_m$, $m\in\{2,3,4,6\}$. The elliptic Cherednik algebra $H_{\hbar, c}(\E)$ depends on a set of parameters chosen as follows. To every $x_i\in\E$ and $l=1,\dots ,m-1$ such that $s^l(x_i)=x_i$, we assign a parameter $c_l(x_i)$, with an additional requirement that $c_l(x_i)=c_l(x_j)$ whenever $x_j=wx_i$ for some $w\in W$. From Fig.~\ref{fig:orbifoldtori} we observe that this amounts to $4$ parameters in the $m=2$ case, and $6$, $7$ or $8$ parameters when $m=3$, $4$ or $6$, respectively. We write $c=(c_l(x_i))$ for the set of parameters. It will be convenient to extend the set $c$ by setting $c_l(x_i)=0$ if $s^l(x_i)\neq x_i$. Later it will also be convenient to use the following combinations:
\begin{equation}\label{emu}
    \mu_j(x_i)=\sum_{l=1}^{m-1}\omega^{-jl}c_l(x_i)\,,\quad j=0,\dots, m-1\,.
\end{equation}
Note that for any fixed point $x_i$, the sum of $\mu_j(x_i)$ is zero, and if the stabiliser of $x_i$ is a proper subgroup $\Z_{m_i}\subset \Z_m$ then the set of $\mu_j(x_i)$ has repetitions.

Let $\C(\E)$ denote the field of meromorphic functions on $\E$ (i.e. elliptic functions in $q$), and $\mathcal D_\E$ the ring of differential operators on $\E$, with elliptic coefficients. 
Similarly to the rational case, we form the cross-product $\mathcal D_\E\rtimes\Z_m$.
To define $H_{\hbar, c}(\E)$ as a sheaf of algebras over $\E/\Z_m$, we need to describe its sections over an arbitrary $\Z_m$-invariant open chart $U\subset \E$. Write $\mathcal O_U\subset \C(\E)$ for the ring of functions regular on $U$.
We define the {\it algebra of sections} of $H_{\hbar, c}(\E)$ over $U$ to be  
the subalgebra $H_{\hbar, c}(\E, U)\subset \mathcal D_\E\rtimes \Z_m$ generated by $\mathcal O_U\rtimes \Z_m$ and 
an element $y$ of the form
\begin{equation}\label{gdu}
    y=\hbar\frac{d}{d q}-\sum_{l=1}^{m-1}b_l(q)s^l\,,
\end{equation}
where $b_l(q)$ are allowed simple poles at the fixed points $x_i\in U$, with
\begin{equation}\label{resdu}
 \mathrm{res}_{q=x_i}b_l=c_l(x_i)\quad\text{for all $x_i\in U$.}
\end{equation}
Informally, these conditions mean that near $q=x_i$ such $y$ should look like the rational Dunkl operator \eqref{du}. Note that while there may be several such elements $y$, the difference of any two of them belongs to $\mathcal O_U\rtimes\Z_m$, so our definition of $H_{\hbar, c}(\E, U)$ is unambiguous.  

This defines the sheaf $H_{\hbar, c}(\E)$ of {\it elliptic Cherednik algebras} on $\E$. The sheaf of {\it spherical subalgebras} $eH_{\hbar, c}(\E)e$ is obtained by replacing local sections $a\in \mathcal D_\E\rtimes\Z_m$ by $eae$. Again, these local sections can be realised as differential operators using the map \eqref{res}. This produces a sheaf of algebras $B_{\hbar, c}(\E):=\theta \left(eH_{\hbar, c}(\E)e\right) \subset\mathcal D_\E^{\Z_m}$.

\bigskip

The {\it classical version} $H_{0,c}(\E)$ of the elliptic Cherednik algebra is obtained in a similar fashion. Namely, the classical counterpart of the ring of differential operators is the commutative ring $\C(\E)[p]:=\C(\E)\otimes\C[p]$, with the $\Z_m$-action extended by $sp=\omega^{-1}ps$. The algebra of sections $H_{0,c}(U)$ over any open $\Z_m$-invariant chart $U\subset \E$ is defined as the subalgebra of $\C(\E)[p]\rtimes \Z_m$, generated by $\mathcal O_U\rtimes \Z_m$ and an element $y^c$ of the form
\begin{equation*}
    y^c=p-\sum_{l=1}^{m-1}b_l(q)s^l\,,
\end{equation*}
where $b_l(q)$ satisfy the same conditions \eqref{resdu}. The sheaves of classical spherical subalgebras $eH_{0, c}(\E)e$ and $B_{0, c}(\E)$ are defined in a similar way. 
\medskip
\begin{remark}
When $c=0$, $H_{\hbar, 0}=\mathcal D[\E]\rtimes \Z_m$ and $B_{\hbar, 0}=\mathcal D[\E]^{\Z_m}$, where $\mathcal{D}[\E]$ denotes the sheaf of regular differential operators on $\E$. The classical analogue of $\mathcal D[\E]$ is the sheaf $\mathcal O(T^*\E)$, so we get $H_{0,0}=\mathcal O(T^*\E)\rtimes \Z_m$ and $B_{0,0}=\mathcal O(T^*\E)^{\Z_m}$. The latter sheaf can be identified with $\mathcal O(T^*\E/\Z_m)$, hence, the sheaf $B_{0,c}$ describes a deformation of the orbifold $T^*\E/\Z_m$. As we will see below, these deformations can be described geometrically as certain {\it rational elliptic surfaces}.    
\end{remark}

\subsection{Case of a punctured elliptic curve}

For $U=\E_0:=\E\setminus\{0\}$ the algebra $H_{\hbar, c}(\E_0)$ admits a simple description. Denote by $f(x,z)$ the following elliptic function:
\begin{equation}
    f(x,z)=\zeta(x-z)-\zeta(x)+\zeta(z)=\frac{1}{2}\frac{\wp'(x)+\wp'(z)}{\wp(x)-\wp(z)}\,.
\end{equation}
Note that $f(\omega x, \omega z)=\omega^{-1}f(x,z)$ for $\omega=e^{2\pi i/m}$.

Write $\mathcal S\subset \{\omega_{1,2,3}, \eta_{1,2}\}$ for the set of nonzero fixed points for $\Z_m$, and consider
\begin{equation}
    y=\pp-\sum_{l=1}^{m-1}\sum_{x_i\in\mathcal S}c_l(x_i)f(q, x_i)s^l\,,\qquad \pp=\hbar\frac{d}{d q}\,.
\end{equation}
It is clear that $y$ belongs to $H_{\hbar, c}(\E_0)$. 
By the $\Z_m$-invariance of the couplings $c_l(x_i)$, 
\begin{equation}\label{rele1}
    sy=\omega^{-1}ys\,.
\end{equation}
Next, we have $\mathcal O_{\E_0}=\C[\wp, \wp']/\{\wp'^2=4\wp^3-g_2\wp-g_3\}$ and the crossed product $\mathcal O_{\E_0}\rtimes \Z_m$, with the relations 
\begin{equation}
    s\wp=\omega^{-2}\wp s\,,\quad s\wp'=\omega^{-3}\wp's\,.
\end{equation}
By definition, $H_{\hbar, c}(\E_0)$ is generated by $\mathcal O_{\E_0}\rtimes \Z_m$ and $y$. For any $g\in\mathcal O_{\E_0}$,
\begin{equation}\label{rele2}
    yg- gy= \hbar \frac{dg}{dq} - \sum_{l=1}^{m-1}\sum_{x_i\in\mathcal S}c_l(x_i)f(q, x_i)(g(\omega^{l}q)-g(q)) s^l\,.
\end{equation}
Note that whenever $x_i\in\mathcal S$ is fixed by $s^l$, the function $g(\omega^{l}q)-g(q)$ vanishes at $q=x_i$, and if $x_i$ is not fixed by $s^l$, then $c_l(x_i)=0$.
Therefore, all the terms in the right-hand side of \eqref{rele2} are regular away from zero, hence, belong to $\mathcal O_{\E_0}\rtimes \Z_m$. It is easy to show that as an abstract algebra, $H_{\hbar, c}(\E_0)$ is generated by $\mathcal O_{\E_0}\rtimes \Z_m$ and $y$, subject to the relations \eqref{rele1}, \eqref{rele2}. 

The algebra $H_{\hbar, c}(\E_0)$ admits a basis formed by the elements
\begin{equation}
    y^js^l\,,\quad \wp^{(i)}(q)y^js^l\,,\qquad 0\le l \le m-1, \quad i,j\ge 0\,,
\end{equation}
among which we have $\Z_m$-invariant elements 
\begin{equation}
    y^j\quad \text{with}\quad j\in m\Z\,,\quad\text{and}\quad 
    \wp^{(i)}(q)y^j\quad \text{with}\quad  i+j+2\in m\Z\,.
\end{equation}
Applying the homomorphism \eqref{res}, we obtain a basis of $B_{\hbar, c}(\E_0)$ of the following form:
\begin{align}\label{wj}
w_j:=\left(\pp -f_{j-1}\right)\dots \left(\pp - f_0\right)\,,&\qquad j\in m\Z\,,
  \\\label{vij}
v^{(i)}_{j}:=
\wp^{(i)}(q)\left(\pp -f_{j-1}\right)\dots \left(\pp - f_0\right)\,,&\qquad i+j+2\in m\Z\,,
\end{align}
where the coefficients $f_k=f_k(q)$ are given by 
\begin{equation}\label{fk}
    f_k=\sum_{l=1}^{m-1}\sum_{x_i\in\mathcal S}c_l(x_i)f(q, x_i)\omega^{-kl}=\sum_{x_i\in\mathcal S}\mu_l(x_i)f(q, x_i)\,. 
\end{equation}

\begin{remark}
The classical algebras $H_{0,c}(\E_0)$ and $B_{0,c}(\E_0)$ are described similarly, by replacing $\pp=\hbar\frac{d}{dq}$ with the classical momentum, $p$ (and by setting $\hbar=0$ in \eqref{rele2}).
\end{remark}

\begin{remark}
    More generally, for any finite $\Z_m$-invariant set $\mathcal Z\subset \E$, consider $U=\E\setminus\mathcal Z$. In that case, the algebras $H_{\hbar, c}(U)$ and $B_{\hbar, c}(U)$ admit a similar description, with the operator $y$ modified as follows:
    \begin{equation}
    y=\hbar\frac{d}{d q}-\sum_{l=1}^{m-1}\sum_{x_i\in U}\sum_{z_j\in\mathcal Z}\frac{1}{|\mathcal Z|} c_l(x_i)f(q-z_j, x_i-z_j)s^l\,.
\end{equation}
The previous case corresponds to $\mathcal Z=\{0\}$.
\end{remark}

\subsection{Hamiltonians}

According to \cite{EFMV11ecm}, the hamiltonians $\hh_1, \dots, \hh_n$ of the elliptic crystallographic Calogero--Moser system for a group $G=\Gamma\rtimes W$ of rank $n$ have the form  
\begin{equation}
\hh_i=f_i(\pp)+\dots\,,\qquad \pp=\left(\hbar\frac{\partial}{\partial q_1}\,,\dots, \hbar\frac{\partial}{\partial q_n}\right)\,,    
\end{equation}
with their leading symbols $f_i$ generating the ring of polynomial $W$-invariants, and with the dots representing terms of smaller order in $\pp$. A similar result holds in the classical case. The connection between these hamiltonians and the elliptic Cherednik algebra is as follows.

\begin{theorem}[\cite{EFMV11ecm}]\label{propham}
The hamiltonians $\hh_i$ represent global sections of the sheaf $B_{\hbar, c}(\C^n/\Gamma, W)$ of spherical Cherednik algebras. Furthermore, they generate the full algebra of global sections, that is, any global section of $B_{\hbar, c}(\C^n/\Gamma, W)$ is a polynomial in $\hh_1, \dots, \hh_n$.
\end{theorem}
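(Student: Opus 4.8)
The plan is to handle both assertions with a single tool: the filtration of $\mathcal D[\C^n/\Gamma]^W$ by order of differential operator, together with the principal-symbol map into $\mathcal O(T^*X)^W$, where $X=\C^n/\Gamma$. Since $B_{\hbar,c}$ is a deformation of $\mathcal D[X]^W$ (cf. the Remark at the end of Section~\ref{rca1}), its associated graded is $\mathrm{gr}\,B_{\hbar,c}=\mathcal O(T^*X)^W$, so the symbol of a global section is a global \emph{regular} $W$-invariant function on $T^*X$ that is polynomial in the momenta. Write $f_1,\dots,f_n$ for the basic $W$-invariants; by the Chevalley--Shephard--Todd theorem $\C[\pp]^W=\C[f_1,\dots,f_n]$ is a free polynomial ring, graded by weighted degree in $\pp$. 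I will use that each $\hh_i$ has principal symbol $f_i(\pp)$ and that $X$ is a \emph{compact} complex torus.

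First I would realise the $\hh_i$ via the Dunkl representation. Let $y_1,\dots,y_n$ be the commuting, $W$-equivariant Dunkl operators and set $\hh_i:=\theta\!\left(e\,f_i(y_1,\dots,y_n)\,e\right)$. Over the locus where $W$ acts freely the $y_j$ are regular, so each $\hh_i$ is manifestly a $W$-invariant differential operator there with principal symbol $f_i(\pp)$. What must be checked is that $\hh_i$ extends to a section of $B_{\hbar,c}$ across the branch locus (the images of the mirrors of $W$). This is a local assertion, and by the slice theorem the transverse model at a generic point of a mirror is the rank-one Cherednik algebra of the cyclic stabiliser $\Z_{m_i}$; the required statement is then exactly the rank-one computation of Section~\ref{sec2}, where the symmetric combinations \eqref{wj}--\eqref{vij} are shown to lie in $B_{\hbar,c}$. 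Gluing the local sections over $X/W$ yields the global $\hh_i$.

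For the generation statement the decisive input is compactness. Let $D$ be a global section of order $k$; its principal symbol lies in $H^0(X,\mathrm{Sym}^k TX)^W$. Since $X=\C^n/\Gamma$ has trivial tangent bundle and $H^0(X,\mathcal O_X)=\C$,
\[
H^0(X,\mathrm{Sym}^k TX)=H^0(X,\mathcal O_X)\otimes\mathrm{Sym}^k(\C^n)=\mathrm{Sym}^k(\C^n),
\]
the constant-coefficient degree-$k$ polynomials in $\pp$. Taking $W$-invariants, $\mathrm{symb}(D)\in(\mathrm{Sym}^k\C^n)^W\subset\C[\pp]^W$, so $\mathrm{symb}(D)=Q(f_1,\dots,f_n)$ for some weighted-homogeneous $Q$. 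As $\hh_i$ has symbol $f_i$, the section $Q(\hh_1,\dots,\hh_n)$ has the same symbol, whence $D-Q(\hh_1,\dots,\hh_n)$ has order $<k$. Induction on $k$ — with base case $k=0$ furnished by $H^0(X/W,\mathcal O)=H^0(X,\mathcal O_X)^W=\C$ — expresses $D$ as a polynomial in the $\hh_i$; that this representation is well defined and that the $\hh_i$ pairwise commute is guaranteed by the commutativity of the $y_j$ in the EFMV construction.

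I expect the main obstacle to be the local regularity in the second paragraph: verifying that, after the symmetrisation $e(\cdot)e$ and the restriction $\theta$, the apparent poles of the Dunkl operators along the mirrors genuinely cancel so that $\hh_i$ descends to a section of the deformed sheaf. Once that is in hand everything else is formal — compactness of the torus forces the symbols of global sections into $\C[f_1,\dots,f_n]$, and Chevalley--Shephard--Todd together with the symbol induction does the rest. A minor point to record is that $\mathrm{gr}$ of the algebra of global sections injects into $H^0(X/W,\mathrm{gr}\,B_{\hbar,c})\subseteq H^0(T^*X/W,\mathcal O)$, which is what licenses reading off principal symbols as invariant functions on $T^*X$.
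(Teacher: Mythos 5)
This theorem is quoted from \cite{EFMV11ecm}; the present paper gives no proof of it and in fact remarks immediately afterwards that the construction of the $\hh_i$ there is ``fairly involved'', so your argument has to stand on its own. Its second half does: the generation statement via $\mathrm{gr}$ of the algebra of global sections injecting into $H^0(X,\mathrm{Sym}^\bullet TX)^W=(\mathrm{Sym}^\bullet\C^n)^W$ (triviality of $TX$ plus compactness of the torus), Chevalley--Shephard--Todd, and downward induction on the order with base case $H^0(X,\mathcal O_X)^W=\C$ is correct and is essentially the expected argument.

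The gap is in the first half, the existence of the $\hh_i$. Setting $\hh_i=\theta\!\left(e\,f_i(y_1,\dots,y_n)\,e\right)$ works verbatim for the \emph{rational} Cherednik algebra, but in the crystallographic case the elliptic Dunkl operators of \cite{EM08edo} are \emph{not} sections of $H_{\hbar,c}(X,W)$: their coefficients $v_l(q,\alpha)$ are not elliptic in $q$ (they are meromorphic sections of the nontrivial line bundles $\mathcal L_{\Omega_{-l}\alpha}$, cf.\ Remark \ref{pic}), they depend on an auxiliary spectral variable $\alpha\in X^\vee$ with poles at the fixed points, and they are $W$-equivariant only up to moving $\alpha$, as in \eqref{proc}: $s\,y(\alpha)=\omega^{-1}y(\omega^{-1}\alpha)\,s$. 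Consequently $e\,f_i(y(\alpha))\,e$ is an $\alpha$-dependent operator acting on sections of $\mathcal L_\alpha$ rather than a $W$-invariant differential operator on $X$, so it is not ``manifestly'' a section of $B_{\hbar,c}$ away from the mirrors; the obstacle is not merely the cancellation of poles along the mirrors that you flag. The construction in \cite{EFMV11ecm} has to extract the $\hh_i$ from the expansion of $f_i(y(\alpha))$ at $\alpha=0$, where $\mathcal L_\alpha$ degenerates to the trivial bundle and $y(\alpha)$ itself has a pole --- this is exactly the involved step the paper sidesteps in rank one by constructing $\hh$ explicitly and matching it, near $q=0$, with the rational spherical subalgebra via \eqref{pr}--\eqref{ai}. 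Without some version of that expansion (or an explicit construction as in Section \ref{sec2}), your first paragraph does not produce the global sections on which the rest of the proof depends.
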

The construction of $\hh_i$ given in \cite{EFMV11ecm} is fairly involved, and no explicit expression for $\hh_i$ is known in general. In the rank one case, however, the situation is simpler. In this case, we have a single hamiltonian of the form 
\begin{equation}
 \hh=\pp^m+\dots\,,\qquad \pp=\hbar\frac{d}{dq}\,,   
\end{equation}
which can be described as follows. 
\begin{prop}
    The algebra of global sections of the sheaf of spherical subalgebras $B_{\hbar, c}(\E)$ is generated by a single element of the form 
\begin{equation}
    \label{hamex}
    \hh=w_m+\alpha_2v^{(0)}_{m-2}+\alpha_3v^{(1)}_{m-3}+\dots + \alpha_mv^{(m-2)}_{0}\,.
\end{equation}
Here $w_m$ and $v^{(i)}_{j}$ are the elements defined in \eqref{wj}, \eqref{vij}, and $\alpha_i$ are suitable constant coefficients.
\end{prop}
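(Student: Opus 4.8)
\emph{Strategy.} The plan is to combine the abstract generation statement of Theorem \ref{propham} with a direct analysis of the local behaviour at the single fixed point $q=0$. First I would record that the symbol of any global section of $B_{\hbar, c}(\E)$ is a global $\Z_m$-invariant function on $T^*\E$, polynomial in the fibre; since $T^*\E\cong\E\times\C$ and $\Gamma(\E,\mathcal O)=\C$, such symbols lie in $\C[\pp^m]$. Hence every global section has order in $m\Z$: an order $d$ with $0<d<m$ would force the leading symbol to be a constant, contradicting $d\ge 1$. So the minimal positive order is $m$, and by Theorem \ref{propham} the algebra of global sections equals $\C[\hh]$ for a single $\hh=\pp^m+\dots$ of order $m$ with leading symbol $\pp^m$. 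This already settles generation, and uniqueness up to a scalar and an additive constant; it remains to put $\hh$ into the form \eqref{hamex}.

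\emph{Pinning down the form.} Restricting $\hh$ to $\E_0=\E\setminus\{0\}$, I would expand it in the basis \eqref{wj}, \eqref{vij} of $B_{\hbar, c}(\E_0)$, keeping only the elements of order $\le m$ in $\pp$: these are $w_0=1$, $w_m$, and the $v^{(i)}_j$ with $i+j+2\in m\Z$ and $j\le m$. Matching the leading symbol $\pp^m$ forces the coefficient of $w_m$ to be $1$ and kills every order-$m$ element $v^{(i)}_m$ (whose symbol $\wp^{(i)}(q)\pp^m$ cannot be cancelled). The key remaining point is that a global section must, near $q=0$, lie in the local rational spherical subalgebra of Section \ref{rca1} attached to the origin, i.e.\ be a polynomial in $u=q^m$, $v=\hbar q\frac{d}{dq}$ and $w=\prod_j(\pp-\mu_j(0)/q)$; consequently the coefficient of $\pp^s$ in such an operator has a pole at $0$ of order at most $m-s$. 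Now the top ($\pp$-order $j$) coefficient of $v^{(i)}_j$ is $\wp^{(i)}(q)$, a pole of order $i+2=(i+j+2)-j$; for the families with $i+j+2\ge 2m$ this exceeds $m-j$, and cannot be cancelled by $w_m$ (pole order $\le m-j$ at $\pp^j$) nor by lower families (strictly smaller poles at each $\pp$-order). A downward induction on the order in $\pp$, ordering within each order by pole order at $0$, therefore forces all coefficients of the families $i+j+2\ge 2m$ to vanish. What survives is exactly $w_m$ together with the family $v^{(i)}_{m-2-i}$, $i=0,\dots,m-2$, plus an irrelevant constant $w_0$, giving \eqref{hamex}.

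\emph{Existence of the coefficients, and the main obstacle.} The coefficients $\alpha_2,\dots,\alpha_m$ are then fixed by requiring that $\hh$ match, near $q=0$, the origin model built from the \emph{free} parameters $c_l(0)$, i.e.\ that $\hh-\prod_j(\pp-\mu_j(0)/q)$ be regular of order $<m$. I would Laurent-expand at $q=0$ using $f(q,x_i)=-q^{-1}-\wp(x_i)q+\dots$ (note the missing constant term); since $\sum_j\mu_j(x_i)=0$ for each fixed point, $w_m$ agrees to leading order with $\prod_j(\pp-\tilde\mu_j/q)$, where $\tilde\mu_j:=-\sum_{x_i\in\mathcal S}\mu_j(x_i)$, and the mismatch between $\tilde\mu_j$ and $\mu_j(0)$ — which is precisely where the dependence on $c_l(0)$ enters — is what the $v^{(i)}_{m-2-i}$ must absorb. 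Because the top coefficient of $v^{(i)}_{m-2-i}$ is exactly $\wp^{(i)}(q)\sim\mathrm{const}/q^{\,i+2}$ while its remaining coefficients affect only strictly lower $\pp$-orders, the parameter $\alpha_{i+2}$ can be chosen to cancel the order-$(i+2)$ pole surviving at $\pp$-order $m-2-i$; this yields a triangular system with nonzero diagonal, solved for $i=0,1,\dots,m-2$. The \textbf{main obstacle} is that naive regularity at $0$ imposes far more conditions (one for every subleading pole at every $\pp$-order) than the $m-1$ available parameters $\alpha_i$: the resolution is that the gap structure of the relevant expansions (no constant term in $f$, and the sparse expansions of $\wp$ and its derivatives) renders almost all of these vacuous, while the residual consistency of the remaining, over-determined-looking system is guaranteed \emph{a priori} by the existence of $\hh$ from Theorem \ref{propham}. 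The delicate bookkeeping of the pole structure at the origin is thus the heart of the argument; the generation statement and the shape \eqref{hamex} follow as above.
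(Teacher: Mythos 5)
Your proof is correct and follows essentially the same route as the paper: restrict to $\E_0$, expand in the basis $w_j$, $v^{(i)}_j$, use membership in the local rational spherical subalgebra at $q=0$ to bound the pole order of the coefficient of $\pp^s$ by $m-s$ (excluding all but the family $i+j=m-2$), and then fix $\alpha_2,\dots,\alpha_m$ by matching principal parts at the origin, exactly as in the paper's relation \eqref{ai}. The only additions are your explicit argument that the leading symbol lies in $\C[\pp^m]$ and the downward-induction bookkeeping for killing the families with $i+j+2\ge 2m$, both of which the paper leaves implicit.
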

To prove this, we notice that any global section $\hh$ restricts onto $\E_0=\E\setminus \{0\}$, and so it must be a combinations of elements $w_i$, $v^{(i)}_j$. For $\hh$ to have degree $m$ in $\pp$, it must be obtained from $w_m$ by adding a finite linear combination of $v^{(i)}_j$ with $0\le j \le m-1$. On the other hand, near $q=0$ each global section must belong to the (completion) of the local rational spherical Cherednik subalgebra, generated by $\C[[u]]$, $v$ and $w$ as given in \eqref{abc}. Since $u,v$ are regular at $q=0$, we must have
\begin{equation}\label{pr}
    \hh=(\pp-\mu_{m-1}q^{-1})\dots (\pp-\mu_{0}q^{-1})+\mathrm{regular\  terms}\,,\qquad \mu_j=\mu_j(0).
\end{equation}
Furthermore, each of $w_j, v^{(i)}_j$ near $q=0$ has the following principal part:
\begin{align}
w_j&\sim (\pp+\mmu_{j-1}q^{-1})\dots (\pp+\mmu_{0}q^{-1})\,,
\\
 v^{(i)}_j\,&\sim\, (-1)^i(i+1)!q^{-i-2}
 (\pp+\mmu_{j-1}q^{-1})\dots (\pp+\mmu_{0}q^{-1})\,,\qquad \mmu_l:=\sum_{x_i\in \mathcal S}\mu_l(x_i)\,.
\end{align}
 Comparing this with the previous formula, we conclude that the only allowed terms in $\hh$ are $v^{(i)}_j$ with $j=0,\dots, m-2$ and $i+j=m-2$, thus proving \eqref{hamex}. 

 Now, we may compare the principal parts in \eqref{hamex} and \eqref{pr} to get the relation 
 \begin{multline}
 (\pp-\mu_{m-1}q^{-1})\dots (\pp-\mu_{0}q^{-1})=
 (\pp+\mmu_{m-1}q^{-1})\dots (\pp+\mmu_{0}q^{-1})\\\label{ai}
 +\sum_{i=2}^{m}(-1)^i(i-1)!\alpha_{i} q^{-i}
 (\pp+\mmu_{m-i-1}q^{-1})\dots (\pp+\mmu_{0}q^{-1})\,.
 \end{multline}
This completely determines the coefficients $\alpha_2, \dots, \alpha_m$ entering \eqref{hamex} in terms of 
$\mu_j=\mu_j(0)$ and $\mmu_j=\sum_{x_i\in\mathcal S}\mu_j(x_i)$, i.e. in terms of the parameters of the elliptic Cherednik algebra. In fact, we can trade the parameters $\mu_j(0)$ for $\alpha_j$, that is, regard \eqref{hamex} as depending on $c_l(x_i)_{x_i\in\mathcal S}$ and $\alpha_2, \dots, \alpha_m$. 
 
The classical hamiltonian is described by the same formulas, with $\pp=\hbar\frac{d}{dq}$ replaced by the classical momentum $p$ everywhere.
\begin{prop}
    The algebra of global sections of the sheaf of spherical subalgebras $B_{0, c}(\E)$ is generated by a single element of the form 
\begin{equation}
    \label{hamexc}
    h=w_m+\alpha_2v^{(0)}_{m-2}+\alpha_3v^{(1)}_{m-3}+\dots + \alpha_mv^{(m-2)}_{0}\,.
\end{equation}
Here $w_m$ and $v^{(i)}_{j}$ are the classical analogues of elements \eqref{wj}, \eqref{vij}, 
\begin{align}\label{wjc}
w_j:=\left(p -f_{j-1}\right)\dots \left(p - f_0\right)\,,&\qquad j\in m\Z\,,
  \\\label{vijc}
v^{(i)}_{j}:=
\wp^{(i)}(q)\left(p -f_{j-1}\right)\dots \left(p - f_0\right)\,,&\qquad i+j+2\in m\Z\,,
\end{align}
$f_k$ are given by \eqref{fk}, and $\alpha_i$ are suitable constant coefficients. 
\end{prop}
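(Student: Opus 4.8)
The plan is to run the classical argument in exact parallel with the quantum one just given, substituting the commuting momentum $p$ for $\pp=\hbar\frac{d}{dq}$ throughout. The assertion that the algebra of global sections of $B_{0,c}(\E)$ is singly generated is inherited from the classical analogue of Theorem \ref{propham}: in rank one there is a single hamiltonian $h=p^m+\dots$, and every global section is a polynomial in it. What remains is to identify the explicit shape of this degree-$m$ generator, and this is where the argument proper takes place. First I would restrict an arbitrary global section $h$ to the punctured curve $\E_0=\E\setminus\{0\}$. Over $\E_0$ the classical spherical algebra $B_{0,c}(\E_0)$ carries the explicit basis formed by the classical elements $w_j$ and $v^{(i)}_j$ of \eqref{wjc}, \eqref{vijc} (the commutative counterparts of \eqref{wj}, \eqref{vij}), so the restriction of $h$ is a finite $\C$-linear combination of these. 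Normalising $h$ to have degree exactly $m$ in $p$ with unit leading coefficient then forces $h=w_m+\sum\beta_{ij}v^{(i)}_j$, the sum running over admissible pairs with $0\le j\le m-1$ (and, up to an additive constant, over $w_0=1$).

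Next comes the decisive local step at the fixed point $q=0$. Here the sheaf $B_{0,c}(\E)$ is locally modelled on the rational classical spherical Cherednik algebra $B_{0,c}$, generated by $\bar u=q^m$, $\bar v=qp$ and $\bar w$ of \eqref{duc}; a global section must lie in (the completion of) this local algebra near $q=0$. Since $\bar u$ and $\bar v$ are regular there, every $q^{-1}$-pole of $h$ must originate from $\bar w$, whence the principal part of $h$ is $\prod_{j=0}^{m-1}(p-\mu_j(0)q^{-1})$, the classical analogue of \eqref{pr}.

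Matching the two descriptions is then a finite computation. Using $f(q,x_i)\sim -q^{-1}$ and $\wp^{(i)}(q)\sim(-1)^i(i+1)!\,q^{-i-2}$ as $q\to0$, the classical basis elements have principal parts
\[ w_m\sim\prod_{k=0}^{m-1}(p+\mmu_k q^{-1}),\qquad v^{(i)}_j\sim(-1)^i(i+1)!\,q^{-i-2}\prod_{k=0}^{j-1}(p+\mmu_k q^{-1}), \]
so $v^{(i)}_j$ carries a pole of order $q^{-(i+j+2)}$ while having degree $j$ in $p$. Since $h$ has degree $m$ in $p$, in the local model it can involve $\bar w$ (itself of degree $m$ in $p$ and pole order $q^{-m}$) at most linearly; hence the pole order of $h$ at $q=0$ cannot exceed $q^{-m}$. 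Among the admissible $v^{(i)}_j$ of a fixed $p$-degree $j$ the pole orders $q^{-(i+j+2)}$, with $i+j+2$ a positive multiple of $m$, are mutually distinct, so no cancellation among them is possible and every over-singular term must drop out. This leaves precisely the pairs with $i+j+2=m$, i.e. $(i,j)=(m-2-j,j)$ for $j=0,\dots,m-2$, reproducing \eqref{hamexc}; comparing the surviving $q^{-m}$ principal parts then yields the classical counterpart of the identity \eqref{ai} and fixes $\alpha_2,\dots,\alpha_m$ in terms of the $\mu_j(0)$ and $\mmu_j=\sum_{x_i\in\mathcal S}\mu_j(x_i)$.

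I expect the only genuinely delicate point to be the local identification at $q=0$: that global regularity of a section of $B_{0,c}(\E)$ is equivalent to membership in the local rational model generated by $\bar u,\bar v,\bar w$. Granting this (which is the classical shadow of Etingof's local structure of the sheaf, reflected in the identification $B_{0,0}=\mathcal O(T^*\E/\Z_m)$ noted above), the remaining bookkeeping of principal parts is routine and, being purely commutative, is in fact simpler than in the quantum case: there are no ordering or commutator corrections, so the matching of leading singularities and the determination of the $\alpha_i$ are entirely transparent.
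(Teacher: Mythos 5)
Your proposal follows essentially the same route as the paper: restrict to $\E_0$ to write $h$ in the basis $w_j$, $v^{(i)}_j$, use the degree-$m$ normalisation, invoke the local rational model at $q=0$ (where $\bar u,\bar v$ are regular so the principal part is $\prod_j(p-\mu_j(0)q^{-1})$), and match principal parts to isolate the terms with $i+j=m-2$ and determine the $\alpha_i$ via the classical limit of \eqref{ai}. The extra bookkeeping you supply on pole orders merely makes explicit the comparison the paper performs, so the argument is correct and not materially different.
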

The relation for determining $\alpha_i$ is obtained by taking the classical limit of \eqref{ai}: 
\begin{multline}
 (p-\mu_{m-1}q^{-1})\dots (p-\mu_{0}q^{-1})=
 (p+\mmu_{m-1}q^{-1})\dots (p+\mmu_{0}q^{-1})\\\label{aic}
 +\sum_{i=2}^{m}(-1)^i(i-1)!\alpha_{i} q^{-i}
 (p+\mmu_{m-i-1}q^{-1})\dots (p+\mmu_{0}q^{-1})\,.
 \end{multline}

\subsection{Explicit formulas}
\label{ellipticform}

Let us write the hamiltonians explicitly, based on the formula \eqref{hamex}. We use the notation $\omega_{1,2,3}$ and $\eta_{1,2}$ for the fixed points, as in Fig.~\ref{table0}. The parameters $\alpha_i$ below are determined through \eqref{ai}.

\paragraph{Case $m=2$:}
In this case, $\tau=\omega_2/\omega_1$ is arbitrary. We denote $g_i:=c_1(\omega_i)$, $i=0\dots 3$; then $\mu_{0}(\omega_i)=-\mu_1(\omega_i)=g_i$.
The hamiltonian  has the form
\begin{equation}\label{hh2}
    \hh=(\pp+f_0)(\pp-f_0)+\alpha_2\wp(q)\,,\qquad f_0=
    \sum_{i=1}^3\frac{g_i\wp'(q)}{2(\wp(q)-e_i)}\,.
\end{equation}
This gives, up to an additive constant,
\begin{equation}
    \hh=\pp\,^2-\sum_{i=0}^3 g_i(g_i-\hbar)\wp(q-\omega_i)\,.
\end{equation}

\paragraph{Case $m=3$:}
 
In this case $\omega_2/\omega_1 = \exp(\pi i/3)$, and we have the parameters $\mu_j(\eta_i)$, $i, j=0,1,2$, where we put $\eta_0=0$ for convenience. 
The hamiltonian  has the form
\begin{align}\label{hh3}
    \hh&=(\pp-f_2)(\pp-f_1)(\pp-f_0)+\alpha_2\wp(q)(\pp-f_0)+\alpha_3\wp'(q)\,,
    \\
    f_j&=
    \sum_{i=1,2}\frac{\mu_j(\eta_i)(\wp'(q)+\wp'(\eta_i))}{2\wp(q)}\,,\qquad j=0,1,2\,.
\end{align}

\paragraph{Case $m=4$:}

In this case $\omega_2/\omega_1 = \exp(\pi i/2)$, and we put $\omega_0=0$ for convenience. We have the parameters $\mu_j(\omega_i)$, $i=0,3$, and $\mu_j(\omega_{1})=\mu_j(\omega_2)$, with $j=0,1,2,3$, and with $\mu_j(\omega_{1,2})=\mu_{j+2}(\omega_{1,2})$. Recall that $\sum_j\mu_j(x_i)=0$ for each fixed point $x_i$.
The hamiltonian has the form
\begin{align} \label{hh4}
   \hh= & (\hat{p} -f_3) (\hat{p} -f_2) (\hat{p} -f_1) (\hat{p} -f_0) \notag \\
     + &  \alpha_2 \wp(q) (\hat{p} -f_1) (\hat{p} -f_0) 
      \notag \\
     + & \alpha_3 \wp'(q) (\hat{p} -f_0) \notag \\
     + &  \alpha_4 \wp''(q)\,,
\end{align}
with 
\begin{align}
    f_j =  4\mu_j(\omega_{1,2})\frac{\wp(q)^2}{ \wp'(q)}  + \frac{1}{2} \mu_j(\omega_3) \frac{\wp'(q)}{ \wp(q)}\,. 
\end{align}

\paragraph{Case $m = 6$:}

In this case $\omega_2/\omega_1=e^{\pi i/3}$, and we have six parameters $\mu_j(0)$, $j=0,\dots, 5$, three parameters 
$\mu_j(\eta_{1})=\mu_j(\eta_2)$, $j=0,1,2$, and two parameters $\mu_j(\omega_{1})=\mu_j(\omega_2)=\mu_j(\omega_{3})$, $j=0,1$. Recall that $\sum_j\mu_j(x_i)=0$ for each fixed point. Also, we extend $\mu_j(x_i)$ by $\mu_j(\eta_{1,2})=\mu_{j+3}(\eta_{1,2})$ and $\mu_j(\omega_{1,2,3})=\mu_{j+2}(\omega_{1,2,3})$.
The hamiltonian has the form
\begin{align} \label{hh6}
  \hh= & (\hat{p} -f_5)(\hat{p} -f_4)(\hat{p} -f_3) (\hat{p} -f_2) (\hat{p} -f_1) (\hat{p} -f_0) \notag \\
     + &  \alpha_2 \wp(q) (\hat{p} -f_3) (\hat{p} -f_2) (\hat{p} -f_1) (\hat{p} -f_0) \notag \\
     + &  \alpha_3 \wp'(q) (\hat{p} -f_2) (\hat{p} -f_1) (\hat{p} -f_0) \notag \\
     + &  \alpha_4 \wp''(q)  (\hat{p} -f_1) (\hat{p} -f_0) \notag \\
     + &  \alpha_5 \wp^{(3)}(q)  (\hat{p} -f_0) \notag \\
     + &  \alpha_6 \wp^{(4)}(q)\,.
\end{align}
Here
the coefficients $f_j$ are found to be 
\begin{align}
    f_j =  \mu_j(\omega_{1,2,3}) \frac{6\wp(q)^2}{ \wp'(q)}
    + \mu_j(\eta_{1,2}) \frac{\wp'(q)}{ \wp(q)}. 
\end{align}

\medskip 
In all cases, the expression for $\hh$ can be expanded into the form
    $\hh=\pp^{\,m}+A_2\,\pp^{\,m-2}+\dots +A_m$\,,
see Appendix \ref{a:elliptic}.

\subsection{Rational form} \label{rf}
We can further convert $\hh$ into a rational form by using the $\Z_m$-invariant coordinate $x=u(q)$ in accordance with Table \ref{table1}. Then 
\begin{equation}
    \frac{d}{dq}=w\frac{d}{dx}\,,\quad\text{with }\ w:=\frac{du}{dq}\,.
\end{equation}
By straightforward manipulations, the expression \eqref{hamex} after rearranging leads to 
\begin{equation}
    w^{-m}\hh=D_{m-1}\dots D_0+\sum_{j=2}^m A_jD_{m-j-1}\dots D_0\,,
\end{equation} 
where 
\begin{equation}
    D_j:=w^{-j-1}(\pp-f_j)w^j=\hbar\frac{d}{dx}-\frac{f_j}{w}+j\hbar\frac{w'}{w^2}\,, \qquad A_j:=\alpha_j\frac{\wp^{(j-2)}(q)}{w^j}\,.
\end{equation}
Let us write explicit expressions for each of $m=2,3,4,6$.

\paragraph{Case $m=2$:}

In this case, $x=\wp(q)$ and $w=\wp'(q)$, with $w^2=4(x-e_1)(x-e_2)(x-e_3)$, $e_i=\wp(\omega_i)$. From \eqref{hh2}, we get
\begin{equation}
    w^{-2}\hh=\left(\hbar\frac{d}{dx}+\sum_{i=1,2,3}\frac{g_i+\hbar}{2(x-e_i)}\right)
     \left(\hbar\frac{d}{dx}-\sum_{i=1,2,3}\frac{g_i}{2(x-e_i)}\right)+\frac{\alpha_2 x}{4(x-e_1)(x-e_2)(x-e_3)}\,.
\end{equation}

\paragraph{Case $m=3$:}

In this case, $x=\frac12\wp'(q)$, $w=\frac12\wp''(q)=3\wp^2(q)$, and $w^3=3^3(x-e_1)^2(x-e_2)^2$ where $e_i=\frac12\wp'(\eta_i)$. We have parameters $\mu_j(\eta_i)$, $i,j=0,1,2$. From \eqref{hh3}, we obtain
\begin{equation}
  w^{-3}\hh = D_2D_1D_0+\frac{\alpha_2}{3^2(x-e_1)(x-e_2)}D_0+\frac{2\alpha_3x}{3^3(x-e_1)^2(x-e_2)^2}\,,
\end{equation}
where 
\begin{equation}
    D_j=\hbar\frac{d}{dx}-\sum_{i=1,2}\frac{\mu_j(\eta_i)-2j\hbar}{3(x-e_i)}\,.
\end{equation}

\paragraph{Case $m=4$:}

In this case, $x=\wp^2(q)$, $w=2\wp(q)\wp'(q)$, and $w^4=4^4(x-e_1)^3(x-e_2)^2$
where $e_1=\wp^2(\omega_3)=0$ and $e_2=\wp^2(\omega_{1,2})$. By a linear transformation in $x$, we can make $e_1, e_2$ arbitrary. We have parameters $\mu_j(\omega_3)$ and $\mu_j(\omega_1)=\mu_j(\omega_2)$ for $j=0,1,2,3$, with the property $\mu_j(\omega_{1,2})=\mu_{j+2}(\omega_{1,2})$. From \eqref{hh4} we obtain
\begin{align}
  w^{-4}\hh = & D_3 D_2 D_1 D_0
     +   \frac{\alpha_2}{4^2 \left(x-e_1\right) \left(x-e_2\right)}  D_1 D_0\notag \\
     + &  \frac{2\alpha_3}{4^3 \left(x-e_1\right)^2 \left(x-e_2\right)} D_0 
     +   \frac{2\alpha_4(3x-2e_1-e_2)}{4^4 \left(x-e_1\right)^3 \left(x-e_2\right)^2}  \,,   
\end{align}
where 
\begin{align}
    D_j&=\hbar\frac{d}{dx}-\frac{\mu_j(\omega_3)-3j\hbar}{4(x-e_1)}-\frac{\mu_j(\omega_{1,2})-j\hbar}{2(x-e_2)}\,.
    \end{align}

\paragraph{Case $m=6$:}
We use $x=\wp^3(q)$ and $w=3\wp^2(q)\wp'(q)$, so $w^6=6^6(x-e_1)^4(x-e_2)^3$. Here $e_1=0$ but we may make it arbitrary by a shift in the $x$-variable. 
We have parameters $\mu_j(0)$, $\mu_j(\omega_{1,2,3})$ and $\mu_j(\eta_{1,2})$. Let us use the shorthand $\mu_j^{(\omega)}:=\mu_j(\omega_{1,2,3})$ and $\mu_j^{(\eta)}:=\mu_j(\eta_{1,2})$. These have the periodicity property $\mu_j^{(\omega)}=\mu_{j+2}^{(\omega)}$, $\mu_j^{(\eta)}=\mu_{j+3}^{(\eta)}$. 
From \eqref{hh6} we obtain
\begin{align}
     w^{-6}\hh =   & D_5 D_4 D_3 D_2 D_1 D_0
     +   \frac{\alpha_2}{6^2 \left(x-e_1\right) \left(x-e_2\right)} D_3 D_2 D_1 D_0\notag \\
     + &  \frac{2\alpha_3}{6^3 \left(x-e_1\right)^2 \left(x-e_2\right)} D_2 D_1 D_0 \notag 
     +   \frac{6\alpha_4}{6^4 \left(x-e_1\right)^2 \left(x-e_2\right)^2} D_1 D_0 \notag \\
     + &  \frac{24\alpha_5}{6^5 \left(x-e_1\right)^3 \left(x-e_2\right)^2} D_0 
     +   \frac{24\alpha_6(5 x  -3 e_1-2 e_2)}{6^6 \left(x-e_1\right)^4 \left(x-e_2\right)^3}\,,
\end{align}
where
\begin{align}
    D_j = \hbar \frac{d}{dx}-\frac{{\mu}_j^{(\eta)}- 2 j \hbar}{3(x-e_1)} -\frac{{\mu}_j^{(\omega)}- j \hbar}{2(x-e_2)}\,.
\end{align}

\section{Classical dynamics, Dunkl operator, Lax matrix, and the spectral curve}\label{sec3}

In this section we look at the dynamics of the hamiltonian \eqref{hamexc}, its Lax presentation, and the geometry of the spectral curves. 

\subsection{Classical dynamics} 
Let $h$ be one of the hamiltonians \eqref{hamexc} for $m=2,3,4,6$. The corresponding dynamics is described by the Hamilton--Jacobi equations,
\begin{equation}\label{hj}
    \frac{dp}{dt}=-\frac{\partial h}{\partial q}\,,\qquad \frac{dq}{dt}=\frac{\partial h}{\partial p}\,.
\end{equation}
The motion takes place along the level curves
\begin{equation}\label{fib}
    \Ssigma=\{(p, q)\,:\,h(p, q)=z \}\,.
\end{equation}
Each curve is an $m$-sheeted branched covering of the elliptic curve $\E$. The curves are not compact, due to $h$ having poles at $q=x_i$, and have a fairly high genus. To interpret \eqref{hj} as a {\it complex integrable system}, one needs to compactify the curves and take into account the $\Z_m$-symmetry
\begin{equation}\label{ace}
    s\,:\ p\to\omega^{-1}p\,,\quad q\to\omega q\,,\qquad \omega=e^{2\pi i/m}\,.
\end{equation}
This is summarised in the next proposition. 

\begin{prop}\label{prop5} $(1)$ Suppose $x_i\in\E$ is a fixed point for $W=\Z_m$, with the stabiliser $\Z_{m_i}$. The $m$-sheeted branched covering $\Ssigma\to\E$, $(p,q)\mapsto q$ near $q=x_i$ has the form
\begin{equation}\label{locsh}
    \prod_{j=0}^{m-1}\left(p-\frac{\mu_j}{q-x_i}+{O}\left((q-x_i)^{m_i-1}\right)\right)=0\,.
\end{equation}
Here $\mu_j=\mu_j(x_i)$ are the ``linear masses'' \eqref{emu}.

$(2)$ A compactification of $\Ssigma$ is obtained by adding $m$ distinct points over each fixed point $x_i$, one point for each of the sheets \eqref{locsh}. For generic couplings $c$, the compactified curve is smooth, of genus $g=m^2+1$. The $\Z_m$-action \eqref{ace} is free on $\Ssigma$ and has the stabiliser $\Z_{m_i}$ for the points above $q=x_i$.

$(3)$ The (compactified) quotient curves $\Sigma:=\Ssigma/\Z_m$ have genus one. The differential 
\begin{equation}
    dt=\frac{dp}{-\frac{\partial h}{\partial q}}=\frac{dq}{\frac{\partial h}{\partial p}}
\end{equation}
defines a non-vanishing holomorphic $1$-form on $\Sigma$. Hence, the $\Z_m$-quotient of the fibration \eqref{fib} defines an elliptic fibration on $T^*\E/\Z_m$, and the dynamics \eqref{hj} becomes linear along the fibers.    
\end{prop}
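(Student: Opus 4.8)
The plan is to analyze the local structure of the branched covering $\Ssigma \to \E$ near each fixed point $x_i$, deduce the compactification and genus, and then pass to the $\Z_m$-quotient. First, for part $(1)$, I would start from the hamiltonian \eqref{hamexc} written in the factored Dunkl form and examine its principal part near a fixed point $x_i$. Recall from \eqref{pr} and the accompanying discussion that near $q=0$ (and, by the same local analysis near any fixed point, near $q=x_i$) the leading behaviour of $\hh$ is $(\pp - \mu_{m-1} q^{-1})\cdots(\pp - \mu_0 q^{-1})$, so classically $h$ behaves as $\prod_{j=0}^{m-1}(p - \mu_j(x_i)(q-x_i)^{-1})$ up to lower-order corrections. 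The level equation $h = z$ near $q = x_i$ therefore factors, to leading order in $(q-x_i)$, into $m$ branches $p \sim \mu_j(x_i)/(q-x_i)$. I would make the orders of the error terms precise using the stabiliser $\Z_{m_i}$: the repetition structure of the $\mu_j(x_i)$ (noted after \eqref{emu}) and the vanishing of the relevant $f_k$-coefficients force the correction to each branch to be $O((q-x_i)^{m_i-1})$, giving exactly \eqref{locsh}.

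For part $(2)$, the factorisation in \eqref{locsh} shows that over a small punctured disc around $x_i$ the curve splits into $m$ distinct branches, each a graph $p = p_j(q)$ with a simple pole. Adding one point at infinity on each branch compactifies $\Ssigma$ over $x_i$ by $m$ points, and I would check smoothness of the compactified curve for generic $c$ by verifying that the branches stay distinct (the $\mu_j(x_i)$ being pairwise distinct for generic couplings). The genus I would compute via Riemann--Hurwitz applied to the map $\Ssigma \to \E$: the generic fibre has $m$ points, and the ramification is concentrated over the fixed points. Counting the contributions from each $x_i$ (weighted appropriately by $m_i$ and the number of $\Z_m$-orbits of fixed points in each case $m=2,3,4,6$) should yield the Euler characteristic giving $g = m^2+1$ uniformly. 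For the $\Z_m$-action, freeness on $\Ssigma$ away from the added points follows because \eqref{ace} permutes the branches cyclically without fixing any $(p,q)$ with $q \neq x_i$, while the added point over $x_i$ on the $j$-th branch is fixed exactly by the stabiliser $\Z_{m_i}$, matching the claimed stabilisers.

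For part $(3)$, I would compute the genus of the quotient $\Sigma = \Ssigma/\Z_m$ using Riemann--Hurwitz for the degree-$m$ map $\Ssigma \to \Sigma$, now with the ramification data dictated by the stabilisers $\Z_{m_i}$ found in part $(2)$; the resulting Euler-characteristic bookkeeping should collapse $g=m^2+1$ down to $g(\Sigma)=1$ in every case. To identify the holomorphic $1$-form, I would verify that $dt = dq/(\partial h/\partial p)$ is $\Z_m$-invariant (immediate from \eqref{ace} since $dq$ scales by $\omega$ and $\partial h/\partial p$ scales by $\omega$ as well, because $h$ has total weight $m$ under the action), hence descends to $\Sigma$, and that it is holomorphic and non-vanishing there: away from the fixed points this is clear since $\partial h/\partial p \neq 0$ on the smooth fibre, and near the compactification points one checks in the local coordinate that the apparent pole of $dq$ is cancelled by the zero of the form coming from the simple pole of $p_j$. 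A genus-one curve carrying a nowhere-vanishing holomorphic $1$-form is elliptic, so the fibration \eqref{fib} descends to an elliptic fibration on $T^*\E/\Z_m$, and invariance of $dt$ makes the Hamilton--Jacobi flow linear in the corresponding coordinate.

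The main obstacle I anticipate is making part $(1)$ genuinely precise: establishing that the correction to each branch really is $O((q-x_i)^{m_i-1})$ rather than merely $O(1)$ requires careful use of the $\Z_{m_i}$-symmetry and the explicit structure of the coefficients $f_k$ near $x_i$, and it is this sharp local estimate that both guarantees the compactified curve is smooth and feeds the correct ramification indices into the two Riemann--Hurwitz computations. The genus bookkeeping in parts $(2)$ and $(3)$ is then routine but must be carried out case by case for $m=2,3,4,6$ to confirm the uniform answers $g=m^2+1$ and $g(\Sigma)=1$.
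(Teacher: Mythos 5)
Your overall architecture (local analysis at the fixed points, then two genus computations) is reasonable, and for part $(1)$ you take the route the paper explicitly mentions but declines to follow: direct analysis of the formula \eqref{hamexc}. The paper instead derives \eqref{locsh} from the Lax matrix: the residue formula \eqref{laxpr2} gives the leading coefficients $\mu_j$, and the sharp error $O((q-x_i)^{m_i-1})$ comes from showing that each \emph{individual} local branch is invariant under the stabiliser $\Z_{m_i}$ — which is exactly the point you flag as your ``main obstacle'' but do not resolve. This is a genuine gap, not a technicality: since the $\mu_j(x_i)$ repeat with period $m_i$ when $m_i<m$, the stabiliser could a priori permute branches sharing the same residue, and the constant term in the Laurent expansion of each branch (coming from the regular parts of the $f_k$ at $x_i$, which do not obviously vanish) would then survive. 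The paper kills it by a global argument: the branches can be followed to $q=0$, where the residues $\mu_j(0)$ are pairwise distinct for generic $c$, so no nontrivial permutation is possible; hence each branch is $\Z_{m_i}$-invariant and its expansion lives in $(q-x_i)^{-1}\C[[(q-x_i)^{m_i}]]$. Some such global input is needed; a purely local inspection of the $f_k$ will not produce \eqref{locsh}.

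The more serious problem is your genus computation in part $(2)$. You propose Riemann--Hurwitz for $\Ssigma\to\E$ with ``ramification concentrated over the fixed points,'' but by part $(1)$ the fibre over each $x_i$ consists of $m$ \emph{distinct} points, i.e.\ the projection is \emph{unramified} there (the paper states this explicitly for the quotient covering). Your bookkeeping would therefore yield $\chi(\Ssigma)=m\,\chi(\E)=0$ and $g=1$, not $m^2+1$. The actual ramification of $(p,q)\mapsto q$ sits at the zeros of $\partial h/\partial p$ away from the fixed points, and counting those directly is awkward. The paper sidesteps this by computing $\deg K_{\Ssigma}$ from the explicit meromorphic differential $\Omega=dq/(\partial h/\partial p)$: it is holomorphic and non-vanishing away from the points over the $x_i$ (the zeros of $dq$ at the projection's branch points cancel against those of $\partial h/\partial p$), while at each of the $m$ points over $x_i$ the estimate $\partial h/\partial p\sim(q-x_i)^{1-m_i}$ — itself a consequence of the sharp form \eqref{locsh} — gives a zero of order $m_i-1$. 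Summing over orbits yields $2m^2=2g-2$, hence $g=m^2+1$. You appear to be conflating this zero divisor of $\Omega$ (which does sit over the fixed points) with the ramification divisor of $\Ssigma\to\E$ (which does not). Your part $(3)$, by contrast, is sound: Riemann--Hurwitz for the quotient map $\Ssigma\to\Sigma$ with stabilisers $\Z_{m_i}$ correctly gives $g(\Sigma)=1$ once $g(\Ssigma)=m^2+1$ is known, and your invariance check for $dt$ is the same as the paper's; the paper simply verifies non-vanishing of the descended form directly in the invariant local coordinate $(q-x_i)^{m_i}$ rather than via a second Riemann--Hurwitz count.
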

It is possible to prove this proposition by analysing the formula \eqref{hamexc}. However, we will use an alternative method and derive it from a {\it Lax presentation} for the system \eqref{hj}. Such a Lax presentation can be constructed following \cite{Chalykh19qlp} by using Dunkl operators, which we need to introduce first.

\subsection{Elliptic Dunkl operators}
Elliptic analogues of Dunkl operators go back to \cite{BFV94edo}, see also \cite{Cherednik95eqm}. For general complex crystallographic groups, they were introduced in \cite{EM08edo}. Like their rational counterparts, elliptic Dunkl operators form a commutative family, but their symmetry properties are more complicated due to the presence of auxiliary ``spectral'' variables. Below we discuss the case of rank $1$ only, in which case we need just one Dunkl operator. We follow the general framework of \cite{EM08edo, EFMV11ecm}, but since we are dealing with a special case, everything will be made very concrete.        

Let us fix some notation. For an abelian variety $X$ with an action of a finite group $W$, the Dunkl operators \cite{EM08edo} depend on an auxiliary variable $\alpha\in X^\vee=\mathrm{Pic}_0(X)$. We will be dealing with the case $X=\E=\C/\L$ and $W=\Z_m$, in which case we identify $X^\vee\simeq X$ so that $\alpha\in\E$. Below $\sigma$, $\zeta$, $\wp$ stand for the Weierstrass functions associated to the lattice $\L=2\omega_1\Z+2\omega_2\Z$, and we write $\varphi(x,z)$ for the following combination: 
\begin{align}
    \varphi(x,z) = \frac{\sigma(x - z )}{\sigma(x) \sigma(-z) }.
\end{align}
Recall that for each fixed point $x_i\in \E$ we have parameters $c_l(x_i)$, $1\le l\le m-1$, with $c_l(x_i)=0$ unless $s^l(x_i)=x_i$. The fixed points of $s^l$ are identified with the cosets
\begin{equation}
    (\Omega_l)^{-1}\L/\L\,,\qquad \Omega_l:=1-\omega^l\,.
\end{equation}
Now introduce the following functions $v_l=v_{l, c}$ of $x,z\in\C$:
\begin{align}\label{v}
v_{l}( x,z) & =\sum _{\{x_{i}\}} c_{l}( x_{i}) e^{-\eta ( \Omega_{l} x_{i}) z} \varphi ( x-x_{i} ,\Omega_{-l} z)\,,\qquad l\in\Z_m\setminus\{0\}\,,
\end{align}
where the summation is over fixed points $x_i\in (\Omega_l)^{-1}\L/\L$, and $\eta(\gamma)$ for $\gamma=2n_1\omega_1+2n_2\omega_2\in \L$ is defined by
\begin{equation}\label{eta}
    \eta(\gamma)=-\int_q^{q+\gamma}\wp(q)\,dq=\zeta(q+\gamma)-\zeta(q)=2n_1\zeta(\omega_1)+2n_2\zeta(\omega_2)\,.
\end{equation}
We extend $\eta$ by $\R$-linearity so that $\eta(2a\omega_1+2b\omega_2)=2a\zeta(\omega_1)+2b\zeta(\omega_2)$ for $a,b\in\R$. The symmetry of the lattice $\L$ implies that $\eta(\omega\gamma)=\omega^{-1}\eta(\gamma)$. Note that the formula \eqref{v} is invariant under $x_i\mapsto x_i+\gamma$, $\gamma\in \L$, thus independent on the choice of coset representatives $x_i\in(\Omega_l)^{-1}\L/\L$. An important property of the functions $v_l(x,z)$ is the {\it duality},
\begin{equation}\label{dudu}
 v_{l,c}(x,z)=-v_{-l,c^{\vee}}(z,x)\,,   
\end{equation}
where the set of {\it dual couplings} $c^\vee$ is described in Appendix \ref{a:duality}.

\bigskip 

We now define the {\it elliptic Dunkl operator} for $W=\Z_m$ by the formula
\begin{align}\label{due}
   y = \hbar \frac{d}{dq } - \sum_{ l = 1}^{m-1} v_{l}( q , \alpha) s^l\,, 
\end{align}
with the coefficients $v_l(x,z)$ given by \eqref{v}. One important feature of this case is that $y$ does {\it not} belong to the elliptic Cherednik algebra $H_{\hbar, c}(\E)$, since the coefficients $v_l(q, \alpha)$ are not elliptic. Another distinctive feature is the dependence on the auxiliary ``spectral'' variable $\alpha$; we write $y=y(\alpha)$ to indicate that dependence. As a function of $\alpha$, the Dunkl operator has simple poles at the fixed points $\alpha=x_i$, and it has the following properties:
\begin{align}\label{proa}
    &y(\alpha+\gamma)
        =e^{-\eta(\gamma)q}\,y(\alpha)\,e^{\eta(\gamma)q}-\hbar\eta(\gamma)\,\,\quad\forall\ \gamma\in \L\,,
    \\\label{prob}
    &\mathrm{res}_{\alpha=x_i} y(\alpha)= \sum_{l=1}^{m-1}e^{-\eta(\Omega_{-l}x_i)q}
    c_{-l}^\vee(x_i)s^l\,,
    \\\label{proc}
    &s\,y(\alpha)=\omega^{-1}y(\omega^{-1}\alpha)\,s\,.
\end{align}
The formula \eqref{prob} follows from the obvious property $\mathrm{res}_{x=x_i}v_l(x,z)=c_l(x_i)e^{-\eta(\Omega_{l}x_i)z}$ and
the duality \eqref{dudu}. Since ${\eta(\Omega_{-l}x_i)q}=\eta(x_i)(1-\omega^l)q$, the relation \eqref{prob} can be rewritten as
\begin{equation}\label{probb}
 \mathrm{res}_{\alpha=x_i} y(\alpha)= e^{-\eta(x_i)q}\left(\sum_{l=1}^{m-1}
    c_{-l}^\vee(x_i)s^l \right)e^{\eta(x_i)q}\,.   
\end{equation}
The classical Dunkl operator is defined as
\begin{align}\label{duec}
   y^c = p - \sum_{ l = 1}^{m-1} v_{l}( q , \alpha) s^l\,. 
\end{align}
It has the same properties as in the quantum case, namely, \eqref{proa} (with $\hbar=0$), \eqref{proc} and \eqref{probb}. 

\medskip

\begin{remark}\label{pic}
Let $\mathcal L_\alpha$ denote the line bundle over $\E$ given by the quotient $(\C \times \mathbb{C})/\sim$ with $(q, \xi) \sim (q + \gamma, \exp(-\eta(\gamma)\alpha)\xi)$ for $\gamma\in \L$. Under the $\Z_m$-action on $\E$, we have $(\mathcal L_\alpha)^s=\mathcal L_{\omega^{-1}\alpha}$. The function $v_l(q,\alpha)$, for a fixed $\alpha$, represents a meromorphic section of $\mathcal L_{\Omega_{-l}\alpha}\simeq \mathcal{L}_\alpha \otimes (\mathcal{L}^{s^l}_\alpha)^\ast$. Hence, the Dunkl operator $y(\alpha)$ acts on (meromorphic) sections of $\mathcal L_\alpha$ in agreement with conventions in \cite{EM08edo}. 
\end{remark}

\subsection{Lax matrix}
We will make use of the fact that the system \eqref{hj} admits a Lax presentation
\begin{equation}\label{laxe}
    \frac{dL}{dt}=[L,A]\,,
\end{equation}
for suitable matrices $L=L(p,q)$, $A=A(p,q)$. The Lax pair $L, A$ can be found following the method of \cite{Chalykh19qlp}. We only need the Lax matrix $L$; it is calculated from the Dunkl operator \eqref{duec} by adapting the recipe from \cite{Chalykh19qlp}. Namely, let $\C(p,q)$ denote the space of (meromorphic) functions of $p,q\in\C$,
with the $\Z_m$-action $s(p,q)=(\omega^{-1}p, \omega q)$. One considers $\C(p,q)\rtimes \Z_m$ acting on itself by left multiplication. If we use a vector-space isomorphism $\C(p,q)\rtimes \Z_m\simeq \C\Z_m\otimes \C(p,q)$, we can interpret the action of any element as a $\Z_m\times \Z_m$ matrix with entries from $\C(p,q)$. For example, multiplication by $q$ and $s$ are represented by the following matrices $Q$ and $S$, respectively:
\begin{align}\label{s}
 Q=\mathrm{diag}(q,\omega^{-1}q,\dots, \omega^{-m+1}q),,\quad S=\sum_{i\in\Z_m} E_{i+1,i}=
    \begin{pmatrix}
    0&\dots&\dots&0&1\\
    1&0&\dots&\dots&0\\
    0&1&0&\dots&0\\
    &&\ddots&&\\
    0&\dots&0&1&0
    \end{pmatrix}\,.   
\end{align}
The action of the Dunkl operator \eqref{duec} is then represented by the following {\it Lax matrix} $L=(L_{ij})$:
\begin{equation}\label{lax}
    L_{ij}=\begin{cases}
    \omega^ip\quad &\text{for}\ i=j\,,\\
    v_{i-j}(\omega^{-i}q, \alpha)\quad &\text{for}\ i\ne j\,,
\end{cases}
\qquad (i,j\in\Z_m)\,.
\end{equation}
We write $L=L(\alpha)$ to indicate the dependence on the spectral parameter. The Lax matrix has first order poles at the fixed points $\alpha=x_i$ and it has the following properties: 
\begin{align}\label{laxpr1}
    &L(\alpha+\gamma)=e^{-\eta(\gamma)Q}L(\alpha)e^{\eta(\gamma)Q}\quad\forall\ \gamma\in\L\,,
    \\
    \label{laxpr2}
    &\mathrm{res}_{\alpha=x_i} L(\alpha)= e^{-\eta(x_i)Q}\left(\sum_{l=1}^{m-1}
    c_{-l}^\vee(x_i)S^l \right)e^{\eta(x_i)Q}\,,
    \\
    \label{laxpr3}
    &\,L(\omega^{-1}\alpha)=\omega SL(\alpha)\,S^{-1}\,,
\end{align}
with the above matrices $Q$ and $S$. These properties immediately follow from \eqref{proa} (with $\hbar=0$), \eqref{proc} and \eqref{probb}; alternatively, they can be verified directly.

The method of \cite{Chalykh19qlp} proves that the above $L$ admits a Lax partner $A$ so that the equation \eqref{laxe} holds (see Remark \ref{laxA} below). Hence, the coefficients $b_i$ of the characteristic polynomial 
\begin{equation}
    \det(L-k\Id)=(-1)^m\left(k^m+b_1k^{m-1}+\dots+b_m\right)
\end{equation}
remain constant under the hamiltonian dynamics \eqref{hj}. Note that the hamiltonian $h(p,q)$ has degree $m$ in $p$. Since the coefficients $b_i=b_i(\alpha; p,q)$ have degree $<m$ in $p$ if $i<m$, we must have $b_i=b_i(\alpha)$. On the other hand,
\begin{equation}
    b_m=(-1)^m\det L= (-1)^m(\prod_{i=0}^{m-1}\omega^i)p^m+\dots=-p^m+\dots\,,
\end{equation}
so we must have $b_m=-h(p,q)+b_m(\alpha)$. As a result,
\begin{equation}\label{las}
    \det(L-k\Id)=(-1)^m\left(k^m+b_1(\alpha)k^{m-1}+\dots+b_m(\alpha)-h(p,q)\right)\,. 
\end{equation}
To find an explicit formula for $\det(L-k\Id)$, one may try calculating the determinant directly, but this seems daunting for $m=4,6$. Instead, we will obtain the answer momentarily from the following symmetry of $L$. Namely, write $L=L(p,q;\alpha)$ for the Lax matrix \eqref{lax}, and $L^\vee$ for the Lax matrix with the dual couplings $c^\vee$. Then we have the following relation:
\begin{equation}\label{dulax}
    \det(L(p,q;\alpha)-k\Id)=-\det(L^\vee(k,\alpha; q)-p\Id)\,.
\end{equation}
Indeed, the matrix in the r.h.s has the following entries:
\begin{equation}
    (L^\vee-p\Id)_{ij}=\begin{cases}
    \omega^ik-p=-\omega^i(w^{-i}p-k)\quad &\text{for}\ i=j\,,\\
    v_{i-j, c^\vee}(\omega^{-i}\alpha, q)=-\omega^iv_{j-i, c}(\omega^iq, \alpha)\quad &\text{for}\ i\ne j\,.
\end{cases}
\end{equation}
Thus, $(L^\vee-p\Id)_{ij}=-\omega^i(L-k\Id)_{m-i, m-j}$ and
\begin{equation}\label{dul}
    L^\vee-p\Id=-\mathrm{diag}(1,\omega,\dots, \omega^{m-1}) C(L-k\Id)C^{-1}\,,\qquad C=\sum_{i\in\Z_m} E_{i,-i}\,,
\end{equation}
which makes \eqref{dulax} obvious. Now, combining \eqref{las} and \eqref{dulax}, we get
\begin{equation}\label{spc}
   (-1)^m \det(L-k\Id)=h^\vee(k,\alpha)-h(p,q)\,,
\end{equation}
where $h^\vee$ denotes the hamiltonian \eqref{hamex} with the dual couplings $c^\vee$.

\begin{remark}\label{laxA}
Following \cite{Chalykh19qlp}, let us substitute the classical Dunkl operator \eqref{duec} into the classical dual hamiltonian $h^\vee$.  By \cite[(5.19)]{Chalykh19qlp}, this recovers the classical hamiltonian $h(p,q)$, i.e.,
\begin{equation}
h^\vee(y^c,\alpha)=h(p,q)\,.    
\end{equation}
In the representation discussed above, $y^c$ becomes the Lax matrix $L$ and this relation turns into 
\begin{equation}
h^\vee(L,\alpha)=h(p,q)\Id\,.    
\end{equation}
This gives another proof of \eqref{spc}. Furthermore, if one uses instead the {\it quantum} Dunkl operator $y$, then according to \cite[(5.20)]{Chalykh19qlp} one gets
\begin{equation}
h^\vee(y,\alpha)=\hh+\widehat{a}\,,    
\end{equation}
for a suitable $\widehat a\in \mathcal D_\E\rtimes \Z_m$. The classical limit of $\hbar^{-1}\widehat a$ as $\hbar\to 0$ gives an element $a\in\C(p,q)\rtimes \Z_m$. As explained in \cite{Chalykh19qlp}, the matrix $A=A(p,q)$ representing $a$ gives a Lax partner for $L$, satisfying \eqref{laxe}.  
\end{remark}

\subsection{Spectral curves}
The formula \eqref{spc} gives us an explicit one-parameter family of the {\it spectral curves} $\det(L-k\Id)=0$ as
\begin{equation}\label{specd}
    \Ssigma^\vee=\{(k,\alpha)\,:\,h^\vee(k, \alpha)=z \}\,,
\end{equation}
parameterised by the value $z$ of the hamiltonian $h(p,q)$. The Lax matrix $L$ has $m$ distinct eigenvalues generically (this is obviously true if $c=0$, hence also true for generic couplings). When $\alpha$ approaches a fixed point $\alpha=x_i$, the eigenvalues tend to infinity, and their behaviour is determined by $\mathrm{res}_{\alpha=x_i}L$. Using \eqref{laxpr2}, we find that the eigenvalues of $L$ near $\alpha=x_i$ are given by
\begin{equation}\label{brak}
    k=\frac{\mu_j^\vee}{\alpha-x_i}+O(1)\,,\qquad \mu_j^\vee=\mu_j^\vee(x_i):=\sum_{l=1}^{m-1}\omega^{jl}c_{-l}^\vee(x_i)\,,\qquad j=0,\dots, m-1\,.
\end{equation}
If the stabiliser of $x_i$ is $\Z_{m_i}\subset \Z_m$, then $c^\vee_l=0$ unless $lm_i$ is zero modulo $m$. As a result, $\mu_j^\vee=\mu_{j+m_i}^\vee$, so among $\mu_j^\vee$ there will be only $m_i$ different values. For example, for $m=6$ and a fixed point with stabiliser $\Z_3$, we have 
\begin{equation}
  \{\mu_j^\vee\}=(\mu_0^\vee,\mu_1^\vee, \mu_2^\vee,\mu_0^\vee,\mu_1^\vee,\mu_2^\vee)\,,\qquad \mu_0^\vee+\mu_1^\vee+\mu_2^\vee=0\,.  
\end{equation}

As is readily seen from \eqref{laxpr3}, the spectral curve $\Ssigma^\vee$ is invariant under the $\Z_m$-action
\begin{equation}\label{aced}
    s\,:\ k\to\omega k\,,\quad \alpha\to\omega^{-1}\alpha\,,\qquad \omega=e^{2\pi i/m}\,.
\end{equation}
(This also follows from the $\Z_m$-invariance of the hamiltonian.) Let
\begin{equation}
    \Sigma^\vee:=\Ssigma^\vee/\Z_m
\end{equation}
be the quotient curve. Both curves may be viewed as $m$-sheeted branched coverings $\Ssigma^\vee\to \E$ and $\Sigma^\vee\to \E/\Z_m=\mathbb{P}^1$, respectively. The action of the stabiliser $\Z_{m_i}$ does not permute the sheets \eqref{brak}: indeed, the sheets near $\alpha=0$ have distinct $\mu_j$'s so cannot be permuted. As a result, each sheet is invariant under the stabiliser $\Z_{m_i}$ of $\alpha=x_i$, which implies that the $O(1)$ term in \eqref{brak} must have correct symmetry, hence
\begin{equation}\label{brak1}
    k=\frac{\mu_j^\vee}{\alpha-x_i}+O\left((\alpha-x_i)^{m_i-1}\right)\,.
\end{equation}
This shows that $\Ssigma^\vee$ can be compactified by adding $m$ points above each $\alpha=x_i$, so that the $\Z_m$-action extends to the compactification and the added points have $\Z_{m_i}$ as their stabilisers. It also implies that in local invariant coordinates $\epsilon=(\alpha-x_i)^{m_i}$ and $s=(\alpha-x_i)k$, each branch \eqref{brak1} becomes 
\begin{equation}
    s=\mu_j^\vee+\epsilon\, r_j(\epsilon)\,,\quad \text{with some}\ r_j\in \C[[\epsilon]]\,.
\end{equation}
This means that locally around $\epsilon=0$, the branched covering $\Sigma^\vee \to \mathbb P^1$ is {\it unramified}.

\subsection{Elliptic fibration, integrability, and Seiberg--Witten differential}

The above analysis of the curves $h^\vee(k,\alpha)=z$, immediately carries over to the fibration \eqref{fib}: one just needs to replace $c^\vee$ by $c$. This partly proves Proposition \ref{prop5}. To prove the remaining claims, let us proceed by calculating the genus of $\Ssigma^\vee$ and $\Sigma^\vee$. There are various ways to do that, and we choose the one which is best for our purposes.  

Consider the following meromorphic differentials on $T^*\E$: 
\begin{equation}
    \Omega_1=\frac{dk}{-{\partial h^\vee}/{\partial \alpha}}\,,\quad \Omega_2=\frac{d\alpha}{{\partial h^\vee}/{\partial k}}\,.
\end{equation}
Obviously, $\Omega_1=\Omega_2$ on the level curves \eqref{specd}, and the resulting $1$-form $\Omega:=\Omega_1=\Omega_2$ is holomorphic and non-vanishing on $\Ssigma^\vee$ away from the fixed points of $\Z_m$. To analyse it near  $\alpha=x_i$, we take $\alpha$ as a local coordinate and use that  
\begin{equation}
    {{\partial h^\vee}/{\partial k}}=\frac{\partial f(k, \alpha)}{\partial k}\,,\qquad f(k, \alpha)=(-1)^m\det(L-k\Id)\,,
\end{equation}
with 
\begin{equation}\label{fka}
    f(k,\alpha)=\prod_{j=0}^{m-1}\left(k-\frac{\mu_j^\vee}{\alpha-x_i}+\beta_j(\alpha-x_i)^{m_i-1}+\dots\right)\,.
\end{equation}
Picking one of the local branches \eqref{brak1}, we see that $m/m_i$ factors in \eqref{fka} behave as $(\alpha-x_i)^{m_i-1}$, while the remaining $m-m/m_i$ factors behave as $(\alpha-x_i)^{-1}$. Differentiating $f$ with respect to $k$ removes one of the factors; from this,
\begin{equation}
  \frac{\partial f(k, \alpha)}{\partial k}\sim (\alpha-x_i)^{(m_i-1)(m/m_i-1)-(m-m/m_i)}=(\alpha-x_i)^{-m_i+1}\,.  
\end{equation}
As a result, on each branch, 
\begin{equation}
   \Omega=\frac{d\alpha}{{\partial h^\vee}/{\partial k}}\sim (\alpha-x_i)^{m_i-1}d\alpha\,.
\end{equation}
Hence, $\Omega$ is holomorphic on $\Ssigma^\vee$, and so the overall number of its zeros over $\alpha=x_i$ is $m(m_i-1)$. There are $m/m_i$ fixed points in the $\Z_m$-orbit of $x_i$, so they contribute $m^2(m_i-1)/m_i$ zeros. The total number of zeros is therefore  
\begin{equation}
   m^2 \sum_i \left(1-\frac{1}{m_i}\right)\,.
\end{equation}
Here $(m_i)=(2,2,2,2)$ for $m=2$, $(m_i)=(3,3,3)$ for $m=3$, $(m_i)=(4,4,2)$ for $m=4$, and $(m_i)=(6,3,2)$ for $m=6$. In all cases, the above sum gives $2m^2$, so $\Omega$ is a holomorphic differential on $\Ssigma^\vee$ with $2m^2$ zeros. From that, the genus of $\Ssigma^\vee$ is $m^2+1$. Next, the form $\Omega$ is clearly $\Z_m$-invariant so it defines a holomorphic form on $\Sigma^\vee$. Near $\alpha=x_i$ we use $x:=(\alpha-x_i)^{m_i}$ as a local coordinate; then 
\begin{equation}
   \Omega\sim (\alpha-x_i)^{m_i-1}d\alpha\sim dx\,.
\end{equation}
Hence, $\Omega$ is non-vanishing on $\Sigma^\vee$, so $\Sigma^\vee$ has genus one. This establishes all the remaining claims in Proposition \ref{prop5}.

\begin{remark}
When viewed on $\Ssigma^\vee$, $\Omega$ is holomorphic and $\Z_m$-invariant. Up to a factor, there is only one such 1-form. Indeed, by local symmetry, it must have zero of order at least $m_i-1$ at each point with stabiliser $\Z_{m_i}$. Thus, it is bound to have the same divisor as $\Omega$.     
\end{remark}

We finish the section by exhibiting a Seiberg--Witten differential for the elliptic fibration on $T^*\E/\Z_m$. The canonical holomorphic symplectic form on $T^*\E$ is $\omega=dk\wedge d\alpha=d\lambda$, for $\lambda$ the canonical Liouville 1-form,
\begin{equation}
    \lambda=k\,d\alpha\,.
\end{equation}
Both $\omega$ and $\lambda$ are $\Z_m$-invariant so descend to holomorphic forms on $T^*\E/\Z_m$. On the compactified fibers the form $\lambda$ is only meromorphic; the reason being that on any particular branch near $\alpha=x_i$ we have 
\begin{equation}\label{swres}
    \lambda=\left(\frac{\mu_j^\vee}{\alpha-x_i}+O\left((\alpha-x_i)^{m_i-1}\right)\right)d\alpha\,.
\end{equation}
We therefore conclude that $\lambda$ has only simple poles and {\it constant} (i.e. independent of $z$) residues $\mu_j^\vee=\mu_j^\vee(x_i)$. This allows us to view $\lambda$ as a {\it Seiberg--Witten (SW) differential} for the elliptic fibration on $T^*\E/\Z_m$. The residues of the SW differential are referred to as {\it linear masses}; as we see, they are directly related to the coupling parameters of the integrable system.  

\section{Spectral curves and elliptic pencils}\label{pencils}

To interpret the elliptic fibration 
\begin{equation}\label{pol1}
    h^\vee(k,\alpha)=z
\end{equation}
geometrically, we convert it into a polynomial form, using the $\Z_m$-invariant combinations
\begin{equation}
    x=u(\alpha)\,,\qquad y=v(\alpha)k\,,
\end{equation}
where $u, v$ are the functions from Table \ref{table1}. Equally, we may consider the fibration by the level sets of the hamiltonian $h$,
\begin{equation}\label{pol0}
    h(p,q)=z\,,
\end{equation}
 writing it in terms of $x=u(q)$ and $y=v(q)p$. The only difference between the two fibrations is in the coupling parameters: $c^\vee$ or $c$, respectively. The polynomial form of the fibration \eqref{pol0} is presented in Appendix \ref{a:pencils}. It is of the form
\begin{equation}\label{pen0}
    y^m+\sum_{j=2}^m Q_j(x)y^{m-j}=zP_m(x)\,,
\end{equation}
where $P_m(x)$ is the same as in \eqref{mcov}, \eqref{cci2}--\eqref{cci6}.

The SW differential in terms of $x,y$ is chosen as
\begin{equation}\label{sw1}
    \lambda=\frac{y\,dx}{(x-e_1)(x-e_2)(x-e_3)}\,\quad(m=2)\,,\qquad \lambda=\frac{y\,dx}{(x-e_1)(x-e_2)}\,\quad(m=3,4,6)\,.
\end{equation}
As we verify in Appendix \ref{a:pencils}, the fibration \eqref{pen0} describes an elliptic pencil of a special form. Below we describe it geometrically. 

We first consider the cases of $m=3,4,6$. It will be convenient to work in homogeneous coordinates in the projective plane, replacing \eqref{pen0} with
\begin{equation}\label{penh}
    Q(x,y,w)-zP(x,w)=0\,,
\end{equation}
where $(x:y:w)$ are the homogeneous coordinates on $\mathbb{P}^2$, and $z$ parameterises the pencil. The base of each pencil will be a collection of points on a union of three lines $\ell_0$, $\ell_1$, $\ell_2$ meeting at a point. Up to projective equivalence, we can always assume that the lines are  
\begin{equation}\label{lines}    \ell_0:\  w=0\,,\quad\ell_1:\ x-e_1w=0\,,\qquad\ell_2:\ x-e_2w=0\,. \end{equation}

\paragraph{Case $m=3$:}\label{pencils3}
 
Choose three distinct points on each line,
\begin{equation}\label{points3}
    p_0, p_1, p_2\in\ell_0\,,\quad q_0, q_1, q_2\in\ell_1\,,\quad r_0, r_1, r_2\in\ell_2\,,
\end{equation}
and consider the pencil of cubic curves passing through these points. For this to work, the base points \eqref{points3} of the pencil should be chosen subject to one overall constraint.
Two further degrees of freedom can be eliminated by applying projective transformation preserving the three lines. Hence, we have a six-parameter family of such pencils up to projective equivalence.

More concretely, assuming the lines are chosen as in \eqref{lines}, we have a pencil \eqref{penh} where
\begin{equation}\label{pen3}
Q=y^3+Q_1(x,w)y^2+Q_2(x,w)y+Q_3(x,w)\,,\qquad P(x,w)=w(x-e_1w)(x-e_2w)\,.
\end{equation}
The base points of the pencil are found by intersecting the cubic $Q=0$ with the lines:
\begin{equation}\label{nu30}
    p_i=(1:-\alpha_i:0)\,,\quad q_i=(e_1:\beta_i:1)\,,\quad r_i=(e_2:\gamma_i:1)\,.
\end{equation}
Writing $Q_1=a_{11}x+a_{12}w$, we find that $\alpha_0+\alpha_1+\alpha_2=a_{11}$, $\beta_0+\beta_1+\beta_2=-a_{11}e_1-a_{12}$, $\gamma_0+\gamma_1+\gamma_2=-a_{11}e_2-a_{12}$. Hence, 
the parameters $\alpha_i, \beta_i, \gamma_i$ satisfy the constraint
\begin{equation}
    \sum_i\alpha_{i}+\sum_i\frac{\beta_{i}}{e_1-e_2}+ \sum_i\frac{\gamma_{i}}{e_2-e_1}=0\,.
\end{equation}
Furthermore, by a transformation $y\mapsto y+ax+bw$ we can make $Q_1=0$ bringing $Q$ to the form
\begin{equation}\label{pen31}
Q=y^3+Q_2(x,w)y+Q_3(x,w)\,.
\end{equation}
In that case,
\begin{equation}\label{nu31}
\alpha_0+\alpha_1+\alpha_2=\beta_0+\beta_1+\beta_2=\gamma_0+\gamma_1+\gamma_2=0\,.
\end{equation}

The Seiberg--Witten differential \eqref{sw1} in homogeneous coordinates becomes
\begin{equation}\label{sw2}
 \lambda=\frac{y\,(wdx-xdw)}{w(x-e_1w)(x-e_2w)}\,.
\end{equation}
Its residues at $w=0$ and $x=e_{1,2}w$ are $\alpha_{1,2,3}$, $(e_1-e_2)^{-1}\beta_{1,2,3}$, and $(e_2-e_1)^{-1}\gamma_{1,2,3}$, respectively. 

Thus, the geometric parameters of the pencil are directly related to the residues of $\lambda$ (linear masses).  Generically, we have $3$ distinct residues for each of $x=\infty, e_1, e_2$; we express this by saying that the {\it pattern of residues} of $\lambda$ is $(111), (111), (111)$.

\paragraph{Case $m=4$:}\label{pencils4}
In this case we need a pencil of curves of degree 4 with two double points. We choose ten distinct points on the lines $\ell_0, \ell_1, \ell_2$ as follows:
\begin{equation}\label{points4}
    p_0, p_1, p_2, p_3\in\ell_0\,,\quad q_0, q_1, q_2, q_3\in\ell_1\,,\quad r_0, r_1\in\ell_2\,.
\end{equation}
The curves of the pencil are quartic curves passing through 
\begin{equation}
    (p_0p_1p_2p_3q_0q_1q_2q_3r_0^2r_1^2)\,.
\end{equation}
This notation means that each curve of the pencil should have an ordinary double point at both $r_0$ and $r_1$. By the same reasoning as above, there is a seven-parameter family of such pencils up to projective equivalence. The generic curves in the pencil are quartics with two double points, of geometric genus one.  

Assuming that the lines are of the form \eqref{lines}, we consider a pencil \eqref{penh}, with $Q$ homogeneous of degree $4$ and with
\begin{equation}\label{pen4}
P(x,w)=w(x-e_1w)(x-e_2w)^2\,.
\end{equation}
The quartic $Q=0$ intersects the lines at points
\begin{equation}\label{nu40}
    p_i=(1:-\alpha_i:0)\,,\quad q_i=(e_1:\beta_i:1)\,,\quad r_i=(e_2:\gamma_i:1)\,.
\end{equation}
As before, we find that the parameters describing the points are constrained by
\begin{equation}
    \sum_i\alpha_{i}+\sum_i\frac{\beta_{i}}{e_1-e_2}+ {2} \sum_{i=0,1}\frac{\gamma_{i}}{e_2-e_1}=0\,.
\end{equation}
Furthermore, by a linear transformation $y\mapsto y+ax+bw$ we make $Q_1=0$ bringing $Q$ to the form
\begin{equation}\label{pen41}
Q=y^4+Q_2(x,w)y^2+Q_3(x,w)y+Q_4(x,w)\,.
\end{equation}
In that case, we have 
\begin{equation}\label{nu41}
\alpha_0+\alpha_1+\alpha_2+\alpha_3=\beta_0+\beta_1+\beta_2+\beta_3=\gamma_0+\gamma_1=0\,.  
\end{equation}
We normalise the curves by making each double point $r_0, r_1$ into a pair of distinct points. The Seiberg--Witten differential \eqref{sw2} has simple poles at $4$ points over each of $x=\infty, e_{1}, e_2$, with residues equal to $\alpha_{0,1,2,3}$, $(e_1-e_2)^{-1}\beta_{0,1,2,3}$, and $(e_2-e_1)^{-1}\gamma_{0,1}$ (twice).
We see that the residues of the SW differential are directly related to the geometric parameters of the pencil, and the pattern of residues is $(1111), (1111), (22)$.

\paragraph{Case $m=6$:}\label{pencils6}

This time we need a pencil of curves of degree six. Choose eleven points as follows:
\begin{equation}\label{points6}
    p_0, p_1, p_2, p_3, p_4, p_5\in\ell_0\,,\quad q_0, q_1, q_2, \in\ell_1\,,\quad r_0, r_1\in\ell_2\,.
\end{equation}
The curves of the pencil are of degree six, required to pass through 
\begin{equation}
    (p_0p_1p_2p_3p_4p_5q_0^2q_1^2q_2^2r_0^3r_1^3)\,.
\end{equation}
This notation means that each curve of the pencil should have an ordinary double point at each of $q_{0,1,2}$ and an ordinary triple point at $r_0$ and $r_1$. By the same reasoning, there is an eight-parameter family of such pencils up to projective equivalence. The generic curves in the pencil are sextics with three double and two triple points, of geometric genus one.  

Assuming that the lines $\ell_0, \ell_1, \ell_2$ are brought to the form \eqref{lines}, we obtain a pencil of the form \eqref{penh}, with $Q$ homogeneous of degree $6$ and with
\begin{equation}\label{pen42}
P(x,w)=w(x-e_1w)^2(x-e_2w)^3\,.
\end{equation}
The sextic $Q=0$ intersects the lines at points
\begin{equation}\label{nu60}
    p_i=(1:-\alpha_i:0)\,,\quad q_i=(e_1:\beta_i:1)\,,\quad r_i=(e_2:\gamma_i:1)\,.
\end{equation}
The eleven parameters $\alpha_{0,1,2,3,4,5}$, $\beta_{0,1,2}$, $\gamma_{0,1}$ are constrained by 
\begin{equation}
    \sum_i\alpha_{i}+2\sum_i\frac{\beta_{i}}{e_1-e_2}+ {3} \sum_i\frac{\gamma_{i}}{e_2-e_1}=0\,.
\end{equation}
We normalise the curves by making each double point $q_{0,1,2}$ into a pair of distinct points, and each triple point $r_{0,1}$ into three distinct points. The Seiberg--Witten differential \eqref{sw2} has simple poles at the six points over each of $x=\infty, e_{1}, e_2$, with the residues equal to $\alpha_i$, $(e_1-e_2)^{-1}\beta_i$ (repeated twice), and $(e_2-e_1)^{-1}\gamma_i$ (thrice).  The pattern of residues is therefore $(111111), (222), (33)$.

When $Q$ is brought into the form
\begin{equation}\label{pen61}
Q=y^6+Q_2(x,w)y^4+Q_3(x,w)y^3+Q_4(x,w)y^2+Q_5(x,w)y+Q_6(x,w)\,,
\end{equation}
then   
\begin{equation}\label{nu61}
\alpha_0+\alpha_1+\alpha_2+\alpha_3+\alpha_4+\alpha_5=\beta_0+\beta_1+\beta_2=\gamma_0+\gamma_1=0\,.   
\end{equation}
We see, once again, that the geometric parameters of the pencil are directly related to the linear masses. 

\paragraph{Case $m=2$:}\label{pencils2} For completeness, let us give the result for $m=2$, although in this case the answer is well known. This time, we work in the weighted projective plane $\mathbb{P}^2_{1,2,1}$, so $\deg x=\deg w=1$, $\deg y=2$. It has a singular point $(0:1:0)$, and we choose four lines $\ell_{0,1,2,3}$ (which automatically pass through that singular point)\footnote{Another option is to  work in $\mathbb P^2$, in which case the elliptic pencil will have $9$ base points, $3$ of which infinitesimally close (see, for instance, \cite{KMNOY}, Section~2).}. By a linear transformation of $x,w$ we can bring the lines to
\begin{equation}\label{lines2}
    \ell_0:\  w=0\,,\quad\ell_1:\ x-e_1w=0\,,\qquad\ell_2:\ x-e_2w=0\,, \qquad\ell_3:\ x-e_3w=0\,.
\end{equation}
Choose eight distinct points 
\begin{equation}\label{points2}
    p_0, p_1\in\ell_0\,,\quad q_0, q_1\in\ell_1\,,\quad r_0, r_1\in\ell_2\,,\quad s_0, s_1\in\ell_3\,,
\end{equation}
and consider a pencil of curves of weighted homogeneous degree four passing through these points. Hence, the pencil is of the form \eqref{penh}, with  
\begin{equation}
    Q=y^2+Q_1(x,w)y+Q_2(x,w)\,,\qquad \deg Q_1=2\,,\quad \deg Q_2=4\,,
\end{equation}
and        
\begin{equation}
    \qquad P(x,w)=w(x-e_1w)(x-e_2w)(x-e_3w)\,.
\end{equation}
The generic curves in the pencil are smooth elliptic curves. Write
\begin{equation}\label{nu20}
    p_i=(1:-\alpha_i:0)\,,\quad q_i=(e_1:\beta_i:1)\,,\quad r_i=(e_2:\gamma_i:1)\,,\quad s_i=(e_3:\delta_i:1)\,.
\end{equation}
Then the condition that the curve $Q=0$ passes through these points implies that 
\begin{equation}
  \alpha_{0}+\alpha_1+\frac{\beta_{0}+\beta_1}{(e_1-e_2)(e_1-e_3)}+\frac{\gamma_{0}+\gamma_{1}}{(e_2-e_1)(e_2-e_3)}+\frac{\delta_{0}+\delta_1}{(e_3-e_1)(e_3-e_2)}=0\,.  
\end{equation}
By a change of coordinates $y\mapsto y+ax^2+bxw+cw^2$ we can make $Q_1=0$, in which case 
\begin{equation}\label{nu21}
\alpha_0+\alpha_1=\beta_0+\beta_1=\gamma_0+\gamma_1=\delta_0+\delta_1=0\,.
\end{equation}
This reduces the set of parameters to $\alpha_0, \beta_0, \gamma_0, \delta_0$, plus the modular parameter, the cross-ratio of $\infty, e_{1,2,3}$.  

The Seiberg--Witten differential \eqref{sw1} in homogeneous coordinates is
\begin{equation}\label{sw22}
 \lambda=\frac{y\,(wdx-xdw)}{w(x-e_1w)(x-e_2w)(x-e_3w)}\,.
\end{equation}
Its residues at the points $p_{0,1}$, $q_{0,1}$, $r_{0,1}$, $s_{0,1}$ are equal to 
\begin{equation}
\alpha_{0,1}\,,\quad \frac{\beta_{0,1}}{(e_1-e_2)(e_1-e_3)}\,,\quad\frac{\gamma_{0,1}}{(e_2-e_1)(e_2-e_3)}\,,\quad \frac{\delta_{0,1}}{(e_3-e_1)(e_3-e_2)}\,.  
\end{equation}
This relates the geometric parameters of the pencil to the linear masses. The pattern of residues is $(11), (11), (11), (11)$.
Finally, an explicit formula for $Q$ is
\begin{equation}\label{exq2}
    Q=y^2+a_0x(x-e_1w)(x-e_2w)(x-e_3w)+\sum_{i=1,2,3}a_iw^2\prod_{j\ne i}^3\frac{(x-e_jw)}{(e_i-e_j)}\,,
\end{equation}
with $a_0=\alpha_0\alpha_1/2$, $a_1=\beta_0\beta_1/2$, $a_2=\gamma_0\gamma_1/2$, $a_3=\delta_0\delta_1/2$.

\begin{remark}
    Some of the above elliptic pencils appear in \cite{PSZ, PSWZ, SST} in the context of discrete integrable maps and non-standard Kahan discretisation. Note also that in very special cases such pencils and the corresponding continuous hamiltonian dynamics appeared in the study of symmetric monopoles \cite{HMM95}.  
\end{remark}

\section{Quantum curves}\label{sec5}

Since the classical hamiltonian $h(p,q)$ has a natural quantum analogue $\hh$ \eqref{hamex}, we obtain a natural quantisation of the fibration $h(p,q)=z$ in the form of a one-parameter family of ordinary differential equations (ODEs) 
\begin{equation}\label{qc}
    \hh\left(q,\hbar\frac{d}{dq}\right)\psi(q, z)=z\psi(q, z)\,,\quad z\in\C\,.
\end{equation}
Because of the duality, we can also replace $q$ and $d/dq$ by $\alpha$ and $d/d\alpha$, and the couplings $c$ by $c^\vee$, to obtain a quantisation of the spectral curve $h^\vee(k,\alpha)=z$ in the form of a differential equation in the $\alpha$-variable:
\begin{equation}\label{qcd}
    \hh^\vee\left(\alpha,\hbar\frac{d}{d\alpha}\right)\psi^\vee(\alpha, z)=z\psi^\vee(\alpha, z)\,,\quad z\in\C\,.
\end{equation}
We will refer to both families of ODEs as {\it quantum curves}, as they are the same up to a change of notation. Moreover, one can put them together to form a linear PDE (cf. \cite{Jeong:2018qpc, jeong2023parallel}) 
\begin{equation}\label{spcq}
  \left[\hh^\vee\left(\alpha,\hbar\frac{\partial}{\partial\alpha}\right)- \hh\left(q,\hbar\frac{\partial}{\partial q}\right) \right]\Psi=0\,
\end{equation}
which admits a one-parameter family of solutions in the form $\Psi=\psi(q, z)\psi^\vee(\alpha, z)$, $z\in\C$. Note a formal similarity between \eqref{spcq} and \eqref{spc}.

Now we would like to characterise the arising ODEs \eqref{qc} intrinsically. First, they are $\Z_m$-invariant and have regular singularities at the fixed points $q=x_i$. To find the leading exponents at singular points, we pick a nonzero fixed point $q=x_i$ and work in a local coordinate $X=q-x_i$. Let us apply $\hh$ to $X^n$, using the formula \eqref{hamex}. By picking the most singular terms, it is easy to see that 
\begin{equation}
    \hh(X^\lambda)=c(\lambda)X^{\lambda-m}+\dots\,,\qquad c(\lambda)=\prod_{j=0}^{m-1}((\lambda-j)\hbar -\mu_j(x_i))\,,
\end{equation}
where the dots denote terms of higher degree in $X$. (The form of $c(\lambda)$ is dictated entirely by the first term, $w_m$, in \eqref{hamex}.) This tells us that the indicial equation determining the local exponents at $q=x_i$ is $c(\lambda)=0$, from which the local exponents are found as
\begin{equation}\label{locex}
\lambda=j+\mu_j(x_i)\hbar^{-1}\,,\qquad j=0,\dots, m-1.    
\end{equation}
The same result is true for $x_i=0$ because that is how we chose the parameters $\alpha_i$. cf. \eqref{ai}.  

Note that in the cases $m=4,6$ we have repetitions among $\mu_j(x_i)$; as a result, some of the local exponents differ by an integer. This is known as {\it resonance}, and in general it leads to the appearance of Jordan blocks in the local monodromy (and logarithmic terms in the local solutions). However, this does not happen in our case. Indeed, by \cite{EM08edo}, Section 6.2, the monodromy representation factors through the orbifold Hecke algebra which is semisimple in our situation (cf. the proof of Theorem 7.1 in \cite{EFMV11ecm}). Hence, we obtain the following result.

\begin{prop}\label{semi} For generic parameters of the Cherednik algebra,
the differential equations \eqref{qc} have local exponents given by \eqref{locex} and \emph{semisimple} local monodromy around each singular point.   \end{prop}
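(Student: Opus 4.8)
The plan is to establish the two assertions of Proposition \ref{semi} separately: first the computation of local exponents, which is essentially algebraic and follows from the structure of the hamiltonian \eqref{hamex}; second the semisimplicity of the local monodromy, which is the substantive point. For the exponents, I would work in the local coordinate $X=q-x_i$ near a fixed point and observe that the indicial polynomial is governed entirely by the leading term $w_m$ of $\hh$, since the corrections $v^{(i)}_j$ contribute lower-order singular behaviour in the indicial analysis. Applying $\hh$ to $X^\lambda$ and extracting the most singular coefficient $c(\lambda)=\prod_{j=0}^{m-1}((\lambda-j)\hbar-\mu_j(x_i))$ gives the indicial equation $c(\lambda)=0$, whose roots are precisely \eqref{locex}. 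The case $x_i=0$ needs the observation, already noted after \eqref{ai}, that the parameters $\alpha_j$ were chosen so that the same principal part holds there.

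The heart of the proposition is semisimplicity of the local monodromy, and here the difficulty is that in the cases $m=4,6$ the exponents $\lambda=j+\mu_j(x_i)\hbar^{-1}$ can differ by integers whenever the $\mu_j(x_i)$ repeat (i.e. whenever the stabiliser is a proper subgroup $\Z_{m_i}\subsetneq\Z_m$). Such \emph{resonances} would, for a generic Fuchsian ODE, force Jordan blocks in the local monodromy and logarithmic terms in the local solutions. The plan is to rule this out not by a direct analysis of the ODE but by invoking the representation-theoretic description of the monodromy. Following \cite{EM08edo}, Section 6.2, the monodromy representation of these equations factors through the \emph{orbifold Hecke algebra} attached to the quotient orbifold $\E/\Z_m$ with the marked fixed points. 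The key structural input is that this Hecke algebra is semisimple for generic values of the parameters (equivalently, the relevant $q$-parameters of the Hecke algebra are generic), exactly as in the proof of Theorem 7.1 in \cite{EFMV11ecm}. Semisimplicity of the algebra through which the monodromy factors immediately implies that every monodromy operator, being the image of a group-algebra element, acts semisimply; in particular the local monodromy around each puncture is diagonalisable and no Jordan blocks arise despite the resonances.

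The main obstacle is therefore to make precise that the monodromy of \eqref{qc} genuinely factors through the orbifold Hecke algebra and that semisimplicity of the latter transfers to semisimplicity of the \emph{local} monodromy generators, not merely the global representation. I would address this by identifying the solution sheaf of \eqref{qc} with the $B_{\hbar,c}(\E)$-module (equivalently, the flat connection) whose monodromy \cite{EM08edo} computes, checking that the local exponents \eqref{locex} match the eigenvalue data built into the Hecke-algebra relations at each marked point, so that the local monodromy around $x_i$ is the image of the corresponding generator of the orbifold braid group. Semisimplicity of the Hecke algebra for generic parameters then forces that generator's image to be semisimple. The genericity hypothesis in the statement is precisely what guarantees the Hecke parameters avoid the non-semisimple locus; I would note that this is the same genericity already used in Proposition \ref{prop5}, so no new constraint on $c$ is introduced.
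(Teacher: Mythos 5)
Your proposal follows essentially the same route as the paper: the local exponents are read off from the indicial equation governed by the leading term $w_m$ (with the $x_i=0$ case handled by the choice of $\alpha_j$ via \eqref{ai}), and semisimplicity despite the resonances for $m=4,6$ is obtained by factoring the monodromy through the orbifold Hecke algebra of \cite{EM08edo}, Section 6.2, which is semisimple for generic parameters as in the proof of Theorem 7.1 of \cite{EFMV11ecm}. One small caution: semisimplicity of the Hecke algebra alone does not make an arbitrary element act semisimply — the operative point is that each local generator satisfies the degree-$m_i$ Hecke relation with distinct roots for generic parameters, hence is diagonalisable — but this matches the level of detail in the paper's own argument.
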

Using the $\Z_m$-symmetry, we may view \eqref{qc} as an equation on the Riemann sphere with the coordinate $x=u(q)$,
\begin{equation}\label{qcr}
    w^{-m}(\hh-z)\psi=0\,,
\end{equation}
where the expressions for $w^{-m}\hh$ and $w^m$ are given in Section \ref{rf}.
We refer to this as \emph{quantum curves in the rational form}. Each is a Fuchsian equation whose singular points are the orbifold points $e_i\in\mathbb{P}^1=\E/\Z_m$. If $\Z_{m_i}$ is the local symmetry group around $e_i$ then the change of coordinate $q\mapsto x$ locally looks like $X\mapsto X^{m_i}$. Therefore, the local exponents of \eqref{qcr} at $x=e_i$ are obtained from \eqref{locex} by dividing by $m_i$. Hence, we obtain the following characterisation of the Fuchsian ODEs \eqref{qcr}.

\begin{prop}\label{semif} For generic parameters $c_l(x_i)$ of the Cherednik algebra, the Fuchsian differential equations \eqref{qcr} have {\it semisimple} local monodromy around each orbifold point $e_i$, with local exponents 
\begin{equation}\label{locexr}
\lambda_j(e_i)=\frac{j+\mu_j(x_i)\hbar^{-1}}{m_i}\,,\qquad j=0,\dots, m-1,    
\end{equation}
where $x_i\in\E$ is any of the preimages of $e_i\in\mathbb P^1$ under the map $q\mapsto x=u(q)$, and $m_i$ is the order of the orbifold point $e_i$.\end{prop}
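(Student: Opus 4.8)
The plan is to derive Proposition~\ref{semif} as a direct corollary of Proposition~\ref{semi} together with the explicit local structure of the coordinate change $q \mapsto x = u(q)$ near each orbifold point. Since the content of Proposition~\ref{semif} is essentially a translation of the already-established Proposition~\ref{semi} from the $q$-variable on $\E$ to the $x$-variable on $\mathbb{P}^1 = \E/\Z_m$, the work is to track how local exponents and the semisimplicity of monodromy behave under this branched cover.

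First I would fix a nonzero orbifold point $e_i \in \mathbb{P}^1$ and one of its preimages $x_i \in \E$ under $q \mapsto x = u(q)$, with local symmetry group $\Z_{m_i}$. The key local observation is that, by the definition of $u$ in Table~\ref{table1} as a $\Z_m$-invariant function whose stabiliser at $x_i$ is exactly $\Z_{m_i}$, the coordinate change looks near $x_i$ like $X \mapsto X^{m_i}$ where $X = q - x_i$ (up to a nonvanishing holomorphic factor). This is the same branching behaviour already recorded in Section~\ref{sec3} in the analysis of the Seiberg--Witten differential. Under a substitution $x - e_i \sim X^{m_i}$, a local solution behaving as $X^{\lambda}$ near $q = x_i$ transforms into one behaving as $(x-e_i)^{\lambda/m_i}$ near $x = e_i$. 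Hence the local exponents \eqref{locex} of the equation \eqref{qc}, namely $\lambda = j + \mu_j(x_i)\hbar^{-1}$, get divided by $m_i$, yielding precisely \eqref{locexr}. I would note that this is consistent with the $\Z_m$-invariance of $\hh$: the invariance guarantees that solutions of \eqref{qc} descend to (multivalued) functions of $x$, so that \eqref{qcr} is genuinely a Fuchsian equation on $\mathbb{P}^1$ with singularities only at the $e_i$.

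Next I would address the semisimplicity of the local monodromy around $e_i$. The monodromy of \eqref{qcr} around $e_i$ corresponds to the monodromy of \eqref{qc} around the $\Z_m$-orbit of $x_i$, combined with the action of the stabiliser $\Z_{m_i}$; concretely, a small loop around $e_i$ in $\mathbb{P}^1$ lifts to a path on $\E$ realising the local symmetry generator. Since Proposition~\ref{semi} already asserts semisimple local monodromy of \eqref{qc} around each fixed point, and since passing to the $\Z_m$-quotient does not introduce Jordan blocks (the relevant local monodromy is a power/quotient of an already diagonalisable operator, commuting with the semisimple stabiliser action), semisimplicity is preserved. The cleanest way to phrase this is to invoke the same input as in Proposition~\ref{semi}: by \cite{EM08edo}, Section~6.2, the monodromy factors through the orbifold Hecke algebra, which is semisimple for generic parameters, and this semisimplicity is a property of the orbifold fundamental group representation that is insensitive to whether we read it upstairs on $\E$ or downstairs on $\mathbb{P}^1$.

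I expect the \textbf{main obstacle} to be making the monodromy argument fully rigorous rather than heuristic, specifically verifying that no resonance-induced logarithmic terms are created by the descent even when the exponents \eqref{locexr} differ by integers (which happens in the $m=4,6$ cases where the $\mu_j(x_i)$ repeat). The safe route is not to argue directly about Jordan blocks under the quotient, but to appeal once more to the semisimplicity of the orbifold Hecke algebra, which controls monodromy at the orbifold points directly and thereby rules out logarithms regardless of integer-differing exponents; this is exactly the mechanism already used to prove Proposition~\ref{semi}, so the genericity hypothesis on the parameters $c_l(x_i)$ is inherited unchanged.
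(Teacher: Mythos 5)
Your proposal is correct and follows essentially the same route as the paper: the local exponents in the $x$-variable are obtained from those of \eqref{locex} by observing that the coordinate change $q\mapsto x=u(q)$ looks locally like $X\mapsto X^{m_i}$ and dividing by $m_i$, while semisimplicity (including the resonant cases $m=4,6$) is obtained from the factorisation of the monodromy through the orbifold Hecke algebra as in \cite{EM08edo}, exactly as in the proof of Proposition~\ref{semi}. Your closing remark that the Hecke-algebra statement lives on the orbifold $\mathbb P^1=\E/\Z_m$ and is therefore the natural home of the semisimplicity claim is a sound way to avoid arguing directly about Jordan blocks under the quotient, and matches the paper's intent.
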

Below we describe the resulting Fuchsian equations case by case. For simplicity, set $\hbar=1$ so the equations we consider are of the form
\begin{equation}
    \frac{d^m\psi}{dx^m}+A_{1}\frac{d^{m-1}\psi}{dx^{m-1}}+\dots +A_m\psi=0\,.
\end{equation}
Here $A_i=A_i(x)$ are rational functions satisfying $A_i\to 0$ as $x\to\infty$.

\paragraph{Case $m=3$:}
Consider Fuchsian equations of order $3$ with three singular points $x=\infty, e_1, e_2$ (which may be taken as $\infty, 0, 1$) and with prescribed local exponents  
\begin{equation}
    \alpha_{0,1,2}\,,\quad \beta_{0,1,2}, \quad \gamma_{0,1,2}\,,\quad\text{with\ \ }\sum_i\alpha_{i}+\sum_i\beta_{i}+\sum_i\gamma_{i}=3\,. 
\end{equation}
This means that the local monodromy around $x=\infty$ has eigenvalues $e^{2\pi i\alpha_{0,1,2}}$, and similarly for $x=e_{1,2}$. The condition on the sum of local exponents is known as the {\it Fuchs relation}. 
Typically, prescribing local exponents does not determine the equation uniquely: there may  be additional parameters called {\it accessory parameters}. In this case there is one such parameter, corresponding to the variable $z$ in \eqref{qcr}.

\paragraph{Case $m=4$:}
Consider Fuchsian equations of order $4$ with three singular points $x=\infty, e_1, e_2$ and with prescribed local exponents  
\begin{equation}
    \alpha_{0,1,2,3}\,,\quad \beta_{0,1,2,3}, \quad \gamma_{0,1},\quad 1+\gamma_{0,1}\,,\quad\text{with\ \ }
    \sum_i\alpha_{i}+\sum_i\beta_{i}+ {2} \sum_i\gamma_{i}=4\,. 
\end{equation}
The last condition says that the total sum of local exponents is $6$, which is the Fuchs relation. 
In addition, in this case, some of the local exponents differ by an integer, so one expects local solutions to contain logarithms. Imposing {\it semi-simplicity} of the local monodromy (absence of logarithms) at $x=e_2$, one obtains a one-parametric family of such equations, with one accessory parameter, $z$. 

\paragraph{Case $m=6$:}
Consider Fuchsian equations of order $6$ with three singular points $x=\infty, e_1, e_2$ and with prescribed local exponents  
\begin{equation}
    \alpha_{0,1,2,3,4,5}\,,\quad \beta_{0,1,2},\quad 1+\beta_{0,1,2}\,, \quad \gamma_{0,1},\quad 1+\gamma_{0,1},\quad 2+\gamma_{0,1}\,,\quad\sum_i\alpha_{i}+ {2} \sum_i\beta_{i}+ {3} \sum_i\gamma_{i}=6\,. 
\end{equation}
Again, since some local exponents differ by an integer one expects local solutions to contain logarithmic terms. Imposing the condition of semi-simplicity of the local monodromy at $x=e_{1,2}$, one obtains a one-parameter family of such equations, with one accessory parameter, $z$.

\paragraph{Case $m=2$:} This is the well-known case of the {\it Heun equation}, a Fuchsian equation of order $2$ with four singular points $x=\infty, e_1, e_2, e_3$ (which may be taken as $\infty, 0, 1, t$) and with prescribed local exponents  
\begin{equation}
    \alpha_{0,1}\,,\quad \beta_{0,1},\quad \gamma_{0,1}\,, \quad \delta_{0,1}\,,\quad\text{with\ \ }\alpha_{0}+\alpha_{1}+\beta_{0}+\beta_1+\gamma_0+\gamma_1+\delta_0+\delta_1=2\,. 
\end{equation}
Such Fuchsian equations have one accessory parameter, $z$.

\begin{remark}
    Comparing the parameters $\alpha_i$, etc., in the above description with the local exponents given in Proposition \ref{semif}, we see that they are normalised so that 
    \begin{align*}
    \sum_i\alpha_i&=\sum_i\beta_i=\sum_i\gamma_i=1\quad &(m=3)\,,\\
    \sum_i\alpha_i&=\sum_i\beta_i=3/2\,,\quad \sum_i\gamma_i=1/2 &\quad(m=4)\,,\\
    \sum_i\alpha_i&=5/2\,,\quad \sum_i\beta_i=1\,,\quad \sum_i\gamma_i=1/2 &\quad(m=6)\,,\\
    \sum_i\alpha_i&=\sum_i\beta_i=\sum_i\gamma_i=\sum_i\delta_i=1/2\quad &(m=2)\,.
    \end{align*}
Such a normalisation can be always achieved by a suitable gauge transformation. \end{remark}

\section{Further connections}\label{sec6}

Let us describe some other contexts where closely related objects appear.

\subsection{Hitchin systems}
Hitchin systems on algebraic curves are known as a rich source of complex integrable systems \cite{Hitchin87}. For curves of genus $g=0,1$ one needs to allow Higgs fields to have poles \cite{GN94, Markman94, Nekr96}, and several many-body integrable systems have already been identified with Hitchin systems in that way. Our cases can be interpreted as Hitchin systems on the orbifold tori which are Riemann spheres with with three or four punctures as shown in Fig.~\ref{fig:classS}. This agrees with the class-S theory description given in \cite{BBT09}, and, equivalently, M-theory orbifold construction in \cite{Gukov:1998kt}.

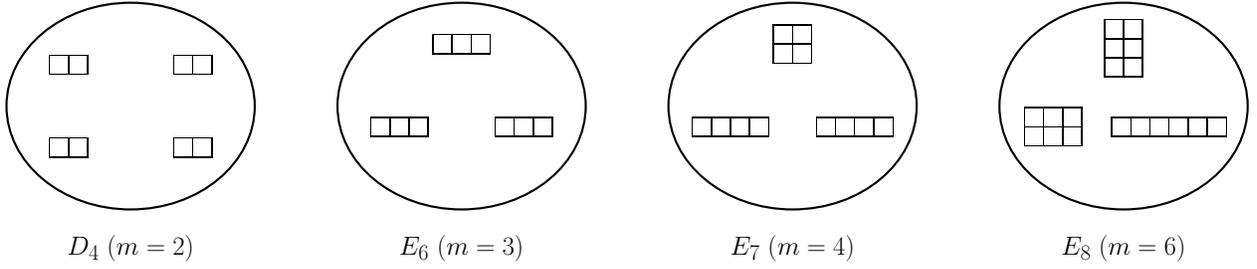
\begin{figure}[h]
\centering
\begin{tikzpicture}[thick,scale=0.55, every node/.style={scale=0.40}]
\begin{scope}[xshift=-16cm]
   \draw (0,0) ellipse (3cm and 2.5 cm);
    \node at (1.5,1) {\ydiagram{2}};
    \node at (1.5,-1) {\ydiagram{2}};
    \node at (-1.5,1) {\ydiagram{2}};
    \node at (-1.5,-1) {\ydiagram{2}};
    \node[anchor=north] at (0,-3){\huge $D_4\  (m=2)$};
\end{scope}
\begin{scope}[xshift=-8cm]
     \draw (0,0) ellipse (3cm and 2.5 cm);
    \node at (0,1.5) {\ydiagram{3}};
    \node at (-1.5,-0.5) {\ydiagram{3}};
    \node at (1.5,-0.5) {\ydiagram{3}};
    \node[anchor=north] at (0,-3){\huge $E_6\  (m=3)$};
\end{scope}
\begin{scope}[xshift=0cm]
        \draw (0,0) ellipse (3cm and 2.5 cm);
    \node at (0,1.5) {\ydiagram{2,2}};
    \node at (-1.5,-0.5) {\ydiagram{4}};
    \node at (1.5,-0.5) {\ydiagram{4}};
    \node[anchor=north] at (0,-3){\huge $E_7\  (m=4)$};
\end{scope}
\begin{scope}[xshift=8cm]
      \draw (0,0) ellipse (3cm and 2.5 cm);
    \node at (0,1.4) {\ydiagram{2,2,2}};
    \node at (-1.7,-0.5) {\ydiagram{3,3}};
    \node at (1.1,-0.5) {\ydiagram{6}};  
    \node[anchor=north] at (0,-3){\huge $E_8\  (m=6)$};
\end{scope}
\end{tikzpicture}
\caption{Three and four punctured spheres corresponding to $(T^2\times\C)/\Z_m$ with fixed points (``punctures'') labeled by Young diagrams representing partitions of $m$.}
\label{fig:classS}
\end{figure}
In the following, we explain how to obtain the Hitchin system description from the Lax presentation.  First, by looking at the properties of the Lax matrix $L=L(\alpha)$, we can recognize $\tilde\phi:=L(\alpha)d\alpha$ as a $\Z_m$-equivariant Higgs field on $\E$. Namely, take $\C\times\C^m$ with the $\Z_m$-action $\omega.(\alpha, \xi)=(\omega^{-1}\alpha, S\xi)$, where $S$ is the matrix \eqref{s}. This induces a $\Z_m$-action on the total space of $\tilde E:=\oplus_{i=0}^{m-1}\mathcal L_{\omega^{-i}q}$, where the line bundle $\mathcal L_q$ was defined in Remark \ref{pic}. This gives a $\Z_m$-equivariant element $\tilde\phi\in\mathrm{End}\tilde E\otimes \Omega^1_\E(\sum x_i)$. At the punctures $\tilde\phi$ has simple poles with $\mathrm{res}\,\tilde\phi|_{\alpha=x_i}$ diagonalisable, and with prescribed eigenvalues $\mu_j^\vee(x_i)$. Further, by modifications at the fixed points one can make the pair $(\tilde E, \tilde\phi)$ $\Z_m$-invariant; the modified pair can then be obtained as a pullback of some Higgs bundle $(E, \phi)$ on $\mathbb P^1=\E/\Z_m$. That is how in general one relates equivariant Higgs bundles with Higgs bundles on orbifolds, which in turn are identified with (weakly) parabolic Higgs bundles \cite{NS95} (see also \cite{groechenig2014hilbert, garcia2020finite} for related studies). The conclusion is that the phase space of our integrable system through the map $(p,q)\mapsto L(\alpha)\mapsto\tilde\phi\mapsto\phi$ gets identified with an open subset of the moduli space $\mathcal M^\vee$ of parabolic $\mathrm{GL}_m$ Higgs bundles on $\mathbb P^1$. The parabolic data consists of the eigenvalues/eigenspaces of the residues of $\phi$ at the orbifolded points (punctures) in $\mathbb P^1$, with three punctures for $m=3,4,6$ and four punctures for $m=2$. (The definition of $\mathcal M^\vee$ also requires a choice of parabolic weights, but this is not important since we work on an open subspace of the moduli space.) The moduli space $\mathcal M^\vee$ carries a structure of a complex  integrable system: the Hitchin fibration and Hitchin system. This identifies our integrable system with the Hitchin system on an open subset of $\mathcal M^\vee$. The Hitchin fibration is built from the family of spectral curves which coincide with our spectral curves $\Sigma_z^\vee=\Ssigma_z^\vee/\Z_m$. Since these have genus one, the fibers are isomorphic to $\Sigma_z^\vee$, $z\in\C$. The dynamics of the Hitchin system is linear along the fibers. Because the dynamics in $p,q$ coordinates along the elliptic curves $\Sigma_z$ is also linear, we conclude that
\begin{equation}\label{ms}
    \Sigma_z\cong \Sigma_z^\vee\quad \forall\ z
\end{equation}
(where $z$ is assumed to be generic so that $\Sigma_z, \Sigma_z^\vee$ are non-singular).
This property is non-obvious; combined with \eqref{dul} it is reminiscent of the SYZ-type {\it mirror symmetry} for Hitchin fibrations due to Donagi--Pantev \cite{DP}. See also Remark \ref{rai} below. 

\begin{remark}
  Here is the sketch of how to observe \eqref{ms} without resorting to Hitchin systems. Starting from the Lax matrix $L(p,q;\alpha)$ and its spectral curve $\Ssigma^\vee:\, \det(L-kI)=0$, we view the family of eigenlines $(L-kI)\ell=0$ parameterised by $(k,\alpha)\in\Ssigma$ as a line bundle $\mathcal L$ over $\Ssigma^\vee$. 
  The dynamics $p(t), q(t)$ induces a dynamics $\mathcal L(t)$ on the Jacobian $\mathrm{Jac}(\Ssigma^\vee)$. One then checks the following two properties: (1) the induced dynamics on $\mathrm{Jac}(\Ssigma^\vee)$ is linear, and (2) $\mathcal L^{s}\cong \mathcal L$ for any $s\in\Z_m$. As a result, the linear motion along $\Sigma$ in the phase space is mapped onto a linear motion along a $\Z_m$-fixed subtorus in the Jacobian of $\Ssigma^\vee$,  
  which is isomorphic to $\mathrm{Jac}(\Ssigma^\vee/\Z_m)\cong \Sigma^\vee$. This implies the isomorphism \eqref{ms}.            
\end{remark}

\begin{remark} In the case $m=4$, the quantum hamiltonian $\hh$ appeared in the studies of multipoint conformal blocks \cite{Buric:2021ttm}.
\end{remark}

\subsection{Local systems, star-shaped quivers, and generalised DAHAs }
\label{localsys}

According to the non-abelian Hodge correspondence \cite{Si1, Si2}, the moduli space of Higgs bundles (the Dolbeaut space $\mathcal M_{Dol}$) over a complex algebraic curve $X$, has two other avatars, $\mathcal M_{dR}$ (de Rham) and $\mathcal M_{B}$ (Betti). The three spaces are diffeomorphic: $\mathcal M_{Dol}$ and $\mathcal M_{dR}$ are obtained from each other by rotating the complex structure within a hyper-K\"ahler family, while $\mathcal M_{dR}$ and $\mathcal M_{B}$ are identified as complex-analytic spaces by the Riemann--Hilbert correspondence. Note that we need a version of $\mathcal M_{Dol}$ for curves with punctures and the so-called weakly parabolic Higgs bunbdles. The existence of a hyper-K\"ahler structure on $\mathcal M_{Dol}$ for this case is due to Biquard and Bolach \cite{wNABH}, cf. \cite{Si2, Kon93, Nak96} for earlier partial results. 

From that perspective, if $\mathcal M$ is one of our moduli spaces of Higgs bundles on the punctured Riemann sphere, then the corresponding de Rham moduli space is precisely one of the four spaces of Fuchsian systems considered by Boalch \cite{boalch09}. As he explains, these moduli spaces are nothing but the ALE spaces considered by Kronheimer \cite{Kr} which can also be recast as quiver varieties \cite{Na} associated with the affine Dynkin quivers of type ${D}_4$, ${E}_6$, ${E}_7$, ${E}_8$ (these are precisely the star-shaped affine Dynkin quivers).  
Note that there are corresponding 3d $\mathcal{N} =4$ quiver gauge theories which are the {\em mirror} theories for the circle reduction of $D_4$, $E_6$, $E_7$, and $E_8$ theories \cite{ISmirror96, BTXmirror}. 
The ${D}_4$ case ($m=2$ case in our language) corresponds to the family of $2\times 2$ Fuchsian systems on $\mathbb P^1$ with four singularities and prescribed local exponents at the singular points; it has been studied from various angles in Painlev\'e theory and related contexts, see in particular \cite{Arinkin, ArinkinLysenko, ArinkinBorodin, Oblezin, LoraySaitoSimpson}. The $E_{6,7,8}$ cases ($m=3,4,6$) are also closely related to Painlev\'e theory, but to difference rather than continuous Painlev\'e equations. As explained in Sections 6 and 7 of \cite{boalch09}, these are essentially the surfaces from Sakai's list \cite{Sakai} within his geometric approach to Painlev\'e equations (they correspond to the cases {\it Add1, Add2, Add3} in \cite{Sakai}). From that perspective, our duality $c\mapsto c^\vee$ and \eqref{ms} appears to be similar to the Okamoto transformation for Painlev\'e VI, interpreted in terms of the middle convolution in \cite{filipuk2006symmetries, boalch09} (cf. Remark \ref{rai} below).

On the Betti side, we have spaces of the monodromy data of the above Fuchsian systems. According to the general theory \cite{CBShaw}, these are modelled by multiplicative quiver varieties associated to the affine Dynkin quivers of type ${D}_4$, ${E}_{6,7,8}$. From yet another perspective, they appear in the work of Etingof, Oblomkov, and Rains \cite{EOR} on generalised double affine Hecke algebras, or DAHAs. These varieties can be characterised as certain affine del Pezzo surfaces, see Sections 6 and 9 of \cite{EOR}.

\begin{remark}\label{rai}
    As Eric Rains pointed out to us, the duality isomorphism \eqref{ms} can be seen from the Weyl-group action on $\mathcal M_{dR}$. Indeed, $\mathcal M_{dR}$ being a quiver variety of type $D_4/E_6/E_7/E_8$ admits such an action, see \cite{crawley2003matrices, boalch09}. One then checks that the transformation $c\mapsto c^\vee$ can be identified with a suitable element of the Weyl group (see Appendix \ref{a:weyl}). By taking a limit to the Higgs moduli space, we conclude that the corresponding Hitchin fibrations are isomorphic.          
\end{remark}

\subsection{Quantum curves and opers} 

As explained above, we can view the fibration $\{\Sigma^\vee_z\}$ on $T^*\E/\Z_m$ as a Hitchin fibration over a punctured $\mathbb P^1$. Quantisation of $\Sigma^\vee_z$ gives a pencil of Fuchsian ODEs with prescribed local monodromy data. The monodromy around each puncture is semi-simple and has prescribed eigenvalues (with repetitions in cases $m=4, 6$). If we write $G=\mathrm{GL}_m$ and denote by $M_i\in G$ the monodromy around $x=e_i$, then $M_i$ belongs to a particular semisimple conjugacy class $[\Lambda_i]:=\{g\Lambda_ig^{-1}\,|\,g\in G\}$. For example, for $m=6$ $\Lambda_{0}$ is a generic diagonal matrix while $\Lambda_{1,2}$ are of the form $\mathrm{diag}(a, a, b, b, c, c)$ and $\mathrm{diag}(d, d, d, e, e, e)$, respectively; the global monodromy in this case represents a point on the character variety
\begin{equation}
    \mathcal M_B:=\{M_0, M_1, M_2\in G\,|\, M_0M_1M_2=\Id\,,\   M_i\in [\Lambda_i]\}\,//\,G\,,
\end{equation}
which is the Betti moduli space mentioned above. (These character varieties are  precisely the affine del Pezzo surfaces from \cite{EOR}.) 
Each quantum curve can also be viewed as a rank $m$ trivial bundle over $\mathbb P^1$ with (flat) connection, so it represents a point in the de Rham space, $\mathcal M_{dR}$, and is in the form of a $\mathrm{GL}_m$-oper. We therefore observe that the pencil of quantum curves can be associated with the one-dimensional Lagrangian subvariety of opers, $\mathcal{L}\subset \mathcal M_{dR}$. This illustrates the general philosophy, going back to Nekrasov--Rosly--Shatashvili \cite{NRS} and Gaiotto \cite{Gaiotto:2014bza}, that quantizing spectral curves of a Hitchin system should produce the variety of opers in the corresponding de Rham moduli space. Note that for compact curves of genus $\ge 2$, a result of that kind has been established in \cite{DumitrescuMulase}, but the case of curves with punctures remains open in general. Note also that in the case of superconformal gauge theories, e.g., $SU(2)$ gauge theory with $N_f =4$, the quantum spectral curves can be studied with the help of instanton counting \cite{Jeong:2018qpc}. Additionally, opers for the massless $E_6$ theory have been investigated through the exact WKB method in \cite{HollandsN_WKBT3}.

\subsection{5d theories} There is an approach to 4d $\mathcal N=2$ SQFTs which allows us to view them as a result of compactifying a 5d theory on a circle. It is then natural to expect that the classical and quantum curves of the 4d theory can be obtained as a limit of the corresponding 5d families. For 5d theories that can be constructed in string theory using five-brane webs, there are systematic approaches for deriving the SW curves on $\mathbb{R}^4 \times S^1$ \cite{Aharony:1997bh, KimYagi14}. The curve can be expressed in terms of a polynomial equation in $ (t, w) \in \mathbb{C}^\ast \times \mathbb{C}^\ast$ with monomials associated with the vertices of a 2d dot diagram which is the dual graph of a 5-brane web, and the coefficients encode the moduli and parameters of the 5d theories.  This is particularly applicable to the 5d theories corresponding to the 4d $D_4$ and $E_{6,7,8}$ theories, which are known as Seiberg's $E_n$ theories. The schematic representation of the webs for Seiberg's theories is shown in Fig.~\ref{fig:5dweb}. 
\begin{figure}[htbp]
\centering
\begin{tikzpicture}[thick,scale=0.46, every node/.style={scale=0.5}]
\begin{scope}[xshift=-9cm]
\draw (1.0,-3) -- (1.0,0);
\draw (-1.0,-3) -- ( -1.0,0);
\draw[fill] (1.0,-3) circle (0.15);
\draw[fill] (-1.0,-3) circle (0.15);
\draw (1.0,3) -- (1.0,0);
\draw (-1.0,3) -- ( -1.0,0);
\draw[fill] (1.0,3) circle (0.15);
\draw[fill] (-1.0,3) circle (0.15);

\draw (-3,1.0) -- (0, 1.0);
\draw (-3,-1.0) -- (0, -1.0);
\draw[fill] (-3,1.0) circle (0.15);
\draw[fill] (-3,-1.0) circle (0.15);

\draw (3,1.0) -- (0, 1.0);
\draw (3,-1.0) -- (0, -1.0);
\draw[fill] (3,1.0) circle (0.15);
\draw[fill] (3,-1.0) circle (0.15);

\fill[fill = white] (0,0) circle (2);
\fill[gray!30] (0,0) circle (2);
    \node[anchor=north] at (0,-4){\LARGE $E_5$};
\end{scope}

\begin{scope}[xshift=0cm]
\draw (1.0,-3) -- (1.0,0);
\draw (0.0,-3) -- (0,0);
\draw (-1.0,-3) -- ( -1.0,0);
\draw[fill] (1.0,-3) circle (0.15);
\draw[fill] (0,-3) circle (0.15);
\draw[fill] (-1.0,-3) circle (0.15);

\draw (-3,-1.0) -- (0, -1.0);
\draw (-3,0.0) -- (0, 0);
\draw (-3,1.0) -- (0, 1.0);
\draw[fill] (-3,-1.0) circle (0.15);
\draw[fill] (-3,0) circle (0.15);
\draw[fill] (-3,1.0) circle (0.15);

\draw (1.8,3.2) -- (-0.7,0.7);
\draw (2.5,2.5) -- (0,0);
\draw (3.2,1.8) -- (0.7,-0.7);
\draw[fill] (1.8,3.2) circle (0.15);
\draw[fill] (2.5,2.5) circle (0.15);
\draw[fill] (3.2,1.8) circle (0.15);

\fill[fill = white] (0,0) circle (2);
\fill[gray!30] (0,0) circle (2);
\node[anchor=north] at (0,-4){\LARGE $E_6$};
\end{scope}

\begin{scope}[xshift=9cm]
\draw (1.2,-3) -- (1.2, 0);
\draw (0.4,-3) -- (0.4, 0);
\draw (-0.4,-3) -- (-0.4, 0);
\draw (-1.2,-3) -- ( -1.2, 0);
\draw[fill] (0.8,-3) ellipse (0.45cm and 0.25 cm);
\draw[fill] (-0.8,-3) ellipse (0.45cm and 0.25 cm);

\draw (-3,-1.2) -- (0, -1.2);
\draw (-3,-0.4) -- (0, -0.4);
\draw (-3,0.4) -- (0, 0.4);
\draw (-3,1.2) -- (0, 1.2);
\draw[fill] (-3,-1.2) circle (0.15);
\draw[fill] (-3,-0.4) circle (0.15);
\draw[fill] (-3,0.4) circle (0.15);
\draw[fill] (-3,1.2) circle (0.15);

\draw (3.4,1.6) -- (0.9,-0.9);
\draw (2.8,2.2) -- (0.3,-0.3);
\draw (2.2,2.8) -- (-0.3,0.3);
\draw (1.6,3.4) -- (-0.9,0.9);
\draw[fill] (3.4,1.6) circle (0.15);
\draw[fill] (2.8,2.2) circle (0.15);
\draw[fill] (2.2,2.8) circle (0.15);
\draw[fill] (1.6,3.4) circle (0.15);

\fill[fill = white] (0,0) circle (2);
\fill[gray!30] (0,0) circle (2);
\node[anchor=north] at (0,-4){\LARGE $E_7$};
\end{scope}

\begin{scope}[xshift=18cm]
\draw (-3,-1.0) -- (-1.732,-1.0);
\draw (-3,-0.6) -- (-1.908,-0.6);
\draw (-3,-0.2) -- (-1.99,-0.2);
\draw (-3,0.2) -- (-1.99,0.2);
\draw (-3,0.6) -- (-1.908,0.6);
\draw (-3,1.0) -- (-1.732,1.0);
\draw[fill] (-3,-0.8) ellipse (0.15cm and 0.25cm);
\draw[fill] (-3,0.0) ellipse (0.15cm and 0.25cm);
\draw[fill] (-3,0.8) ellipse (0.15cm and 0.25cm);

\draw (1.0,-3) -- (1.0,1.732);
\draw (0.6,-3) -- (0.6,1.908);
\draw (0.2,-3) -- (0.2,1.99);
\draw (-0.2,-3) -- (-0.2,1.99);
\draw (-0.6,-3) -- (-0.6,1.908);
\draw (-1.0,-3) -- (-1.0,1.732);
\draw[fill] (-0.6,-3) ellipse (0.45cm and 0.25cm);
\draw[fill] (0.6,-3) ellipse (0.45cm and 0.25cm);

\draw (3.5,1.5) -- (1.0,-1.0);
\draw (3.1,1.9) -- (0.6,-0.6);
\draw (2.7,2.3) -- (0.2,-0.2);
\draw (2.3,2.7) -- (-0.2,0.2);
\draw (1.9,3.1) -- (-0.6,0.6);
\draw (1.5,3.5) -- (-1.0,1.0);

\draw[fill] (3.5,1.5) circle (0.15);
\draw[fill] (3.1,1.9) circle (0.15);
\draw[fill] (2.7,2.3) circle (0.15);
\draw[fill] (2.3,2.7) circle (0.15);
\draw[fill] (1.9,3.1) circle (0.15);
\draw[fill] (1.5,3.5) circle (0.15);

\fill[fill = white] (0,0) circle (2);
\fill[gray!30] (0,0) circle (2);
\node[anchor=north] at (0,-4){\LARGE $E_8$};
\end{scope}

\end{tikzpicture}
\caption{Shown above are the five-brane webs corresponding to 5d Seiberg's theories. For all cases, the internal part of the diagram is represented by a large black circle, while only the external legs are illustrated in detail. Black dots are used to represent seven-branes and lines represent five-branes. }
\label{fig:5dweb}
\end{figure}
The SW curves for Seiberg's $E_n$ theories have been obtained using 5-brane webs in \cite{KimYagi14}. They have been further quantised in \cite{MoriyamaQC}, see also \cite{moriyama2021quantum}. We have checked that our results are consistent with those in \cite{KimYagi14, MoriyamaQC}, see \cite{4d5dpaper}. More recently, the classical curves of these $5d$ theories were studied in \cite{Bershtein2024} in connection with $q$-Painlev\'e equations and cluster integrable systems. In that context, a duality similar to the equation \eqref{spc} was observed. We expect that our duality can be recovered as a limit of this duality for 5d curves.       

\section*{Acknowledgement}
We are grateful to P.~Bolach, C.~Closset, P.~Etingof, P.~Vanhaecke, A.~King, O.~Lechtenfeld, K.~Lee, P.~Longhi, J.~Manschot, M.~Martone, M.~Mazzocco, J.~Minahan, N.~Nekrasov, E.~Rains, T.~Schedler, E.~Sklyanin, Y.~Tachikawa, K.~Takemura and A.~Veselov for stimulating discussions and useful comments.
PCA is supported in part by DOE grant DE-SC1019775. YL is supported by KIAS individual grant PG084801.

\appendix

\section{Elliptic functions and duality}
\label{a:duality}

Here we collect the main properties of the elliptic functions used throughout the paper. Associated to the lattice $\L=2\Z\omega_1+2\Z\omega_2$, we have Weierstrass functions $\sigma, \zeta, \wp$. Recall that $\sigma(x)$ is an odd, entire function with $\sigma'(0)=1$ and with the properties
\begin{align} \label{sigma}
   \sigma(x + \gamma) = (-1)^{m n + m + n} e^{\eta(\gamma)(x + \gamma)}\sigma(x)\,,\qquad \gamma = 2 m \omega_1 + 2 n \omega_2\,,
\end{align}
where $\eta(\gamma)$ was defined in \eqref{eta}. Consider the function
\begin{align}
    \varphi(x,z) = \frac{\sigma(x-z)}{\sigma(x) \sigma(-z)},\qquad \varphi(x,z)=-\varphi(z,x)\,.
\end{align}
It has the following translation properties: 
\begin{align}
    \frac{\varphi(x+\gamma,z)}{\varphi(x,z)}  = e^{-\eta(\gamma) z},\qquad
    \frac{\varphi(x, z +\gamma)}{\varphi(x,z)}  = e^{-\eta(\gamma) x}\,,\qquad \gamma\in\L\,.   
\end{align} 
Next, we have 
\begin{align}\label{vv}
v_{l}( x,z) & =\sum _{\{x_{i}\}} c_{l}( x_{i}) e^{-\eta ( \Omega_{l} x_{i}) z} \varphi ( x-x_{i} ,\Omega_{-l} z)\,,\qquad l\in\Z_m\setminus\{0\}\,,
\end{align}
where $\Omega_l=1-\omega^l$ and the summation is over all fixed points $x_i\in\E$. Note that our convention is to set $c_l(x_i)=0$ whenever $x_i$ is {\it not} fixed by $\omega^l$. Hence, the above summation reduces to $x_i\in (\Omega_l)^{-1}\L/\L$. The following properties are now clear:
\begin{align}
 \mathrm{res}_{x=x_i}v_l(x,z) & = c_l(x_i)e^{-\eta(\Omega_{l}x_i)z}\,,
 \\
 v_l(x+\gamma,z) & =e^{-\eta(\gamma) \Omega_{-l} z}v_l(x,z)\,,\quad\gamma\in\L\,.
\end{align}
These properties characterize $v_l$ uniquely.
On the other hand, as a function of $z$, $v_l(x,z)$ has simple poles at fixed points $z=x_i$, with residues 
\begin{align}\label{cc}
  \mathrm{res}_{z =x_i} v_l(x, z)= - \frac{1}{\Omega_{-l}} \sum_{ \{x_j\} } c_l(x_j) e^{\eta(\Omega_{-l} x_i ) x_j - \eta (\Omega_{l} x_j) x_i} e^{- \eta (\Omega_{-l} x_i) x}\,.  
\end{align}
Also, under translations in the $z$ variable one has 
\begin{align}
 v_l(x, z+\gamma) & =e^{-\eta(\gamma) \Omega_{l} x}v_l(x,z)\,,\quad\gamma\in\L\,.
\end{align}
(This uses that $\eta(\Omega_{-l}\gamma) = \Omega_l \eta(\gamma) $ and the property $\eta(a) b - \eta(b) a \in 2 \pi i \mathbb{Z}$ for $a, b\in \L$.) 
Let us define the dual parameters $c^\vee$ by
\begin{equation}\label{dualc}
    c^\vee_{l}(x_i)= \frac{1}{\Omega_{l}} \sum_{ \{x_j\} } c_{-l}(x_j) e^{\eta(\Omega_{l} x_i ) x_j - \eta (\Omega_{-l} x_j) x_i}\,.
\end{equation}
Then \eqref{cc} becomes
\begin{equation}
 \mathrm{res}_{z=x_i}v_l(x,z)  = - c_{-l}^\vee(x_i)e^{-\eta(\Omega_{-l}x_i)x}\,.   
\end{equation}
We conclude that $v_l$ can be uniquely characterised by its properties  in $z$. By comparing the properties in $x$ and $z$, we obtain
the following {\it duality}:
\begin{equation}\label{dual}
    v_{l, c}(x,z)=-v_{-l, c^\vee}(z,x)\,.
\end{equation}

\subsection{Duality transformation}
Let us now describe explicitly the transformation $c\mapsto c^\vee$ given by \eqref{dualc}. We will use the notation $\omega_{1,2,3}$ and $\eta_{1,2}$ as in \eqref{table0}. We will also use the fact that $\zeta(\omega_1) = \pi/(4 \omega_1)$ for the lemniscatic lattice and $\zeta(\omega_1) = \pi/(2 \sqrt{3} \omega_1)$ for the equianharmonic lattice.

For $m=2$, take $x_0=0$, $x_i=\omega_i$ and denote $g_i=c_1(x_i)$, $i=0,1,2,3$. Then
\begin{align} \label{dualmat2}
\begin{pmatrix} 
    g_0^\vee\\ g_1^\vee \\g_2^\vee \\ g_3^\vee
    \end{pmatrix}
     = \frac{1}{2} \begin{pmatrix}
   1 & 1 & 1 & 1 \\  1 & 1 & -1 & -1 \\ 1 & -1 & 1 & -1  \\  1 & -1 & -1 & 1 
    \end{pmatrix} 
    \begin{pmatrix}
    g_0\\ g_1 \\g_2 \\ g_3
    \end{pmatrix}\,.
\end{align}

For $m=3$, take $x_0=0$, $x_{1,2}=\eta_{1,2}$. We have 6 parameters $c_i(x_{0,1,2})$ with $i=1,2$. Set $\vec{c}_i = \begin{pmatrix}
c_i(x_0) & c_i(x_1) & c_i(x_2)
\end{pmatrix}^T$ (similarly for dual variables). Then
\begin{align} \label{dualmat3}
   \begin{pmatrix}
     \vec{c}^{\, \vee}_1\\\vec{c}^{\,\vee}_2
    \end{pmatrix}  =  \begin{pmatrix}
   0 & A \\  B & 0 
    \end{pmatrix} \begin{pmatrix}
   \vec{c}_1 \\ \vec{c}_2
    \end{pmatrix}
\end{align}
where
\begin{align}
\quad A= \frac{1}{1-\omega} \begin{pmatrix}
       1 & 1 & 1 \\
       1 & \omega & \omega^2\\
       1 &  \omega^2 & \omega
    \end{pmatrix},
   \quad  B= \frac{1}{1-\omega^2} \begin{pmatrix}
       1 & 1 & 1 \\
       1 & \omega^2 & \omega\\
       1 &  \omega & \omega^2
    \end{pmatrix}, \quad \omega=e^{2\pi i/3}.
\end{align}

For $m=4$, take $x_0=0$, $x_1=\omega_3$, $x_2=\omega_1$, $x_3=\omega_2$. We have 7 parameters, $c_{1,2,3}(x_{0,1})$ and $ c_2(x_2) =  c_2(x_3)$.  Set $\vec{c}_i = \begin{pmatrix}
c_i(x_0) & c_i(x_1)
\end{pmatrix}^T $ for $i=1,3$. Then 
\begin{align}
   \begin{pmatrix}
c_2^\vee(x_0) \\ c_2^\vee(x_1) \\ c_2^\vee(x_2)
\end{pmatrix}  = \frac{1}{2} \begin{pmatrix}
1 & 1 & 2 \\
1 & 1 & -2 \\
1 & -1 & 0 
\end{pmatrix}
\begin{pmatrix} \label{dualmat4}
c_2(x_0) \\ c_2(x_1) \\ c_2(x_2)
\end{pmatrix}
\,,\qquad
     \begin{pmatrix}
     \vec{c}^{\, \vee}_1\\\vec{c}^{\,\vee}_3
    \end{pmatrix} =  
    \begin{pmatrix}
   0 & C \\  D & 0 
    \end{pmatrix} \begin{pmatrix}
   \vec{c}_1 \\ \vec{c}_3
    \end{pmatrix}
\end{align}
with
\begin{align}
    C= \frac{1}{1-i} \begin{pmatrix}
       1 & 1  \\
       1 & -1
    \end{pmatrix}\,, \quad D= \frac{1}{1+i} \begin{pmatrix}
       1 & 1 \\
       1 & -1
    \end{pmatrix}\,.
\end{align}

Finally, for $m=6$ we take $x_0=0$, $x_{1,2}=\eta_{1,2}$, $x_3=\omega_1$, $x_4=\omega_2$, $x_5=\omega_3$. We have 8 parameters,
$c_i(x_0)$, $i=1,\dots, 5$, $c_i(x_{1})=c_i(x_2)$, $i=2,4$, and $c_3(x_3)=c_3(x_4)=c_3(x_5)$. Set
$\vec{c}_i = \begin{pmatrix}
c_i(x_0) & c_i(x_1)
\end{pmatrix}^T $, for $i=2,4$ and $ c_2(x_1) =  c_2(x_2)$, $ c_3(x_3) =  c_3(x_4) =c_3(x_5)$, $ c_4(x_1) =  c_4(x_2)$. 
Then
\begin{align} \label{dualmat6}
  \begin{pmatrix}
c_1^\vee(x_0) \\ c_5^\vee(x_0) 
\end{pmatrix}   = \begin{pmatrix}
0 & \frac{1}{1-\epsilon} \\
\frac{1}{1-\epsilon^5} & 0 
\end{pmatrix}
\begin{pmatrix}
c_1(x_0) \\ c_5(x_0) 
\end{pmatrix}\,, \quad   
 \begin{pmatrix}
c_3^\vee(x_0) \\ c_3^\vee(x_3) 
\end{pmatrix} = \frac{1}{2} \begin{pmatrix}
1 & 3 \\
1 & -1
\end{pmatrix}
\begin{pmatrix}
c_3(x_0) \\ c_3(x_3) 
\end{pmatrix}
\end{align}
and 
\begin{align*}
   \begin{pmatrix}
     \vec{c}^{\,\vee}_2\\\vec{c}^{\,\vee}_4
    \end{pmatrix}  =  \begin{pmatrix}
   0 & E \\  F & 0 
    \end{pmatrix} 
    \begin{pmatrix}
   \vec{c}_2 \\ \vec{c}_4
    \end{pmatrix}
\end{align*}
where $\epsilon =e^{\pi i/3}$ and
\begin{align}
    E= \frac{1}{1-\epsilon^2} \begin{pmatrix}
       1 & 2 \\
       1 & \epsilon^2 + \epsilon^4
    \end{pmatrix}\,, \quad F= \frac{1}{1-\epsilon^4} \begin{pmatrix}
       1 & 2\\
       1 & \epsilon^2 + \epsilon^4
    \end{pmatrix}\,.
\end{align}

\subsection{Lie-theoretic interpretation of the duality} \label{a:weyl}
The duality transformation $c\mapsto c^\vee$ admits an interpretation in terms of the Weyl group action on quiver varieties. According to \cite{CB01}, any Fuchsian system gives rise to a representation of the deformed preprojective algebra $\Pi^\lambda$ of a star-shaped quiver $Q$. On the physics side, $\Pi^\lambda$ is defined in terms of the F-term equation for the corresponding mirror quiver theory, and the deformation parameter $\lambda$ can be identified with the Fayet-Iliopoulos parameters \cite{KapustinS_HBImpurity, BourgetGG_FIflow3dNeq4, Kapustin_N2q2Compac3d}.
The number of legs of $Q$ equals the number of singular points of the system, and the leg lengths, deformation parameters $\lambda$, and the dimension vector are related to the local monodromy data at the singular points, see \cite{CB01} for details. 
We are interested in the case of Fuchsian systems whose singularities and local monodromy are of the same type as for the quantum curves considered in Sec.~\ref{sec5}, see Proposition \ref{fuchs}. Hence, they are systems of rank $m=2,3,4,6$ with regular singularities at the orbifold points of $\mathbb P^1=\E/\Z_m$. For each orbifold point $e_r$, let $\Z_{m_r}$ be the isotropy group of $e_r$. The local monodromy is required to be semisimple, with $m_r$ distinct eigenvalues of the same multiplicity $m/m_r$. Up to a twist, we may assume that at each singular point the product of local eigenvalues is $1$. Let $\ell$ denote the number of orbifold points, i.e., $\ell=4$ for $m=2$ and $\ell=3$ for $m=3,4,6$. The corresponding star-shaped quiver $Q$ will have the central node (indexed as $0$) and $\ell$ legs, with nodes along the $r$th leg indexed as $[r,s]$, with $0\le r\le \ell-1$ and $1\le s\le m_r-1$. Hence, ignoring orientation of the edges, $Q$ is an affine Dynkin diagram of type $D_4$, $E_6$, $E_7$, ${E_8}$, see Fig.~\ref{fig:affine}. Let $I$ be the set of the vertices of $Q$; they can be identified with simple roots $\alpha_i$ of the affine root system of type $Q$, so elements of $\mathbb Z^I$ are viewed as linear combinations of simple roots. The (affine) Weyl group $W$ of $Q$ acts on $\mathbb Z^I$; it is generated by simple reflection $s_{\alpha_i}$, $i\in I$. We set $\delta=\sum_{i\in I}n_i\alpha_i$ where the integers $n_i$ are indicated by the labels in Fig.~\ref{fig:affine}, and note that $s_{\alpha_i}(\delta)=\delta$ for all $i\in I$.  

\begin{figure}[h]
\centering
\begin{tikzpicture}[scale=0.66, every node/.style={scale=0.8}]

\begin{scope}[xshift= 0 cm]
\draw (180: 1.2)--(0: 1.2 );
\draw (0: 0)--(60: 1.2 );
\draw (0: 0)--(120: 1.2 );
\draw[fill=white] (0:0) circle (5pt) node[anchor=north,yshift=-0.1cm](0){2};
\draw[fill=white] (0: 1.2 ) circle (5pt) node[anchor=north,yshift=-0.1cm](1){1};
\draw[fill=white] (60: 1.2 ) circle (5pt) node[anchor=south,yshift=0.1cm](2){1};
\draw[fill=white] (120: 1.2 ) circle (5pt) node[anchor=south,yshift=0.1cm](3){1};
\draw[fill=white] (180: 1.2 ) circle (5pt) node[anchor=north,yshift=-0.1cm](4){1};
    \node[anchor=north] at (0,-1){\large ${D}_{4}$};
\end{scope}

\begin{scope}[xshift=4.5 cm]
\draw (180: 2.4)--(0: 2.4);
\draw (0: 0)--(90: 2.4 );
\draw[fill=white] (0:0) circle (5pt) node[anchor=north,yshift=-0.1cm](0){3};
\foreach \j in {1,2}
{\pgfmathtruncatemacro{\y}{(3 -\j)};
\draw[fill=white] (0: 1.2*\j ) circle (5pt) node[anchor = north,yshift=-0.1cm]{\y};
}
\foreach \j in {1,2}
{\pgfmathtruncatemacro{\y}{(3 -\j)};
\draw[fill=white] (180: 1.2*\j ) circle (5pt) node[anchor = north,yshift=-0.1cm]{\y};
}
\foreach \j in {1,2}
{\pgfmathtruncatemacro{\y}{(3 -\j)};
\draw[fill=white] (90: 1.2*\j ) circle (5pt) node[anchor = east,xshift=-0.1cm]{\y};
}
    \node[anchor=north] at (0,-1){\large ${E}_{6}$};
\end{scope}

\begin{scope}[xshift=11.5 cm]
\draw (180: 3.6)--(0: 3.6);
\draw (0: 0)--(90: 1.2);
\draw[fill=white] (0:0) circle (5pt) node[anchor=north,yshift=-0.1cm](0){4};
\foreach \j in {1,2,3}
{\pgfmathtruncatemacro{\y}{(4 -\j)};
\draw[fill=white] (0: 1.2*\j ) circle (5pt) node[anchor = north,yshift=-0.1cm]{\y};
}
\foreach \j in {1,2,3}
{\pgfmathtruncatemacro{\y}{(4 -\j)};
\draw[fill=white] (180: 1.2*\j ) circle (5pt) node[anchor = north,yshift=-0.1cm]{\y};
}
\draw[fill=white] (90: 1.2 ) circle (5pt) node[anchor = east,xshift=-0.1cm]{2};
\node[anchor=north] at (0,-1){\large ${E}_{7}$};
\end{scope}

\begin{scope}[xshift=18.5 cm]
\draw (180: 2.4)--(0: 6);
\draw (0: 0)--(90: 1.2);
\draw[fill=white] (0:0) circle (5pt) node[anchor=north,yshift=-0.1cm](0){6};
\foreach \j in {1,...,5}
{\pgfmathtruncatemacro{\y}{(6 -\j)};
\draw[fill=white] (0: 1.2*\j ) circle (5pt) node[anchor = north,yshift=-0.1cm]{\y};
}
\foreach \j in {1,2}
{\pgfmathtruncatemacro{\y}{2*(3 -\j)};
\draw[fill=white] (180: 1.2*\j ) circle (5pt) node[anchor = north,yshift=-0.1cm]{\y};
}
\draw[fill=white] (90: 1.2 ) circle (5pt) node[anchor = east,xshift=-0.1cm]{3};
    \node[anchor=north] at (1,-1){\large ${E}_{8}$};
\end{scope}

\end{tikzpicture}
\caption{Affine Dynkin diagrams of type ${D}_4$ and ${E}_{6,7,8}$.}
\label{fig:affine}
\end{figure}
Introduce parameters $\xi_{r,s}$ as follows:
\begin{equation}\label{xi}
 \xi_{r,s}=\mu_s(x_r)/m_r\,,\qquad 0\le r\le \ell-1\,,\quad 0\le s\le m_r-1\,,   
\end{equation}
where $x_r\in\E$ is any of the preimages of the orbifold point $e_r$, and
$\mu_j(x_i)$ are defined in \eqref{mui}. The $\xi_{r,s}$ encode the local exponents of the Fuchsian systems in question. We now assign parameters to the nodes of $Q$ in accordance with \cite{CB01}:
\begin{equation}\label{lambda}
    \lambda_0=-\sum_{r=0}^{\ell-1}\xi_{r,0}\,,\quad \lambda_{[r,s]}=\xi_{r, s-1}-\xi_{r, s}\,,\qquad 0\le r\le \ell-1\,,\  1\le s\le m_r-1\,,
\end{equation}
and it is not hard to check that
\begin{equation}\label{fuchs}
    \delta\cdot\lambda:=\sum_{i\in I}n_i\lambda_i=0\,.
\end{equation}
It is convenient to think of $\lambda$ as an element of $\left(\C^I\right)^*$, with $\lambda=\sum_{i\in I}\lambda_i\epsilon_i$ where the basis $\{\epsilon_i\}$ is dual to $\{\alpha_i\}$ (thus, $\epsilon_i$'s are the fundamental coweights). 
This allows us to define the dual action of the Weyl group of $Q$ on $\lambda\in\left(\C^I\right)^*$. Because of the relation \eqref{fuchs}, this action reduces to the action of the Weyl group $W_0$ for the \emph{finite} Dynkin diagram $Q_0\subset Q$ of type $D_4, E_6, E_7, E_8$, respectively. 

This allows us to identify the duality transformation $c\mapsto c^\vee$ with a suitable element of $W_0$. Namely, by combining \eqref{mui}, \eqref{xi}, \eqref{lambda}, 
we may consider the change of coordinates $c\mapsto\lambda$. If $i_0\in I$ is the extended node, we can eliminate $\lambda_{i_0}$ using \eqref{fuchs} and work with $(\lambda_i)_{i\ne i_0}$, referring to it as $\lambda$ by a slight abuse of notation. A direct calculation gives the following result.

\begin{prop} The duality transformation $c\mapsto c^\vee$ in terms of $\lambda$ is given by $\lambda\mapsto \lambda^\vee=\lambda B$, where $\lambda$ and $B$ are as follows.

For $m=2$, $\lambda=(\lambda_0,\lambda_{[11]}, \lambda_{[21]},\lambda_{[31]})$ and
\begin{align}\label{b2}
B=\begin{pmatrix}
 2 & -1 & -1 & -1 \\
 1 & 0 & -1 & -1 \\
 1 & -1 & 0 & -1 \\
 1 & -1 & -1 & 0 
\end{pmatrix}
\end{align}

For $m=3$, $\lambda=(\lambda_{[01]},\lambda_0,\lambda_{[11]},\lambda_{[12]}, \lambda_{[21]},\lambda_{[22]})$ and
\begin{align}\label{b3}
B=\begin{pmatrix}
 0 & -1 & 0 & 1 & 0 & 1 \\
 -1 & 0 & -1 & 2 & -1 & 2 \\
 0 & 0 & -1 & 1 & -1 & 2 \\
 0 & 0 & -1 & 1 & 0 & 1 \\
 0 & 0 & -1 & 2 & -1 & 1 \\
 0 & 0 & 0 & 1 & -1 & 1 
\end{pmatrix}
\end{align}

For $m=4$, $\lambda=(\lambda_{[01]},\lambda_{[02]},\lambda_0,\lambda_{[11]},\lambda_{[12]}, \lambda_{[13]},\lambda_{[21]})$ and
\begin{align}\label{b4}
B = \begin{pmatrix}
 0 & 2 & -1 & 0 & 2 & -1 & -1 \\
 0 & 1 & 0 & 0 & 1 & -1 & -1 \\
 -1 & 3 & 0 & -1 & 3 & -1 & -2 \\
 0 & 2 & 0 & -1 & 2 & 0 & -2 \\
 0 & 1 & 0 & -1 & 2 & 0 & -1 \\
 0 & 0 & 0 & 0 & 1 & 0 & -1 \\
 0 & 1 & 0 & -1 & 2 & -1 & -1 
\end{pmatrix}
\end{align}

For $m=6$, $\lambda=(\lambda_{[01]},\lambda_{[02]},\lambda_{[03]}, \lambda_{[04]},\lambda_0,\lambda_{[11]},\lambda_{[12]},\lambda_{[21]})$ and
\begin{align}\label{b6}
B = \begin{pmatrix}
 0 & 4 & -1 & -1 & -1 & -1 & 3 & -2 \\
 0 & 3 & -1 & -1 & 0 & -1 & 2 & -2 \\
 0 & 2 & -1 & -1 & 0 & -1 & 2 & -1 \\
 0 & 1 & -1 & 0 & 0 & 0 & 1 & -1 \\
 -1 & 5 & -1 & -1 & 0 & -2 & 4 & -3 \\
 0 & 3 & -1 & 0 & 0 & -2 & 3 & -2 \\
 0 & 1 & 0 & 0 & 0 & -1 & 2 & -1 \\
 0 & 2 & 0 & -1 & 0 & -1 & 2 & -2
\end{pmatrix}
\end{align}
\end{prop}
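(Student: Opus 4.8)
The plan is to realise the map $c\mapsto\lambda^\vee$ as a composite of linear maps, each of which is already available in the excerpt, and then to recognise the resulting matrix as an element of the finite Weyl group $W_0$. The passage $c\mapsto\lambda$ factors as three linear steps: the discrete Fourier transform \eqref{mui} sending the couplings $c_l(x_i)$ to the combinations $\mu_j(x_i)$; the rescaling \eqref{xi}, $\xi_{r,s}=\mu_s(x_r)/m_r$; and the affine change of coordinates \eqref{lambda} producing $\lambda_0$ and the $\lambda_{[r,s]}$. Since the duality $c\mapsto c^\vee$ is given explicitly, case by case, by the matrices in \eqref{dualmat2}--\eqref{dualmat6}, and each of the three steps is linear with the Fourier transform invertible, one obtains $\lambda^\vee$ as an explicit linear function of $\lambda$. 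First I would push the known duality through the Fourier transform to get its action on the $\mu_j(x_i)$, and hence on the $\xi_{r,s}$.

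As a model computation, in the case $m=2$ one has $\mu_0(x_i)=g_i$ and $\mu_1(x_i)=-g_i$, all stabilisers equal $\Z_2$, and \eqref{xi}, \eqref{lambda} give $\lambda_{[r,1]}=g_r$ together with $\lambda_0=-\tfrac12\sum_r g_r$; the matrix \eqref{dualmat2} acting on the $g_i$ then translates directly into the stated $B$ of \eqref{b2} once the extended-node coordinate $\lambda_{[01]}$ is eliminated. The cases $m=3,4,6$ follow the same template, the only additional bookkeeping being the repetitions among the $\mu_j(x_i)$ at fixed points with proper stabiliser $\Z_{m_i}\subsetneq\Z_m$; these repetitions are exactly what cut the number of independent $\xi_{r,s}$ down to the number of nodes of the finite Dynkin diagram $Q_0$. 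The elimination of the redundant coordinate $\lambda_{i_0}$ via the Fuchs relation \eqref{fuchs} is precisely the step that reduces the a priori affine-Weyl action on $(\lambda_i)_{i\in I}$ to a finite-Weyl action on $(\lambda_i)_{i\ne i_0}$, yielding the square matrices $B$ of sizes $4,6,7,8$ matching $D_4,E_6,E_7,E_8$. That the entries come out integral already shows $B$ preserves the (co)root lattice.

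The main obstacle is the final verification that $B$ genuinely lies in $W_0$, rather than being merely some lattice automorphism. I would confirm this by checking that $B$ preserves the symmetric bilinear form attached to $Q_0$ and by exhibiting $B$ explicitly as a reduced word in the simple reflections $s_{\alpha_i}$ acting on the coweight basis $\{\epsilon_i\}$ dual to the simple roots. The reduced-word step is unavoidable and genuinely case-dependent: for $E_7$ and $E_8$ the isometry group of the root system coincides with $W_0$, so preserving the form and permuting the roots would already suffice, but for $D_4$ and $E_6$ the isometry group strictly contains $W_0$ (triality for $D_4$, the order-two diagram automorphism for $E_6$), so preserving the form alone does \emph{not} pin down $B$ as a Weyl element and one must track the actual word. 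As a conceptual consistency check, the element so obtained should coincide with the one predicted by the middle-convolution/Okamoto interpretation flagged in Remark \ref{rai}, which identifies $c\mapsto c^\vee$ with a Fourier-type transformation on $\mathcal M_{dR}$; matching the computed word against that expectation provides independent corroboration of the four matrices \eqref{b2}--\eqref{b6}.
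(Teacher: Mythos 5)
Your proposal is correct and is essentially the paper's own proof: the paper justifies the proposition with the single phrase ``a direct calculation,'' meaning exactly the composition of the explicit duality matrices \eqref{dualmat2}--\eqref{dualmat6} with the invertible linear change of coordinates $c\mapsto\mu\mapsto\xi\mapsto\lambda$ of \eqref{mui}, \eqref{xi}, \eqref{lambda}, followed by elimination of $\lambda_{i_0}$ via \eqref{fuchs} --- and your worked $m=2$ case reproduces \eqref{b2} correctly. Note only that your final paragraph on certifying $B\in W_0$ addresses the discussion \emph{following} the proposition (the identification of $w_0$ as a product of commuting reflections), not the proposition itself, so it is extra rather than a required step.
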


Observe that the columns of $B$ have components of the same sign (or zero). This is because the matrix $B$ describes a certain transformation $w_0\in W_0$ written in the basis of simple roots. (The map $\lambda\mapsto \lambda^\vee=\lambda B$ is the dual transformation, i.e., it describes $w_0$ in the dual basis of fundamental coweights.) As it turns out, $w_0$ is a product of commuting reflections. To write it down, we will use the labelling of simple roots $\alpha_i$ as in the tables in \cite{Bourbaki}. The result is as follows. 

For $m=2$, the matrix \eqref{b2} describes the action of the following element $w_0$ in the basis of simple roots $\{\alpha_2, \alpha_1, \alpha_3, \alpha_4\}$:  
\begin{align*}
 w_0=s_\beta\quad\text{where}\ \beta=\alpha_1+\alpha_2+\alpha_3+\alpha_4.   
\end{align*}

For $m=3$, the matrix \eqref{b3} describes the action of the following element $w_0$ in the basis of simple roots $\{\alpha_2, \alpha_4, \alpha_3, \alpha_1, \alpha_5, \alpha_6\}$:
 \begin{align*}
    &  w_0=s_{\beta_1}s_{\beta_2}s_{\beta_3}\quad\text{where $\beta_{1,2,3}$ are mutually orthogonal roots},\\
   & \beta_1 =  \alpha _1+\alpha _2+2 \alpha _3+2 \alpha _4+\alpha _5, \\ 
   & \beta_2 = \alpha _2+\alpha _3+2 \alpha _4+2 \alpha _5+\alpha _6, \\ & \beta_3 = \alpha _2+\alpha _4\,.
\end{align*}

For $m=4$, the matrix \eqref{b4} describes the action of the following element $w_0$ in the basis of simple roots $\{\alpha_3, \alpha_1, \alpha_4, \alpha_5, \alpha_6, \alpha_7, \alpha_2\}$:
\begin{align*}
 &  w_0=s_{\beta_1}s_{\beta_2}s_{\beta_3}\quad\text{where $\beta_{1,2,3}$ are mutually orthogonal roots},\\
   & \beta_1 = \alpha _1+2 \alpha _2+2 \alpha _3+3 \alpha _4+2 \alpha _5+\alpha _6+\alpha _7, \\ 
   & \beta_2 = \alpha _2+\alpha _3+2 \alpha _4+2 \alpha _5+\alpha _6, \\ & \beta_3 = \alpha _3+\alpha _4\,. 
\end{align*}    

For $m=6$, the matrix \eqref{b6} describes the action of the following element $w_0$ in the basis of simple roots $\{\alpha_5, \alpha_6, \alpha_7, \alpha_8, \alpha_4, \alpha_3, \alpha_1, \alpha_2\}$:
\begin{align}
&  w_0=s_{\beta_1}s_{\beta_2}s_{\beta_3}s_{\beta_4}\quad\text{where $\beta_{1,2,3,4}$ are mutually orthogonal roots},\\
   & \beta_1  = \alpha_1 + 2 \alpha_2 +3 \alpha_3 +4 \alpha_4 +3 \alpha_5 +2 \alpha_6 + 2 \alpha_7+\alpha_8, \\ 
   & \beta_2  = \alpha _1+2 \alpha _2+2 \alpha _3+3 \alpha _4+2 \alpha _5+\alpha _6, \\ 
   & \beta_3 = \alpha _2+\alpha _4+\alpha _5+\alpha _6+\alpha _7+\alpha _8, \\
   & \beta_4 =  \alpha _4+\alpha _5\,.
\end{align}

\begin{remark}
The above action of $w_0$ is on the space of parameters of $\mathcal M_{dR}$. There are also related reflection functors \cite{CB01} mapping between the moduli spaces with different parameters. Their action is known to respect the symplectic structure \cite{Na1} (cf. \cite{Sil}).      
\end{remark}

\begin{remark}
As we have explained, the duality transformation \eqref{dualmat2} is a particular element of the Weyl group of type $D_4$. At the same time, the same duality matrix also appears in \cite{SW942} as an outer automorphism of $D_4$. The difference is that in \cite{SW942}, the action is written in another basis of the weight space of $D_4$.  Specifically, the mass parameters in \cite{SW942} are components in an orthogonal basis, and the couplings provide mass parameters for the $(A_1)^4$ subalgebra of $D_4$. When the duality action in \cite{SW942} is translated into action on the coupling, it swaps two $A_1$ subalgebras and hence the 8-dimensional vector and spinor irreducible representations, $8_v$ and $8_s$, in agreement with \cite{Gaiotto_Neq2Duality}. 
  
\end{remark}

\section{Quantum Hamiltonians}
\label{a:elliptic}

Here we write explicitly the hamiltonians in elliptic form 
$\hh=\pp^{\,m}+A_2\, \pp^{\,m-2}+\dots+A_m$. For $m=2$ this is the well-known elliptic form of the Heun equation. For $m=3,4$ the formulas are essentially the same as in \cite{EFMV11ecm} (for $m=4$ there is a mistake in \cite{EFMV11ecm}, corrected in \cite{Buric:2021ttm}). For $m=6$ this is a new result. We use the notation $\omega_{1,2,3}$ and $\eta_{1,2}$ for the fixed points, as in \eqref{table0}. 

\paragraph{Case $m=2$:}
In this case, $\tau=\omega_2/\omega_1$ is arbitrary. We denote $g_i:=c_1(\omega_i)$, $i=0\dots 3$; then 
\begin{equation}
    \hh=\pp\,^2-\sum_{i=0}^3 g_i(g_i-\hbar)\wp(q-\omega_i)\,.
\end{equation}

\paragraph{Case $m=3$:}
 
In this case $\omega_2/\omega_1 = \exp(\pi i/3)$, and we have parameters $\mu_j(\eta_i)$, $i, j=0,1,2$, where we set $\eta_0=0$. 
The hamiltonian  has the form
\begin{equation}
    \hh=\pp^{\,3}+(a_2 \wp (q)+b_2 \wp \left(q-\eta_1\right)+ c_2 \wp \left(q-\eta_2\right))\pp+\frac{1}{2} \left(a_3 \wp '(q)+ b_3 \wp '\left(q-\eta_1\right)+ c_3 \wp '\left(q-\eta_2\right)\right)\,.
\end{equation}
Here $a_i, b_i, c_i$ are functions of parameters attached to points $\eta_0, \eta_1, \eta_2$, respectively. Namely, $a_i$ is the $i$th elementary symmetric function of  $\mu_0(\eta_0), \mu_1(\eta_0)+\hbar, \mu_2(\eta_0)+2\hbar$, and similarly for $b_i$ and $c_i$. 

\paragraph{Case $m=4$:}

In this case $\omega_2/\omega_1 = \exp(\pi i/2)$, and we put $\omega_0=0$ for convenience. We have parameters $\mu_j(\omega_i)$, $i=0,3$, and $\mu_j(\omega_{1})=\mu_j(\omega_2)$, with $j=0,1,2,3$, and with $\mu_j(\omega_{1,2})=\mu_{j+2}(\omega_{1,2})$. Recall that $\sum_j\mu_j(x_i)=0$ for each fixed point $x_i$.
The hamiltonian has the form $\hh=\pp^{\,4}+A_2\,\pp^{\,2}+A_3\,\pp+A_4$ with the coefficients as follows (cf. \cite{EFMV11ecm, Buric:2021ttm}): 
\begin{align}
     {A}_2  & = a_2 \wp (q)+b_2 \wp \left(q-\omega _1\right)+2 c_2 \left(\wp \left(q-\omega _2\right)+\wp \left(q-\omega _3\right)\right), \\
     {A}_3  & = \frac{1}{2} a_3 \wp '(q)+\frac{1}{2} b_3 \wp '\left(q-\omega _1\right)+2 \hbar  c_2 \left(\wp '\left(q-\omega _2\right)+\wp '\left(q-\omega _3\right)\right), \\
     {A}_4  & = a_4 \wp(q)^2+b_4 \wp \left(q-\omega _1\right)^2 +c_2 \left(c_2 + 6 \hbar ^2\right) \left(  \wp(q-\omega _2)^2 + \wp(q- \omega_3)^2 \right) \notag \\
     & \qquad + \left( a_2 - b_2\right) \wp(\omega_2) \left(\wp(q- \omega_2) - \wp(q-\omega_3)\right)\,.
\end{align}
The seven parameters $a_{2,3,4}$, $b_{2,3,4}$, $c_2$ are related to $\mu_j(x_i)$ by 
\begin{align}
    a_i&=\sigma_i(\mu_0(0), \mu_1(0)+\hbar,  \mu_2(0)+2\hbar, \mu_3(0)+3\hbar)\,,
    \\
    b_i&=\sigma_i(\mu_0(\omega_3), \mu_1(\omega_3)+\hbar, \mu_2(\omega_3)+2\hbar, \mu_3(\omega_3)+3\hbar)\,,
    \\
c_2&=\sigma_2(\mu_0(\omega_{1,2}),\mu_1(\omega_{1,2})+\hbar)\,,
\end{align}
where $\sigma_i$ denotes the $i$th elementary symmetric function.
\paragraph{Case $m = 6$:}

In this case $\omega_2/\omega_1=e^{\pi i/3}$, and we have six parameters $\mu_j(0)$, $j=0,\dots, 5$, further three parameters 
$\mu_j(\eta_{1})=\mu_j(\eta_2)$, $j=0,1,2$, and two parameters $\mu_j(\omega_{1})=\mu_j(\omega_2)=\mu_j(\omega_{3})$, $j=0,1$. We have $\sum_j\mu_j(x_i)=0$ for each fixed point. We extend $\mu_j(x_i)$ by $\mu_j(\eta_{1,2})=\mu_{j+3}(\eta_{1,2})$ and $\mu_j(\omega_{1,2,3})=\mu_{j+2}(\omega_{1,2,3})$.
The hamiltonian has the form
\begin{equation}
    \hh=\pp^{\,6}+A_2 \, \pp^{\,4}+A_3 \, \pp^{\,3}+A_4 \, \pp^{\,2}+A_5 \, \pp+A_6\,,
\end{equation}
with coefficients as follows:
\begin{align}
    A_{2} & = a_2 \wp(q) + 2b_2 \sum_{i=1,2} \wp(q-\eta_{i})  + 3 c_2 \sum_{i=1,2,3} \wp(q-\omega_{i})\\
    A_{3} & =  \frac{a_3}{2} \wp'(q) + (b_3+3 b_2 \hbar) \sum_{i=1,2} \wp'(q-\eta_{i}) +  6 c_2 \hbar \sum_{i=1,2,3} \wp'(q-\omega_{i})\\
    A_{4} & = a_4 \wp(q)^2 + (b_2^2 + 9 b_3 \hbar + 18 b_2 \hbar^2) \sum_{i=1,2} \wp (q-\eta_{i})^2 + b_2 \sum_{i=1,2} \beta_i \left(\zeta(q-\eta_{i})+\zeta(\eta_i)\right) \\
    & \quad + 3 c_2 \left(c_2+14 \hbar ^2\right)  \sum_{i=1,2,3} \wp(q-\omega_{i})^2 + 2 c_2 \sum_{i=1,2,3} \gamma_i \wp (q-\omega_{i}) \\
    A_{5} & =  \frac{a_5}{2} \wp'(q) \wp(q) + (b_2 b_3+ b_2^2 \hbar + 18 b_3 \hbar ^2 + 12 b_2 \hbar ^3 ) \sum_{i=1,2} \wp'(q-\eta_{i})\wp(q-\eta_{i}) \\
    & \quad  + \sum_{i=1,2} (b_2 (\delta_i -2 \beta_i  \hbar ) - \beta_i  b_3) \wp(q-\eta_{i}) + 6 c_2 \hbar  \left(c_2+8 \hbar ^2\right) \sum_{i=1,2,3} \wp'(q-\omega_{i})\wp(q-\omega_{i})\\
    & \quad + 2 c_2 \hbar \sum_{i=1,2,3} \gamma_i \wp'(q-\omega_{i}) + c_2 \sum_{i=1,2,3} (\rho_i - 3 \xi_i \hbar) \zeta(q-\omega_{i})\\
    A_{6} & = a_6 \wp^3(q)  + \left(b_3 \left( b_3+3 b_2 \hbar +60 \hbar ^3\right) \right)\sum_{i=1,2} \wp^3 (q-\eta_{i}) + \frac{1}{2}\sum_{i=1,2} \left( b_3 (\delta_i -3 \beta_i  \hbar ) \right) \wp'(q-\eta_{i}) \\
    & \quad + c_2 \left(26 c_2 \hbar ^2+c_2^2+120 \hbar ^4\right) \sum_{i=1,2,3} \wp^3 (q-\omega_{i})  + c_2 \left(c_2+6 \hbar ^2\right) \sum_{i=1,2,3} \gamma_i \wp^2 (q-\omega_{i}) \\
    & \quad + c_2 \sum_{i=1,2,3} (\kappa_i -2 \hbar  (\rho_i -2 \xi_i  \hbar ) )\wp (q-\omega_{i})\,.
\end{align}
In these formulas, the parameters $a_{2,3,4,5,6}$, $b_{2,3}$, $c_2$ are the following elementary symmetric functions of the parameters:
\begin{align}
    a_i&=\sigma_i(\mu_0(0), \mu_1(0)+\hbar,  \mu_2(0)+2\hbar, \mu_3(0)+3\hbar, \mu_4(0)+4\hbar, \mu_5(0)+5\hbar)\,,
    \\
    b_i&=\sigma_i(\mu_0(\eta_{1,2}), \mu_1(\eta_{1,2})+\hbar, \mu_2(\eta_{1,2})+2\hbar)\,,
    \\
c_2&=\sigma_2(\mu_0(\omega_{1,2,3}),\mu_1(\omega_{1,2,3})+\hbar)\,.
\end{align}
The other parameters are expressed in terms of $a_i, b_i, c_i$:
\begin{align*}
    \beta _1 & = \left(a_2-2 b_2-27 c_2\right) \wp '\left(\eta _1\right) &  \quad  \beta _2 &= - \beta _1 & \\
    \gamma _1 & =   \left(a_2-8 b_2-3 c_2\right) \wp \left(\omega _1\right) &  \gamma _2   &= \omega ^{-2}\gamma _1 &  \gamma _3   &= \omega ^{2}\gamma _1\\
    \delta_1 & =  \frac{1}{2}  \left(a_3  -2 b_3 - (6 b_2 +108 c_2) \hbar \right) \wp '\left(\eta _1\right) & \quad \delta_2  & = - \delta_1  &  \\
    \xi_1 & = 3  \left(a_2+16 b_2-3 c_2\right) \wp \left(\omega _1\right)^2 &  \xi_2 &= \omega ^{2} \xi_1 &   \xi_3 &= \omega ^{-2} \xi_1\\
    \rho _1& = 3  \left(a_3+16 b_3 -   \left(a_2-32 b_2+9 c_2\right) \hbar \right) \wp \left(\omega _1\right)^2 & \rho _2  &= \omega ^{2} \rho _1 & \rho _3  & = \omega ^{-2} \rho _1\\ 
    \begin{split}   \kappa _1 & =  ( a_2 \left( c_2-4 b_2 \right) +a_4 +28 b_2 c_2  +16 b_2^2 -6 c_2^2 \\ &  \ \ +72 b_3 \hbar + \left(144 b_2-42 c_2\right)  \hbar ^2  ) \wp \left(\omega _1\right)^2
    \end{split} &  \begin{split} \kappa _2  &= \omega ^{2} \kappa _1 \\ & \end{split}  & \begin{split}
        \kappa _3 & = \omega ^{-2} \kappa _1 \\ & 
    \end{split}
\end{align*}
Recall that here $\omega=e^{\pi i/3}$.

\section{Spectral curves in polynomial form}\label{a:fuchs}

To make it easier to compare quantum and classical curves, we convert them into a polynomial form. 

\subsection{Quantum curves}
This is done by multiplying it from the left by $P(x)^m$ where $P(x)=(x-e_1)(x-e_2)(x-e_3)$ for $m=2$ and $P(x)=(x-e_1)(x-e_2)$ for $m=3,4,6$. We then rearrange the expression using 
\begin{equation}
  \hat y:=mP(x)\hbar\frac{d}{dx}\,.  
\end{equation}
Below we present the results, case by case. In all cases we have one accessory parameter, $z$. The coefficients $\alpha_2$, etc., are related to the local exponents at singular points via the formula \eqref{ai}. One can view $\alpha_2, \dots, \alpha_m$ as indeterminate, and instead use \eqref{ai} to determine $\mu_j(0)$ and the local exponents at $x=\infty$ in terms of $\alpha_i$. 

\medskip

\noindent {\bf Case $m=2$}: the quantum curve in polynomial form is
\begin{align}\label{pqc2}
\left(\hat y+\sum_{i=1}^{3}{(g_i-\hbar)}\prod_{j\ne i}^3{(x-e_j)}\right)
     \left(\hat y-\sum_{i=1}^{3}{g_i}\prod_{j\ne i}^3{(x-e_j)}\right)+{(\alpha_2 x-z)}\prod_{i=1}^3{(x-e_i)}\,.  
\end{align}

\medskip

\noindent {\bf Case $m=3$}: the quantum curve in polynomial form is
\begin{equation}\label{pqc3}
    Y_2Y_1Y_0+{\alpha_2}(x-e_1)(x-e_2)Y_0+{(2\alpha_3x-z)}(x-e_1)(x-e_2)\,,
\end{equation}
where
\begin{equation}
    Y_j=\hat y-{(\mu_j(\eta_1)+j\hbar)}{(x-e_2)}-
    {(\mu_j(\eta_2)+j\hbar)}{(x-e_1)}\,.
\end{equation}

\medskip

\noindent {\bf Case $m=4$}: the quantum curve in polynomial form is
\begin{align}
&Y_3Y_2Y_1Y_0+{\alpha_2}{(x-e_1)(x-e_2)}Y_1Y_0\notag\\\label{pqc4}
+&{2\alpha_3}{(x-e_1)(x-e_2)^2}Y_0+
    {(2\alpha_4(3x-2e_1-e_2)-z)}{(x-e_1)(x-e_2)^2}\,,    
\end{align}
where
\begin{equation}
    Y_j=\hat y-(\mu_j(\omega_3)+j\hbar)(x-e_2)-2(\mu_j(\omega_{1,2})+j\hbar){(x-e_1)}\,.
   \end{equation}
    
\medskip

\noindent {\bf Case $m=6$}: the quantum curve in polynomial form is
\begin{align}
        & Y_5 Y_4 Y_3 Y_2 Y_1 Y_0
     +   {\alpha_2}{ \left(x-e_1\right) \left(x-e_2\right)} Y_3 Y_2 Y_1 Y_0\notag \\
     + &  {2\alpha_3}{ \left(x-e_1\right) \left(x-e_2\right)^2} Y_2 Y_1 Y_0 \notag 
     +   {6\alpha_4}{ \left(x-e_1\right)^2 \left(x-e_2\right)^2} Y_1 Y_0 \notag \\\label{pqc6}
     + &  {24\alpha_5}{ \left(x-e_1\right)^2 \left(x-e_2\right)^3} Y_0 
     +   (24\alpha_6(5 x  -3 e_1-2 e_2)-z){ \left(x-e_1\right)^2 \left(x-e_2\right)^3}\,,
\end{align}
where
\begin{align}
    Y_j = \hat y-2({\mu}_j^{(\eta)}+j\hbar){(x-e_2)} -3({\mu}_j^{(\omega)}+j\hbar){(x-e_1)}\,.
\end{align}

\subsection{Classical curves}\label{a:pencils}

In Section \ref{pencils} we described elliptic pencils of special form. Here we verify that our classical spectral curves fit that description. We also give explicit equations of these pencils in projective coordinates. This will be done case by case.

\paragraph{Case $m=2$:}
The classical limit $\hbar=0$ of \eqref{pqc2} can be written as $Q-zP=0$, where
\begin{align}
Q &=\left(y+\sum_{i=1}^{3}{g_i}\prod_{j\ne i}^3{(x-e_j)}\right)
     \left(y-\sum_{i=1}^{3}{g_i}\prod_{j\ne i}^3{(x-e_j)}\right)+{\alpha_2 x}\prod_{i=1}^3{(x-e_i)}\,,\\
     P &=(x-e_1)(x-e_2)(x-e_3)\,. 
\end{align}
Here we think of $x,y$ as $y=\frac12\wp'(q)p$, $x=\wp(q)$, where $p,q$ are canonical coordinates, $\{p,q\}=1$. This induces the Poisson bracket
\begin{equation}
    \{y,x\}=2(x-e_1)(x-e_2)(x-e_3)\,.
\end{equation}
Rewriting $Q, P$ in weighted homogeneous coordinates $(x:y:w)$ on $\mathbb P^2_{1,2,1}$, we get 
\begin{align}
Q &=\left(y+\sum_{i=1}^{3}{g_i}\prod_{j\ne i}^3{(x-e_jw)}\right)
     \left(y-\sum_{i=1}^{3}{g_i}\prod_{j\ne i}^3{(x-e_jw)}\right)+{\alpha_2 x}\prod_{i=1}^3{(x-e_iw)}\,,\\
     P &=w(x-e_1w)(x-e_2w)(x-e_3w)\,. 
\end{align}
The pencil $Q-zP=0$ intersects the line $x-e_iw=0$ at two points $(e_i:\pm g_i\prod_{j\ne i}(e_i-e_j):1)$. To find its intersection with the line $w=0$, we set $x=1, w=0$ and get 
\begin{equation}\label{int2}
 (y+\gg)(y-\gg)+{\alpha_2}=0\,,\qquad \gg=\sum_{i=1}^{3}{g_i}.   
\end{equation}
Recall that $\alpha_2$ is determined by \eqref{aic}:
\begin{equation}
    (p+g_0q^{-1})(p-g_0q^{-1})=(p-\widetilde{g}q^{-1})(p+\widetilde{g}q^{-1})+\alpha_2q^{-2}\,.
\end{equation}
It tells us that \eqref{int2} can be rearranged as $(y+g_0)(y-g_0)=0$, and so the curves of the pencil pass through the points $(1:\pm g_0:0)$. Therefore, this is a pencil of the type described in Sec.~\ref{pencils2}. It is now straightforward to match $Q$ to the expression \eqref{exq2}.

\paragraph{Case $m=3$:}
The classical limit $\hbar=0$ of \eqref{pqc3} can be written as $Q-zP=0$, where
\begin{align}
    Q&=Y_2Y_1Y_0+{\alpha_2}(x-e_1)(x-e_2)Y_0+{2\alpha_3x}(x-e_1)(x-e_2)\,,\\
    P&=(x-e_1)(x-e_2)\,,\qquad  Y_j=y-{\mu_j(\eta_1)}{(x-e_2)}-
    {\mu_j(\eta_2)}{(x-e_1)}\,.
\end{align}
Here 
\begin{equation}
 x=\frac12\wp'(q)\,,\quad y=\wp(q)p\,,\qquad \{y,x\}=3(x-e_1)(x-e_2)\,,
\end{equation}
Writing $Q,P$ in homogeneous coordinates $(x:y:w)$ on $\mathbb P^2$, we get
\begin{align}
    Q&=Y_2Y_1Y_0+{\alpha_2}(x-e_1w)(x-e_2w)Y_0+{2\alpha_3x}(x-e_1w)(x-e_2w)\,,\\
    P&=w(x-e_1w)(x-e_2w)\,,\qquad  Y_j=y-{\mu_j(\eta_1)}{(x-e_2w)}-
    {\mu_j(\eta_2)}{(x-e_1w)}\,.
\end{align}
The cubic $Q=0$ intersects the line $x-e_1w=0$ at the points $\mu_j(\eta_1)(e_1-e_2)$, and $\mu_j(\eta_1)(e_2-e_1)$ for the line $x-e_2w=0$. To find the intersection with the line $w=0$, we set $x=1$, $w=0$ and get
\begin{equation}
    (y-\mmu_2)(y-\mmu_1)(y-\mmu_0)+\alpha_2(y-\mmu_0)+2\alpha_3=0\,. 
\end{equation}
Using the relation \eqref{aic} (and setting $q=-1$), we see that this factorizes as
\begin{equation}
    (y+\mu_2(0))(y+\mu_1(0))(y+\mu_0(0))=0\,. 
\end{equation}
Hence, the pencil $Q-zP=0$ passes through points $(1:-\mu_j(0):0)$.
Therefore, we recognize this as a pencil of cubics from Sec.~\ref{pencils3}. Finally, the polynomial $Q$ can be rearranged as
\begin{align*}
Q & =y^3+Q_2y+Q_3\,,\\
   Q_2 & =  a_2 \left(x-e_1 w\right) \left(x-e_2 w\right)+b_2(e_1-e_2)w \left(x-e_2 w\right)+c_2(e_2-e_1)w \left(x-e_1 w\right), \\
   Q_3 & = 
    a_3 \left(x-e_1 w\right) \left(x-e_2 w\right)^2-b_3 \left(e_1-e_2\right){}^2 w^2 \left(x-e_2 w\right) - c_3 \left(e_2-e_1\right)^2  w^2 \left(x-e_1 w\right)\,.
\end{align*}
The 6 parameters $a_2, b_2, c_2, a_3, b_3, c_3$ are symmetric combinations of linear masses. Indeed, by intersecting this cubic with the three lines, we find that
\begin{equation}\notag
a_i=\sigma_i(\mu_0(0), \mu_1(0), \mu_2(0))\,,\quad
  b_i=\sigma_i(\mu_0(\eta_1), \mu_1(\eta_1), \mu_2(\eta_1))\,,\quad
  c_i=\sigma_i(\mu_0(\eta_2), \mu_1(\eta_2), \mu_2(\eta_2))\,.   
\end{equation}

\paragraph{Case $m=4$:}
The classical limit of \eqref{pqc4} is $Q-zP=0$, with
\begin{align}
Q&=Y_3Y_2Y_1Y_0+{\alpha_2}{(x-e_1)(x-e_2)}Y_1Y_0\notag\\\notag
+&{2\alpha_3}{(x-e_1)(x-e_2)^2}Y_0+
    {2\alpha_4(3x-2e_1-e_2)}{(x-e_1)(x-e_2)^2}\,,\\\notag
    P&=(x-e_1)(x-e_2)^2\,,\qquad Y_j=y-\mu_j(\omega_3)(x-e_2)-2\mu_j(\omega_{1,2}){(x-e_1)}\,.
\end{align}
Here 
\begin{equation}
 x=\wp^2(q)\,,\quad  y=\frac12\wp'(q)p\,,\qquad \{y,x\}=4(x-e_1)(x-e_2)\,.
\end{equation}
We easily confirm that the quartic $Q=0$ intersects $\ell_1: x=e_1$ at points 
\begin{equation}
p_j\,:\  (x,y)=(e_1, \mu_j(\omega_3)(e_1-e_2))\,,\quad j=0,1,2,3\,,   
\end{equation}
while the intersection with $\ell_2: x=e_2$ consists of two points of multiplicity two, 
\begin{equation}
q_j\,:\ (x,y)=(e_2, 2\mu_j(\omega_{1,2})(e_2-e_1))\,,\quad j=0,1\,,    
\end{equation}
due to repetitions among $\mu_j(\omega_{1,2})$. Working in homogeneous coordinates $(x:y:w)$, we also confirm that the intersection of $Q=0$ with $\ell_0: w=0$ consists of $4$ points, 
\begin{equation}
    r_j=(1:-\mu_j(0):0)\,,\quad j=0,1,2,3\,.
\end{equation}
It remains to check that each of the two points $(x_0,y_0)=(e_2, 2\mu_j(\omega_{1,2})(e_2-e_1))$ is an ordinary double point of the quartic $Q=0$. For this, a simple check confirms that each summand in $Q$ belongs to the ideal generated by $(x-x_0)^2$, $(x-x_0)(y-y_0)$, and $(y-y_0)^2$.     

Finally, here is the quartic $Q=0$ in a symmetric homogeneous form:
\begin{align*}
   Q &= y^4+Q_2y^2+Q_3y+Q_4\,,\\
   Q_2 & = a_2 \left(x-e_1 w\right) \left(x-e_2 w\right)+\left(e_1-e_2\right) w \left(b_2 \left(x-e_2 w\right)+2 c_2 \left(x-e_1 w\right)\right), \\
   Q_3 & = \left(x-e_2 w\right)^2 \left(a_3 \left(x-e_1 w\right)-b_3 \left(e_1-e_2\right) w\right), \\
   Q_4 & = \left(e_1-e_2\right)^2 w^2 \left(c_2 \left(a_2-b_2+c_2\right)+b_4\right) \left(x-e_1 w\right) \left(x-e_2 w\right)+a_4 \left(x-e_1 w\right)^2 \left(x-e_2 w\right){}^2 \\
   & +b_4 \left(e_1-e_2\right)^3 w^3 \left(x-e_2 w\right)+c_2^2 \left(e_2-e_1\right)^3 w^3 \left(x-e_1 w\right)\,.
\end{align*}
Checking how it intersects the lines $\ell_{0,1,2}$, we find that the $7$ parameters $a_{i}$, $b_{i}$, $c_{2}$ are symmetric combinations of the linear masses:
\begin{equation}\notag
a_i=\sigma_i(\mu_0(0), \dots, \mu_3(0))\,,\quad
  b_i=\sigma_i(\mu_0(\omega_3), \dots, \mu_3(\omega_3))\,,\quad
  c_2=4\mu_0(\omega_{1,2})\mu_1(\omega_{1,2})\,.   
\end{equation}

\paragraph{Case $m=6$:}
The classical limit of \eqref{pqc6} is $Q-zP=0$, with
\begin{align}
       Q & = Y_5 Y_4 Y_3 Y_2 Y_1 Y_0
     +   {\alpha_2}{ \left(x-e_1\right) \left(x-e_2\right)} Y_3 Y_2 Y_1 Y_0\notag \\
     + &  {2\alpha_3}{ \left(x-e_1\right) \left(x-e_2\right)^2} Y_2 Y_1 Y_0 \notag 
     +   {6\alpha_4}{ \left(x-e_1\right)^2 \left(x-e_2\right)^2} Y_1 Y_0 \notag \\\notag
     + &  {24\alpha_5}{ \left(x-e_1\right)^2 \left(x-e_2\right)^3} Y_0 
     +   24\alpha_6(5 x  -3 e_1-2 e_2){ \left(x-e_1\right)^2 \left(x-e_2\right)^3}\,,\\\notag
     P & = (x-e_1)^2(x-e_2)^3\,,\qquad Y_j = y-2{\mu}_j^{(\eta)}{(x-e_2)} -3{\mu}_j^{(\omega)}{(x-e_1)}\,.
\end{align}
Here 
\begin{equation}
 x=\wp^3(q)\,,\quad y=\frac12\wp(q)\wp'(q)p\,,\qquad \{y,x\}=6(x-e_1)(x-e_2)\,.
\end{equation}
The sextic $Q=0$ intersects $\ell_1: x=e_1$ at three points of multiplicity two, 
\begin{equation}
p_j\,:\ (x,y)=(e_1, 2\mu_j^{(\eta)}(e_1-e_2))\,,\quad j=0, 1, 2\,,   
\end{equation}
while the intersection with $\ell_2: x=e_2$ consists of two points of multiplicity three, 
\begin{equation}
q_j\,:\ (x,y)=(e_2, 3\mu_j^{(\omega)}(e_2-e_1))\,,\quad j=0,1\,,    
\end{equation}
Working in homogeneous coordinates $(x:y:w)$, we also confirm that the intersection of $Q=0$ with $\ell_0: w=0$ consists of $6$ points, 
\begin{equation}
    r_j=(1:-\mu_j(0):0)\,,\quad j=0,\dots,5\,.
\end{equation}
It remains to check that each of $p_j$ is an ordinary double point, and each of $q_j$ is an ordinary triple point. This follows from the formula for $Q$. For example, taking $q_j=(x_0, y_0)$, it is easy to confirm that each summand in $Q$ belongs to the ideal generated by $(x-x_0)^3$, $(x-x_0)^2(y-y_0)$, $(x-x_0)(y-y_0)^2$, and $(y-y_0)^3$.  

Finally, here is the sextic $Q=0$ in a symmetric homogeneous form:
\begin{align*}
   Q &= y^6+Q_2y^4+Q_3y^3+Q_4y^2+Q_5y+Q_6\,,\\
   Q_2 & = a_2 \left(x-e_1 w\right) \left(x-e_2 w\right)+2b_2\left(e_1-e_2\right) w \left(x-e_2 w\right)-3 c_2(e_1-e_2)w\left(x-e_1 w\right), \\
   Q_3 &= a_3 \left(x-e_1 w\right)\left(x-e_2 w\right)^2 -2 b_3 \left(e_1-e_2\right) w\left(x-e_2 w\right)^2, \\
   Q_4 & = a_2 b_2 \left(e_1-e_2\right) w \left(x-e_1 w\right)\left(x-e_2 w\right)^2-2 a_2 c_2 \left(e_1-e_2\right) w \left(x-e_1 w\right)^2\left(x-e_2 w\right) \\
   & +a_4 \left(x-e_1 w\right)^2 \left(x-e_2 w\right)^2+b_2 c_2 \left(e_1-e_2\right) w \left(x-4 e_1 w+3 e_2 w\right) \left(x-e_1 w\right) \left(x-e_2 w\right) \\
   & +b_2^2 \left(e_1-e_2\right)^2 w^2 \left(x-e_2 w\right){}^2 +3 c_2^2 \left(e_1-e_2\right)^2 w^2 \left(x-e_1 w\right)^2, \\
   Q_5 & = \{ a_3 b_2 \left(e_1-e_2\right) w \left(x-e_1 w\right) \left(x-e_2 w\right) -a_2 b_3 \left(e_1-e_2\right) w \left(x-e_1 w\right) \left(x-e_2 w\right) \\
   & - a_3 c_2 \left(e_1-e_2\right) w \left(x-e_1 w\right)^2+
   a_5 \left(x-e_1 w\right)^2 \left(x-e_2 w\right) \\
   & +b_3 c_2 \left(e_1-e_2\right) w \left(x+2 e_1 w-3 e_2 w\right) \left(x-e_1 w\right) 
   -2 b_2 b_3 \left(e_1-e_2\right)^2 w^2 \left(x-e_2 w\right)\}\left(x-e_2 w\right)^2,  \\
   Q_6 & =  \left(e_1-e_2\right)^4 \left(c_2^2 \left(a_2-2 b_2+2 c_2\right)+b_3^2\right)w^4 \left(x-e_1 w\right) \left(x-e_2 w\right) \\
   & -\left(e_1-e_2\right)^3 \left(c_2^2 \left(a_2-2 b_2+c_2\right)+a_3 b_3-2 b_3^2\right) w^3 \left(x-e_1 w\right) \left(x-e_2 w\right)^2 \\
   & +\left(e_1-e_2\right)^2 \left(c_2 \left(a_4-\left(a_2-b_2\right) \left(b_2-c_2\right)\right)-a_3 b_3+b_3^2\right)  w^2\left(x-e_1 w\right)^2 \left(x-e_2 w\right)^2 \\
   & +a_6 \left(x-e_1 w\right)^2 \left(x-e_2 w\right){}^4+b_3^2 \left(e_1-e_2\right)^5 w^5 \left(x-e_2 w\right) \\
   & - c_2^3 \left(e_1-e_2\right)^5  w^5  \left(x-e_1 w\right)\,. 
\end{align*}
The relations between the $8$ parameters $a_i, b_{2,3}, c_2$ and the linear masses are easily determined by considering how $Q=0$ intersects  the lines $\ell_0, \ell_1, \ell_2$. We find that 
\begin{align}\notag
a_i=\sigma_i(\mu_0(0), \dots, \mu_5(0))\,,\quad
  b_i=\sigma_i(2\mu_0^{(\eta)}, 2\mu_1^{(\eta)}, 2\mu_2^{(\eta)})\,,\quad
  c_2=9\mu_0^{(\omega)}\mu_1^{(\omega)}\,.   
\end{align}

\begin{remark} As explained above, each level set $h(p,q)=z$ of the hamiltonian is isomorphic, as an elliptic curve, to the dual curve $h^\vee(p,q)=z$, see \eqref{ms}. This will not be literally true for the curves as presented in this section. The reason is that 
our choice of $z$ as a parameter for elliptic pencils is not canonical, as it can be
shifted by a constant (depending on the mass parameters). The same is true for the formulas for the hamiltonians $\hat{h}$ and $h$ in Sec. \ref{ellipticform}: they are defined up to an additive constant depending on the mass parameters. Due to this ambiguity, the isomorphism of curves in \eqref{ms} becomes an isomorphism between a curve $Q-zP=0$ in the pencil and the dual curve curve $Q^\vee-z^\vee P=0$, where for $z^\vee=z+\mathrm{const}$ for a suitable constant depending on the mass parameters.  
\end{remark}

\section{Algebraic integrability}

According to \cite{EFMV11ecm}, Theorem 7.1, the differential operator $\hh$ is algebraically integrable for certain values of the parameters $\mu_j(x_i)$. This implies the existence of a family of {\it double-Bloch} eigenfunctions 
\begin{equation}
    \hh\psi=z\psi\,,\qquad \psi=\psi(q,z),\quad z\in\C\,, 
\end{equation}
such that 
\begin{equation}\label{m12}
    \psi(q+2\omega_i, z)=M_i \psi(q,z)\,,\qquad i=1,2\,,
\end{equation}
for some $M_{1}, M_{2}\in\C^*$. There is a procedure for calculating such solutions based on a version of Hermite--Bethe ansatz. This is explained below. For convenience, we put $\hbar =1$.

Recall that if $x_i\in\E$ is a fixed point, with stabiliser $\Z_{m_i}\subset \Z_m$, then $\mu_j(x_i)=\mu_{j+ m_i}(x_i)$. Assume, following \cite{EFMV11ecm}, Section 7, that 
\begin{equation}
   \mu_j(x_i)\in m_i\Z\,. 
\end{equation}
This implies that the arithmetic progressions $\{j+\mu_j(x_i)+m_i\Z_{\ge 0}\}$, $j=0,\dots, m_i-1$ do not overlap. 
Write $-n_i$ for the smallest number among $j+\mu_j(x_i)$. Recall that $\mu_j(x_i)$ sum to zero, therefore $n_i\ge 0$. Now consider the set   
\begin{equation}
S_i:=\Z_{\ge 0}\setminus \cup_{j=0}^{m_i-1}\{n_i+j+\mu_j(x_i)+m_i\Z_{\ge 0}\}\,.    
\end{equation}
Our assumptions imply that $S_i$ is a finite set and $|S_i|=n_i$. 

Denote $n:=\sum_{x_i}n_i$, and consider the following function $\phi(q)$ depending on the parameters $t_1, \dots, t_n, \lambda\in\C$:
\begin{equation}\label{ba1}
\phi(q)=e^{\lambda q}\prod_{r=1}^{n}\sigma(q-t_{r})\,.    
\end{equation}
Let us impose $n$ relations on these parameters ($n_i$ relations for each fixed point $x_i$) as follows:
\begin{equation}\label{ba2}
    \left[\frac{d^{s}}{dq^{s}}\left(\phi(q)e^{-n\eta(x_i)q}\right)\right]_{q=x_i}=0\quad\text{for all}\ s\in S_i\,.
\end{equation}
We will refer to \eqref{ba1}--\eqref{ba2} as the Bethe ansatz equations.

\begin{prop}
For generic solutions $t_1, \dots, t_n, \lambda$ of the Bethe ansatz equations, the function 
\begin{equation}\label{psi}
\psi=e^{\lambda q}\,\frac{\prod_{r=1}^{n}\sigma(q-t_{r})}{\prod_{x_i}\sigma(q-x_i)^{n_i}}    
\end{equation}
is an eigenfunction of the hamiltonian $\hh$, 
\begin{equation}\label{eig}
\hh\psi=z\psi,     
\end{equation}
with some $z\in\C$ determined by $t_1,\dots, t_n, \lambda$. The functions $\psi_l=\psi(\omega^lq)$ with $0\le l\le m-1$ span the solution space to the eigenvalue problem \eqref{eig} for generic $z$.
\end{prop}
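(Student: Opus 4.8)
The plan is to prove that the Bethe ansatz equations \eqref{ba2} are exactly the conditions forcing $\psi$ to be a genuine (logarithm-free) local solution at each fixed point, so that $(\hh-z)\psi$ becomes a globally holomorphic double-Bloch function with nontrivial multipliers and therefore vanishes identically. The first stage is to verify the double-Bloch property \eqref{m12} directly from the quasi-periodicity \eqref{sigma} of $\sigma$. Since the fixed-point set $\{x_k\}$ and the integers $n_k$ are $\Z_m$-invariant, applying $q\mapsto q+2\omega_i$ to \eqref{psi} multiplies numerator and denominator by purely exponential factors, and the $q$-dependent parts cancel because the numerator and denominator have the same degree $\sum_r 1=n=\sum_k n_k$. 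One finds
\[
M_i=\exp\Big(2\lambda\omega_i+2\eta(\omega_i)\big(\textstyle\sum_k n_k x_k-\sum_r t_r\big)\Big),\qquad i=1,2,
\]
and, importantly, this step uses neither \eqref{ba1} nor \eqref{ba2}: the Bethe data enter only through $\lambda$ and $\sum_r t_r$, which for generic solutions make the pair $(M_1,M_2)$ generic, hence the associated flat degree-zero line bundle $L_M$ nontrivial.

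The second stage reduces \eqref{eig} to a purely local condition. Because $\hh$ has elliptic coefficients and is $\Z_m$-invariant, $\chi:=(\hh-z)\psi$ is again double-Bloch with the same multipliers $(M_1,M_2)$, and its only possible poles are at the fixed points $x_i$. The goal is to choose $z$ and impose \eqref{ba2} so that $\chi$ is holomorphic at every $x_i$; then $\chi$ is a global holomorphic section of the nontrivial bundle $L_M$, and by Riemann--Roch on the elliptic curve such a bundle has no nonzero section, so $\chi\equiv0$ and \eqref{eig} holds. The eigenvalue $z$ is then determined as the single accessory parameter the construction leaves free, fixed in terms of $t_1,\dots,t_n,\lambda$. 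For the local analysis I would use that, by \eqref{pr}, the indicial polynomial of $\hh$ at $x_i$ is $c(\lambda)=\prod_j(\lambda-j-\mu_j(x_i))$, so the local exponents are \eqref{locex} and the smallest is $-n_i$. From \eqref{psi} the leading exponent of $\psi$ at $x_i$ is exactly $-n_i$, and after the twist by $e^{-n\eta(x_i)q}$ the Laurent coefficients of the numerator occupy precisely the positions complementary to $S_i$. Writing $\psi$ near $x_i$ as the minimal-exponent Frobenius solution plus a correction, the $n_i=|S_i|$ equations \eqref{ba2} are precisely the vanishing of that correction at the ``gap'' positions indexed by $S_i$; the remaining coefficients are then pinned down by the recursion, so $\psi$ agrees with an honest local solution to high enough order that $\chi$ has no pole at $x_i$. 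The absence of logarithms in the resonant cases $m=4,6$ (where exponents differ by multiples of $m_i$) is supplied by the algebraic integrability of $\hh$ under the hypothesis $\mu_j(x_i)\in m_i\Z$, i.e. Theorem 7.1 of \cite{EFMV11ecm}.

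For the spanning statement, the $\Z_m$-invariance \eqref{ace} gives that $\hh$ commutes with $q\mapsto\omega q$, so each $\psi_l(q)=\psi(\omega^l q)$ again solves \eqref{eig} with the same $z$. Repeating the computation of the first stage shows that $\psi_l$ is nothing but $\psi$ with $(\lambda,\{t_r\})$ replaced by $(\omega^l\lambda,\{\omega^{-l}t_r\})$, so its multiplier pair $(M_1^{(l)},M_2^{(l)})$ is, for generic Bethe data, distinct from those of the other $\psi_{l'}$. A standard Vandermonde argument using the two independent period shifts shows that double-Bloch functions with pairwise distinct multiplier pairs are linearly independent; hence $\psi_0,\dots,\psi_{m-1}$ are $m$ independent solutions of the order-$m$ equation \eqref{eig} and span its solution space for generic $z$.

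The main obstacle will be the resonant Frobenius bookkeeping in the local analysis: matching the count $|S_i|=n_i$ of Bethe conditions to the exact number of pole coefficients of $\chi$ that are \emph{not} already annihilated by the exponent structure, and confirming that no logarithmic obstruction survives when exponents coincide modulo $m_i$. A secondary but essential point is to remain on the generic locus, where the leading coefficients of $\psi$ do not degenerate and the multiplier pairs $(M_i^{(l)})$ stay nontrivial and pairwise distinct, so that both the vanishing argument and the Wronskian/Vandermonde independence go through.
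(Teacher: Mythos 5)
Your overall strategy is genuinely different from the paper's. You argue \emph{forwards}: verify the double-Bloch property, show $\chi=(\hh-z)\psi$ is double-Bloch with the same nontrivial multipliers, and then try to force $\chi$ to be holomorphic so that it must vanish. The paper argues \emph{backwards}: it first invokes algebraic integrability (Theorem 7.1 of \cite{EFMV11ecm} together with \cite{BEG97, CEO03, Chalykh08}) to assert that for generic $z$ the solution space of $\hh\psi=z\psi$ is already spanned by double-Bloch eigenfunctions, and then shows that any such eigenfunction is constrained by its local exponents (Proposition \ref{semi}) to lie in $U_i=\bigoplus_j (q-x_i)^{j+\mu_j(x_i)}\C[[(q-x_i)^{m_i}]]$ at each fixed point, hence must have the form \eqref{psi} with the Bethe equations \eqref{ba2} holding. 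Your first stage (the multipliers) and your identification of \eqref{ba2} with the condition $\psi\in U_i$ both match steps that appear in the paper, and your Vandermonde argument for the independence of the $\psi_l$ is a fine substitute for the paper's genericity remark.

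The gap is in your second and third stages, and it is the crux of the whole statement. The conditions \eqref{ba2} are equivalent to $\psi\in U_i$, i.e.\ they control only \emph{which exponents} occur in the Laurent expansion of $\psi$ at $x_i$ (the correct congruence classes mod $m_i$, each starting at the right indicial root $j+\mu_j(x_i)$). They say nothing about the higher Laurent coefficients within each class: those are dictated by the global formula \eqref{psi}, not by the Frobenius recursion of the ODE, so your claim that ``the remaining coefficients are then pinned down by the recursion'' is unjustified. Concretely, writing $X=q-x_i$, one has $\hh\colon X^{\lambda}\C[[X^{m_i}]]\to X^{\lambda-m}\C[[X^{m_i}]]$, and while the top coefficient of $\hh\psi$ in each class is killed because $j+\mu_j(x_i)$ is an indicial root, the coefficients at the intermediate negative exponents $j+\mu_j-m+km_i$ (for $1\le k< m/m_i$, where $c(j+\mu_j+km_i)=0$ but the subleading parts of $\hh$ still contribute) and the remaining negative exponents inside $U_i$ are not forced to vanish by $\psi\in U_i$. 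A count shows the number of such residual principal-part coefficients, summed over the fixed points, generally exceeds the single free parameter $z$, so your Riemann--Roch vanishing argument cannot be closed without establishing additional identities among the Laurent coefficients of \eqref{psi}. The paper supplies exactly this missing input not by computation but by the a priori \emph{existence} of double-Bloch eigenfunctions (algebraic integrability), which it then matches against the Bethe family; if you want to keep the direct route, you would need to either prove those cancellations by an explicit residue computation (as in the classical Hermite ansatz for Lam\'e) or import the existence statement the way the paper does.
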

\begin{proof}
First, by \cite[Theorem 7.1]{EFMV11ecm} and the general results of \cite{BEG97, CEO03} (see Corollary 5.7 and Theorem 5.9 in \cite{Chalykh08}), for generic $z\in\C$ the solution space to \eqref{eig} is spanned by double-Bloch eigenfunctions.

Let now $t_1, \dots, t_n, \lambda$ be a solution to the Bethe ansatz equation, and $\psi$ be the corresponding function \eqref{psi}. Pick one of the fixed points $x_i$, so that $x_i\equiv \omega^lx_i\  (\mathrm{mod}\, \Gamma)$ for some $l$. It can be checked that the function
\begin{equation}
    w(q):=e^{-n\eta(x_i)q}\prod_{x_i}\sigma(q-x_i)^{n_i}
\end{equation}
transforms under the symmetry about $q=x_i$ as follows: 
\begin{equation}
    w(q)\mapsto \omega^{ln_i}w(q)\quad\text{when}\quad q\mapsto (1-\omega^l)x_i+\omega^lq\,.
\end{equation}
This implies that the formal series for $w(q)$ at $q=x_i$ lies in $(q-x_i)^{n_i}\C[[(q-x_i)^{m_i}]]$.
Together with \eqref{ba2}, this means that the formal Laurent series for $\psi$ at $q=x_i$ belongs to the space
\begin{equation}
    U_i:=\bigoplus_{j=0}^{m_i-1} (q-x_i)^{j+\mu_j(x_i)}[[(q-x_i)^{m_i}]]\,.
\end{equation}
On the other hand, our previous analysis (Proposition \ref{semi}) showed that all solutions to \eqref{eig} should belong to $U_i$. 
It follows that double-Bloch eigenfunctions must be of the form \eqref{psi} and the Bethe ansatz equations must hold. Moreover, if $\psi(q)$ is one such eigenfunction and $\lambda$ is generic, then the functions $\psi_l=\psi(\omega^lq)$ will be linearly independent eigenfunctions for the same $z$. 
This proves that the Bethe ansatz method will produce a basis of eigenfunctions. 
\end{proof}

\begin{remark}
    For the Heun equation ($m=2$), Bethe ansatz in this form appeared in \cite{C07ba}, see also \cite{Takemura2003} for a different form. 
\end{remark}

\bibliographystyle{myJHEP}
\bibliography{mybibs}

\end{document}